\theoremstyle{plain}
\newtheorem{lyxalgorithm}{\protect\algorithmname}
\theoremstyle{plain}
\newtheorem{lem}{\protect\lemmaname}
\theoremstyle{remark}
\newtheorem{rem}{\protect\remarkname}
\theoremstyle{definition}
 \newtheorem{example}{\protect\examplename}
\theoremstyle{plain}
\newtheorem{prop}{\protect\propositionname}
\theoremstyle{plain}
\newtheorem{thm}{\protect\theoremname}
\theoremstyle{plain}
\newtheorem{cor}{\protect\corollaryname}
\newenvironment{lyxlist}[1]
	{\begin{list}{}
		{\settowidth{\labelwidth}{#1}
		 \setlength{\leftmargin}{\labelwidth}
		 \addtolength{\leftmargin}{\labelsep}
		 }}
	{\end{list}}
\providecommand{\algorithmname}{Algorithm}
\providecommand{\corollaryname}{Corollary}
\providecommand{\examplename}{Example}
\providecommand{\lemmaname}{Lemma}
\providecommand{\propositionname}{Proposition}
\providecommand{\remarkname}{Remark}
\providecommand{\theoremname}{Theorem}
\renewcommand{\hat}{\widehat}
\title{The boosted HP filter is more general \\than you might think\thanks{Phillips acknowledges research support from a Kelly Fellowship at the
	University of Auckland. 
Shi thanks the National Natural Science Foundation of China (NSFC) for financial support under the grant number 72133002.
	\small{Ziwei Mei: \texttt{zwmei@link.cuhk.edu.hk}; Peter C.B.~Phillips: \texttt{peter.phillips@yale.edu}; Zhentao Shi (corresponding author): \texttt{zhentao.shi@cuhk.edu.hk}. 934 Esther Lee Building, the Chinese University of Hong Kong, Shatin, New Territories, Hong Kong SAR, China. }}}
\author{Ziwei Mei$^{a}$, Peter C. B. Phillips$^{b,c,d}$ and Zhentao Shi$^{a}$ \vspace{2mm}
	\\ \small $^{a}$The Chinese University of Hong Kong 
	\\ \small $^{b}$University of Auckland, $^{c}$Yale University, 
	\\ \small $^{d}$Singapore Management University 
	}
\date{}
\newcommand*{\addFileDependency}[1]{
\typeout{(#1)}
%
%
\@addtofilelist{#1}
%
\IfFileExists{#1}{}{\typeout{No file #1.}}
}\makeatother
\newcommand*{\myexternaldocument}[1]{%
\externaldocument{#1}%
\addFileDependency{#1.tex}%
\addFileDependency{#1.aux}%
}
\begin{document}


\maketitle

\thispagestyle{empty} 

\begin{center}
\textbf{\large{}Abstract}{\large\par}
\par\end{center}

\small{
The global financial crisis and Covid recession have renewed discussion concerning trend-cycle discovery in macroeconomic data, and boosting has recently upgraded the popular HP filter to a modern machine learning device suited to data-rich and rapid computational environments. This paper extends boosting's trend determination capability to higher order integrated processes and time series with roots that are local to unity. The theory is established by understanding the asymptotic effect of boosting on a simple exponential function. Given a universe of time series in FRED databases that exhibit various dynamic patterns, boosting timely captures downturns at crises and recoveries that follow.
}\bigskip

\noindent \textit{Key words}: Boosting, Business cycle, Machine learning,
Macroeconomics, Recession

\vspace{1mm}
\noindent \textit{JEL codes}: C22 Time Series Models, C55 Large Data Sets, C43 Index Numbers and Aggregation

\newpage{}

\normalsize
\onehalfspacing

\doublespacing
\section{Introduction\label{sec:Introduction}}

Understanding long-term trends and short-term business cycles in economic activity 
was a primary pursuit of the foundational researchers of the econometrics profession especially during the years of the Great Depression where these matters figured prominently and prompted the development of new econometric approaches \citep{ragnar1933propagation,tinbergen1939business}.
At that time with very few exceptions data were scarce. By contrast vast datasets are now available to researchers covering multiple decades of quarterly and monthly time series observations across a wide range of economic variables. Long trajectories of data provide rich information about many aspects of economic activity and wellbeing, including the impact of technical progress on growth and the course and consequences of intermittent slowdowns and recessions. Analysis of the information carried in such time series provides useful indicators of the passage and present state of economic activity, which in turn helps to shape assessments of policymakers, regulators, corporate executives, and consumers in guiding decision making. The 2008 global financial
crisis (GFC) and its aftermath and most recently the global economic impact of the Covid-19 pandemic are timely reminders that the work launched by \cite{ragnar1933propagation} and \cite{tinbergen1939business} remains an ongoing mission for the econometrics community. 

Modern econometric approaches often focus on decomposing time series observations of a variable $x_{t}$
into additive components that represent long run trending behavior, $f_t$, and cyclical activity, $c_t$,  as
\begin{equation}
x_{t}=f_{t}+c_{t}.\label{eq:y_decomp}
\end{equation}
The trend embodies the long run general course or tendency in the data and the cycle reflects periodic fluctuations in economic activity in which businesses, labor markets and consumer behavior alternately expand and contract. Trends in many economic aggregates like a nation's real GDP are primarily determined by the impact of production technologies, the size and quality of labor forces, the accumulation of physical and human capital, and entrepreneurship, whereas business cycles are affected by shorter term internal and external influences including contractionary or expansionary fiscal, monetary, and political policies, combined with the dynamic propagation of these forces within an economy. As such, trend and cycle may be considered latent elements in the data which are to be identified and estimated by econometric methods through decomposition or direct modeling. 

Since its introduction the Hodrick-Prescott (HP) filter \citep{hodrick1997postwar} has become a convenient and highly popular off-the-shelf choice for trend-cycle decomposition. 
Given an observed time series $x=(x_{1},x_{2},\dots,x_{n})^{\prime}$,
the HP filter finds $\widehat{f}^{{\rm HP}}=(\widehat{f}_{1}^{{\rm HP}},\widehat{f}_{2}^{{\rm HP}},\ldots,\widehat{f}_{n}^{{\rm HP}})^{\prime}$
via a penalized least squares criterion 
\begin{equation}
\text{\ensuremath{\widehat{f}}}^{{\rm HP}}=\arg\min\limits _{(f_{t})}\left\{ \sum_{t=1}^{n}(x_{t}-f_{t})^{2}+\lambda\sum_{t=3}^{n}(\Delta^{2}f_{t})^{2}\right\} ,\label{eq:hpfilter}
\end{equation}
where the second difference $\Delta^{2}f_{t}=\Delta f_{t}-\Delta f_{t-1}=f_{t}-2f_{t-1}+f_{t-2}$ of the trend component provides a measure of fluctuations, whose degree is controlled through the tuning parameter $\lambda \in (0,\infty)$ that governs the extent of the penalty in the second component of the extremum criterion \eqref{eq:hpfilter}. Penalization plays a key role in determining the outcome of the filter, with larger $\lambda$ imposing a greater penalty on roughness thereby favoring smoothness in the outcome $\widehat{f}^{{\rm HP}}$. 

Formally developed a century ago by \cite{whittaker1922new} in pathbreaking work on penalized estimation, 
ideas for `graduating' data have a long history in actuarial science and subsequent work in statistics and engineering, 
as reviewed in \citet[hereafter, PJ]{phillips2021business}.
When $\lambda \to \infty$
the solution for \eqref{eq:hpfilter} satisfies $\Delta^{2}f_{t}=0$, giving a linear trend function $f_t=a+bt$ for some constant coefficients $a$ and $b$. The conventional choice for quarterly data is $\lambda=1600$, suggested by \citet{hodrick1997postwar} based on empirical experimentation with macroeconomic time series. Corresponding settings of $\lambda$ for empirical work with monthly and annual data were given in \citet{ravn2002adjusting}.

A poignant newspaper column posted by \citet{krugman2012} concerning the importance of long run trend identification during the GFC raised substantial interest on trend determination and the HP filter, spurring an influx of opinions, theory investigations, novel proposals, and empirical evidence. \citet{de2016econometrics}, \citet{cornea2017explicit},
and \citet{sakarya2017property} explored the HP filter's finite-sample
algebraic properties. Arguing that the HP filter is a nonparametric procedure in which tuning parameters are typically sample size dependent to achieve consistent estimation, \citetalias{phillips2021business}
analyzed its asymptotic features via operator calculus, showing that the common
setting $\lambda=1600$ is too large to completely remove stochastic trends in time series of the length usually encountered in empirical work. Instead, the HP filter smooths those trends into paths that are asymptotically differentiable forms of Brownian motion. Several revised or modified versions based on the HP filter have recently emerged. For example, \citet{yamada2020smoothing} and \citet{yamada2021trend}
generalize the HP filter to overcome data imperfections, and with
the advent of Covid-19 \citet{lee2021sparse} use a sparse filter to identify the turning points in the inflection rates of the virus.

Retaining the squared penalization scheme of the original
HP filter, \citet[hereafter, PS]{phillips2021boosting} proposed a \emph{boosted}
HP filter (bHP hereafter) designed to upgrade the procedure to a machine
learning device that uses the data more intensively to improve its properties and performance. The bHP filter is a repeated application of the
HP filter to the residual extracted in the last iteration (see Algorithm \ref{alg:Boosted-HP-filter} below) in which the number of iterations, $m$, controls the intensity of reusage. In practice, \citetalias{phillips2021boosting} suggested monitoring a stopping criteria
to terminate the iteration in a data-driven manner (see Algorithm
\ref{alg:Automated-bHP-BIC}), which makes bHP automated in application, as
envisaged in \citet{phillips2005automated}. In three empirical
examples, \citetalias{phillips2021boosting} fed 127 individual time series of various lengths and
trending patterns through the bHP machine. The trend and cycle estimates returned after a few iterations in this algorithm largely confirmed and refined the existing findings in the literature for these series. Further simulation experiments and empirical applications (e.g., \cite{hall2021does,hall2024selecting}) have been conducted and the robustness continues to hold. 

These outcomes are indicative that, as a machine learning device which is agnostic  about the data generating mechanism, the bHP filter satisfactorily accommodates trend processes in applications that are more general than unit root processes.
But there remain gaps in the present asymptotic theory that involve persistent time series beyond unit root stochastic trends, such as near unit root processes, for which it would be useful in empirical work to have a theory foundation for the use of boosting in consistent trend determination. This paper offers an affirmative answer with analytic, simulation, and empirical support for this extension.

In doing so we provide a preparatory result (Lemma \ref{lem:m-operator}
in Section \ref{subsec:Operator-Form}) that characterizes the shrinkage effect of
the operational form of the HP filter. 
We keep using
the tuning parameter formula $\lambda=\mu n^{4}$ for some constant $\mu>0$,
which is an expansion rate extensively studied in \citetalias{phillips2021business}. For sample sizes
that are typical in quarterly economic data, this expansion rate approximates
well the actual form of the filter in practical work with the common
choice $\lambda=1600$ for the HP filter. In view of the two-sided nature of the filter, the HP operator is a function
of lead and lag operators. For a class of complex numbers $a\in\mathbb{C}$
such that $a^{4}$ is a non-negative real number, Lemma \ref{lem:m-operator}
shows that the HP residual operator shrinks the complex
exponential $\mathrm{e}^{ax}$ towards zero by the factor $\mu a^{4}/ (\mu a^{4}+1 )\in[0,1)$,
which is a pseudo-differential operator extension of the elementary property $D_x^{m}\mathrm{e}^{ax}=a^{m}\mathrm{e}^{ax}$ of the usual differential operator $D_x=\mathrm{d}/\mathrm{d}x$.  Repeating the operation $m$ times gives the power factor $(\mu a^{4}/ (\mu a^{4}+1 ) )^{m}\mathrm{e}^{ax}$, which tends to zero as $m \to \infty$. 

This lemma enables a unified development of asymptotics of the HP and bHP filters for a variety of nonstationary trend processes that include unit root $I(1)$ time series, higher order integrated $I(q)$ processes with integer $q\geq2$, and local-to-unity (LUR) processes, which
are among the most widely used models for nonstationary data. In
each case, boosting enables consistent estimation
of the trend whereas single implementation of the HP filter is inconsistent, producing
a smoothed version of the original trend process. Upon standardization these trends all have asymptotic stochastic process representations in terms of convergent series of trigonometric functions with complex exponential forms that are amenable to analysis by operator methods and thereby deliver the respective asymptotic forms of the filter operation. Similar methods apply  to general deterministic trend functions with convergent Fourier series representations. Taken together, the asymptotic results span a wide range of trend models commonly used in econometric practice.      

The present paper continues the line of research by \citetalias{phillips2021business} and \citetalias{phillips2021boosting} 
in understanding of the capabilities of the (boosted) HP filter.
While these two papers worked with unit root stochastic trends exclusively, 
this paper considerably expands the types of stochastic and deterministic trend mechanisms.
The unified analytical framework sheds new theoretical insight about the shrinkage effect of the HP operator,
which exposes the source of limitations of the HP filter and strengths of the repeated fitting.
It complements ongoing work that considers the use of boosting methods for time series with long range dependence \citep{biswas2022longrange}. In addition, the present work provides further numerical evidence of the robustness and versatility of bHP in simulations and various real data applications. Overall, the findings reveal analytic properties and empirical performance that support the boosted filter as a useful machine learning method for extracting low frequency components of macroeconomic time series, thereby contributing to the comprehension of trend phenomena.

In machine learning terminology the HP and bHP filters are \emph{unsupervised
learning} methods which seek to extract key features of the data but do not
use regressors to fit the dependent variables or `labels'. In
contrast to this methodology, \citet{hamilton2018you} firmly advocated that the HP filter
should be replaced in empirical work by an autoregression $x_{t+h} = \beta_0 + \beta_1 x_{t} + \beta_2 x_{t-1} + \dots + \beta_p x_{t-p+1} + v_{t+h}$
with specific order $(h,p)=(8,4)$ recommended for quarterly data. The trend component at the period $t+h$ is estimated by the fitted dependent variable $\hat x_{t+h}$. We call it Hamilton's regression filter (HRF). Regressions of this type fall into the category of simple \emph{supervised learning} methods in which
a few lagged observations are trained to predict a future target. Unlike
nonparametric approaches such as HP and bHP where tuning parameters
are unavoidable and play a central role in consistent estimation, parametric autoregressions typically bypass tuning parameters, although users still need to decide on the number of lags \citep{quast2022reliable}.
\cite{hamilton2018you}'s paper has stirred considerable discussion and debate. The issues raised bear directly on empirical econometric practice and they affect economic policy analysis in fundamental ways concerning the manner in which observed economic indicators can be interpreted as indicative of long term trend behavior as distinct from cyclical fluctuations, the very considerations that motivated \citet{krugman2012}'s public post. \citet{schuler2021cyclical} pointed out that HRF fails to reproduce the standard chronology of US business cycles and `emphasizes cycles that exceed the duration of regular business cycles (i.e., longer than 8 years), and completely mutes certain short-term fluctuations.'

Among the commentary, \citet{knightboosted} provided a frequency domain analysis of both the HP and bHP filters, showing that the latter has some additional `free pass' effects at low frequencies over that of the HP filter, giving it improved recovery properties for trends with frequencies in an interval around the origin. 
Recent empirical works by \citet{drehmann2018you}, \citet{hall2021does} and \citet{jonsson2020cyclical} compared the
HP filter and HRF in real data examples and the accumulated empirical
evidence from these studies favors the former.
In subsequent work \citet{hall2024selecting} provided transfer function analysis and studied the empirical performance of the HP and bHP filters with various stopping rules, 
recommending a `twicing' version of bHP with a second iteration (2HP) for trend and growth cycle analysis with New Zealand quarterly macroeconomic data.  
\citet{lu2023boost} extended the discussion of the properties of the HP filter and its boosted version in line with their finite sample weights and linked the necessity of boosting to the patterns observed in the 
macro time series. 

\citetalias{phillips2021boosting} provided a detailed response to \citet{hamilton2018you}'s critiques of the HP filter and advocated the use of an automated bHP filter.
The present paper adds further simulation and real data testimony concerning these two different approaches. 
We apply the HP-based methods and HRF to the open-source FRED-QD
database \citep{mccracken2020fred} and FRED-MD database \citep{mccracken2016fred}.
Our empirical findings provide a fairly consistent message about the performance of the HP and bHP filters in relation to the regression filter.
Both a small-scale
test drive of the procedures on US real GDP and a large-scale deployment to the entire databases are conducted. In brief, the HP filter is able
to capture most historical 
business cycles although the fitted trends tend to
be overly flattened or smoothed. The bHP filter is more adaptive to various generating mechanisms and
patterns in the trend. On the other hand, HRFs typically seek to reduce observed series to martingale difference residuals and to offer useful mechanisms for prediction and impulse response analysis, but do not provide a method for identifying and estimating general trend and cycle processes, particularly those for which irregularity is a prominent feature,
thereby failing to capture key trend and cycle elements of a macroeconomy.

bHP is an $L_{2}$-boosting method applied to time series, drawing on ideas of boosting by repetitive use in the computer science literature \citep{freund1995desicion} with statistical roots that go back to \citet{tukey1977exploratory}'s introduction of the \emph{twicing} technique. The key notion is to gradually `boost' an
ensemble of many weak learners
into a more powerful fitting machine.
Boosting has evolved into a very 
successful machine learning method, with many variants proposed along
the way, for instance \emph{adaboost} \citep{friedman2000additive},
componentwise boosting \citep{buehlmann2006boosting}, and $L_{2}$-boosting
\citep{buhlmann2003boosting}.

In addition to the above, some useful theoretical developments and applications of boosting have occurred in the econometric literature \citep{bai2009boosting,shi2016econometric,yousuf2021boosting,kueck2023estimation}.
\citetalias{phillips2021boosting} was the first paper to employ boosting methods in nonstationary time series, much of its asymptotic analysis being built on the foundation of functional limit theory \citep{phillips1986understanding,phillips1987time,phillips1987towards}. Careful interpretation of regression findings with nonstationary time series and panels has relied on orthonormal series representations of limiting stochastic processes \citep{phillips1998new}. These methods offer an understanding of nonstationarity in terms of coordinate basis functions and have, in turn, proved useful in analyzing the asymptotic properties of both the HP and bHP filters. Recent years have also witnessed the proliferation in econometric work of machine learning methods, including cross-sectional studies \citep{belloni2014inference,caner2018asymptotically,farrell2021deep,athey2021matrix},
time series \citep{shi2022L,masini2021counterfactual,babii2022machine},
and panel modeling \citep{su2016identifying,moon2018nuclear}, to name a few.

The rest of the paper is organized as follows. 
Section \ref{sec:Filter} introduces the HP and bHP filters and their operational
forms,  and then presents the basic asymptotic approximation.
It is followed in Section \ref{sec:theo-Applications} by three
applications of the theory to unit root, higher order integrated, and LUR time series on
the smoothing properties of the usual HP filter and the consistency of bHP.
The theory is supported by simulation exercises
in Section \ref{sec:Simulations}. 
Section \ref{sec:emp} demonstration with HP, bHP and
HRF methods applied to US quarterly GDP, and then  
provides a `big data' empirical implementation of the methods that explores FRED-QD or -MD database to study trends and cycles in the economy. Proofs and additional simulations are given in the Appendix as an online supplementary document.

\section{Smoothing effects of the HP filter \label{sec:Filter}}

The squared penalty in the HP filter (\ref{eq:hpfilter}) leads to the explicit solution
\[
\hat{f}^{\text{HP}}=Sx,
\]
where $S=(I_{n}+\lambda D_{n}D_{n}^{\prime})^{-1}$, $D_{n}^{\prime}$
is the rectangular $(n-2)\times n$ matrix with the second differencing
vector $d=(1,-2,1)^{\prime}$ along the leading tri-diagonals and
$I_{n}$ is the $n\times n$ identity matrix. The fitted cyclical
component (residual) is $\hat{c}^{\text{HP}}=(\hat{c}_{1}^{\text{HP}},\hat{c}_{2}^{\text{HP}},\ldots,\hat{c}_{n}^{\text{HP}})'=x-\hat{f}^{\text{HP}}$.
These compact expressions are useful in studying the effects of boosting.
\begin{lyxalgorithm}
\label{alg:Boosted-HP-filter} The boosted HP filter (in $m$ iterations)
\begin{description}
\item [{Step1}] Specify the smoothing parameter $\lambda>0$ and the number
of iterations $m\geq2$. Set $\widehat{c}^{(1)}=\widehat{c}^{\mathrm{HP}}$
and $j=1$. 
\item [{Step2}] If $j<m$, set $\widehat{c}^{(j+1)}=(I_{n}-S)\widehat{c}^{\left(j\right)}$ and then update $j = j+1$. 
\item [{Step3}] Repeat Step2 until $j=m$. Save the fitted cyclical component
$\widehat{c}^{\left(m\right)}$ and the estimated trend $\widehat{f}^{\left(m\right)}=y-\widehat{c}^{\left(m\right)}.$
\end{description}
\end{lyxalgorithm}
\noindent The recursive form of the above algorithm is simply  
$\widehat{c}^{\left(m\right)} =(I_{n}-S)^{m}x$ and $\widehat{f}^{\left(m\right)}  =\left[I_{n}-(I_{n}-S)^{m}\right]x$. 
The bHP filter is easy to implement. For instance, if the original HP filter is called by the
\texttt{hpfilter} function in the \texttt{R} package \texttt{mFilter},
then bHP with $m$ iterations can be carried out with a single line
of code in the following \texttt{tidyverse} style
\begin{quotation}
\texttt{purrr::rerun(m, y \%<>\% mFilter::hpfilter(lambda, \textquotedbl lambda\textquotedbl )
\%>\% .\$cycle)}
\end{quotation}
where \texttt{y} is the observed time series, \texttt{m} controls the number
of iterations, and \texttt{lambda} sets the tuning parameter.

\subsection{Shrinkage Effects\label{subsec:Operator-Form}}

With the tuning parameter $\lambda=\mu n^{4}$ for some constant
$\mu>0$ and lag operator $\mathbb{L}$, the asymptotic approximation (as $n\to\infty$)
of the HP filtered trend has the operator form
\[
G_{\lambda}=\dfrac{1}{\lambda\mathbb{L}^{-2}(1-\mathbb{L})^{4}+1}=\dfrac{1}{\mu\mathbb{L}^{-2}\left[n(1-\mathbb{L})\right]^{4}+1},
\]
and the corresponding residual operator is
$
1-G_{\lambda}=\dfrac{\mu\mathbb{L}^{-2}\left[n(1-\mathbb{L})\right]^{4}}{\mu\mathbb{L}^{-2}\left[n(1-\mathbb{L})\right]^{4}+1}.
$
The estimated trends of the HP and bHP ($m$ iterations) filters then have the asymptotic forms $\widehat{f}_{t}^{\text{HP}}=G_{\lambda}x_{t}$ and $\widehat{f}_{t}^{(m)}=\left[1-\left(1-G_{\lambda}\right)^{m}\right]x_{t}$, respectively.

The following lemma reveals the effect of the operator when it
is applied to a class of complex exponential functions of the form $\exp\left(at/n\right)$.
\begin{lem}
\label{lem:m-operator} 
Let $\lambda=\mu n^{4}$ for some real constant $\mu>0$, and $t\in\mathbb{N}:=\left\{ 1,2,\ldots\right\}$. For a real number $z>1$, define a set $\mathcal{A}(z):=\left\{ a\in\mathbb{C}:a^{4}\in[0,(\log z)^{2}]\right\}$.
\begin{enumerate}
\item For any fixed $m\in\mathbb{N}$, as $n\to\infty$
we have
\begin{equation}
\sup_{1\leq t\leq n,\,a\in\mathcal{A}(n)}\left|\left[\left(1-G_{\lambda}\right)^{m}-\left(\frac{\mu a^{4}}{\mu a^{4}+1}\right)^{m}\right]\mathrm{e}^{at/n}\right|\to 0. \label{eq:mconv_e}
\end{equation}
\item As $m,n\to\infty$ we have 
\begin{equation*}
\sup_{1\leq t\leq n,\,a\in\mathcal{A}(n\wedge m)}\left|\left(1-G_{\lambda}\right)^{m}\mathrm{e}^{at/n}\right|\to 0,
\end{equation*}
where $n\wedge m:=\min\left\{ n,m\right\} $. 
\end{enumerate}
\end{lem}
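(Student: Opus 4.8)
The plan is to diagonalise the operator against the exponentials. For each $a$ with $a^{4}\ge 0$, the bi-infinite sequence $(\mathrm{e}^{at/n})_{t\in\mathbb{Z}}$ is an eigenfunction of the lag operator, $\mathbb{L}\,\mathrm{e}^{at/n}=\mathrm{e}^{-a/n}\mathrm{e}^{at/n}$, so it is also an eigenfunction of $G_{\lambda}$ and of $(1-G_{\lambda})^{m}$. Writing $z_{n}(a):=\mathrm{e}^{2a/n}[\,n(1-\mathrm{e}^{-a/n})\,]^{4}$ and $\phi_{n}(a):=\mu z_{n}(a)/(\mu z_{n}(a)+1)$, one has, as soon as $\mu z_{n}(a)+1\neq 0$,
\[
(1-G_{\lambda})^{m}\mathrm{e}^{at/n}=\phi_{n}(a)^{m}\,\mathrm{e}^{at/n}.
\]
Thus both statements reduce to scalar estimates: controlling $\phi_{n}(a)$ against $\phi(a):=\mu a^{4}/(\mu a^{4}+1)$, and controlling $|\mathrm{e}^{at/n}|$, uniformly over $t\le n$ and over $a$ in the stated set. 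The only facts I would use about $\mathcal{A}(x)$ are that $a\in\mathcal{A}(x)$ forces $a^{4}$ real with $0\le a^{4}\le(\log x)^{2}$, hence $|a|\le(\log x)^{1/2}$ and $\sup_{t\le n}|\mathrm{e}^{at/n}|\le\mathrm{e}^{|a|}\le\mathrm{e}^{(\log x)^{1/2}}$, a quantity that diverges more slowly than any positive power of $x$.

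Next I would establish the core approximation $z_{n}(a)\to a^{4}$, uniformly on $\mathcal{A}(n)$, with a quantitative rate. Using $|\,1-\mathrm{e}^{-w}-w\,|\le\tfrac12|w|^{2}\mathrm{e}^{|w|}$ at $w=a/n$ together with $|a|\le(\log n)^{1/2}$ gives $n(1-\mathrm{e}^{-a/n})=a+O(a^{2}/n)$ and $\mathrm{e}^{2a/n}=1+O(a/n)$; expanding the fourth power and collecting terms yields $\sup_{a\in\mathcal{A}(n)}|z_{n}(a)-a^{4}|\le C(\log n)^{5/2}/n$ (the precise exponent on $\log n$ is immaterial). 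Since $a^{4}\ge 0$, this makes $\mu z_{n}(a)+1$ bounded away from zero for $n$ large, so $\phi_{n}(a)$ is well defined and
\[
\phi_{n}(a)-\phi(a)=\frac{\mu\bigl(z_{n}(a)-a^{4}\bigr)}{(\mu z_{n}(a)+1)(\mu a^{4}+1)},
\]
whence $\sup_{a\in\mathcal{A}(n)}|\phi_{n}(a)-\phi(a)|\le C'(\log n)^{5/2}/n\to 0$. Splitting $|\mu z_{n}(a)+1|^{2}$ into its real and imaginary parts also yields the quantitative contraction bound $|\phi_{n}(a)|\le 1-c/(\mu a^{4}+1)$ for all $a\in\mathcal{A}(n)$ and $n$ large. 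This is the delicate step, because $\sup_{a\in\mathcal{A}(n)}|\phi(a)|\uparrow 1$ as $n\to\infty$, so one must keep the gap $1-|\phi_{n}(a)|$ explicitly tied to $a^{4}$ rather than merely positive.

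For part 1, with $m$ fixed, I would telescope: since $|\phi_{n}(a)|<1$ and $|\phi(a)|<1$,
\[
\bigl|\phi_{n}(a)^{m}-\phi(a)^{m}\bigr|\le m\,|\phi_{n}(a)-\phi(a)|\le m\,C'(\log n)^{5/2}/n,
\]
and multiplying by $|\mathrm{e}^{at/n}|\le\mathrm{e}^{(\log n)^{1/2}}$ gives a bound of order $m(\log n)^{5/2}\mathrm{e}^{(\log n)^{1/2}}/n$, which tends to $0$ because $n^{-1}$ outpaces $\mathrm{e}^{(\log n)^{1/2}}$ times any power of $\log n$; this proves \eqref{eq:mconv_e}. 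For part 2, from $|\phi_{n}(a)|\le 1-c/(\mu a^{4}+1)$ we get $|\phi_{n}(a)|^{m}\le\exp\!\bigl(-cm/(\mu a^{4}+1)\bigr)$, and here the restriction $a\in\mathcal{A}(n\wedge m)$ is used crucially, since it simultaneously bounds $a^{4}\le(\log m)^{2}$ and $|a|\le(\log m)^{1/2}$, so that
\[
\bigl|(1-G_{\lambda})^{m}\mathrm{e}^{at/n}\bigr|\le\exp\!\Bigl(-\frac{cm}{\mu(\log m)^{2}+1}+(\log m)^{1/2}\Bigr)\longrightarrow 0\quad(m\to\infty),
\]
uniformly in $n$ and in $t\le n$.

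I expect the main obstacle to be precisely this last balancing, and it is also the reason the sets $\mathcal{A}(\cdot)$ are defined as they are: the exponentials $\mathrm{e}^{at/n}$ are \emph{not} uniformly bounded, so the $m$-fold contraction must be shown to outrun their growth. Restricting $a$ to $\mathcal{A}(n\wedge m)$ puts both the growth rate $\mathrm{e}^{(\log m)^{1/2}}$ and the effective bandwidth $\mu a^{4}+1\le\mu(\log m)^{2}+1$ under the control of $m$ alone, so the contraction wins no matter how $n$ and $m$ jointly diverge; in part 1, by contrast, $m$ is fixed and the $n^{-1}$ decay of the approximation error easily dominates the sub-polynomial factor $\mathrm{e}^{(\log n)^{1/2}}$. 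The remaining pieces --- the Taylor remainder estimate, the algebra of the fourth power, and the lower bound on $|\mu z_{n}(a)+1|$ --- are routine once $|a|\le(\log n)^{1/2}$ is in hand.
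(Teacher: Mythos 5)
Your proposal is correct and follows essentially the same route as the paper: you diagonalise $G_{\lambda}$ against the exponentials to get the eigenvalue $\mu z_{n}(a)/(\mu z_{n}(a)+1)$ (the paper's $\mu a^{4}(1+\delta)/(\mu a^{4}(1+\delta)+1)$, derived there via the Laplace-transform operator calculus of PJ), establish the same $O(|a|^{5}/n)$ approximation $z_{n}(a)\to a^{4}$, and balance the contraction factor against the $\mathrm{e}^{|a|}\le\mathrm{e}^{\sqrt{\log(n\wedge m)}}$ growth exactly as the paper does in both parts. The only cosmetic differences are your telescoping bound in place of the paper's binomial expansion in part (a) and a direct modulus bound on the eigenvalue in place of the paper's triangle-inequality decomposition $\frac{\mu a^{4}}{1+\mu a^{4}}+|\zeta|$ in part (b).
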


Lemma \ref{lem:m-operator} (a) shows that
the operator $\left(1-G_{\lambda}\right)$ works as if $\mathrm{e}^{at/n}$
is multiplied by a real factor $\frac{\mu a^{4}}{\mu a^{4}+1}\in[0,1)$
at each operation,
because $a^{4} >0$ for any $a\in \mathcal{A}(n)$. 
Part (b) extends this approximation to include large $m$, where the factor
$\left(\frac{\mu a^{4}}{\mu a^{4}+1}\right)^{m}$ vanishes as $m \to \infty$.

\medskip

\begin{rem}\label{rem:m-operator} 
The set $\mathcal{A}(z)$ is a theoretical
artifact needed in the proofs; it is irrelevant to the data applications. 
The logarithmic rate of expansion $\log z$ in its definition is devised to control the complexity of this set relative to the sample size to make uniform convergence achievable.
For a finite $m$ in Part (a), the upper bound is
$|a^{2}| \leq \log n$.
When $m$ passes to infinity in Part (b),
the effect of the iterations of the operator is restricted into $|a^{2}| \leq \log\left(n\wedge m\right)$.
\end{rem}

Lemma \ref{lem:m-operator} shows the asymptotic effect of the operator $\left(1-G_{\lambda}\right)^m$ on a simple exponential function. \citetalias{phillips2021business} used the exponential function as an intermediate
step in studying the effect of the HP operator involving $\left(1-G_{\lambda}\right)$.
In the numerator of the residual operator $(1-G_{\lambda})$,
asymptotically when $n\to\infty$ 
the scaled differencing operator $\left[n\left(1-\mathbb{L}\right)\right]$
acts like a differential operator
and the lag operator $\mathbb{L}^{-1}$ acts like the
identity, so the effect of these operations applied to $\mathrm{e}^{at/n}$ is straightforward. The HP operator involves these elementary operators in the nonlinear operator
 $[\mu\mathbb{L}^{-2}\left[n(1-\mathbb{L})\right]^{4}+1]^{-1}$
and \citetalias{phillips2021business} showed how this complex operator may be analyzed as a pseudo-differential operator. Using this machinery \citetalias{phillips2021boosting} studied repeated
applications of the operator to the trigonometric functions in the Karhunen-Lo\`eve (KL) representation of Brownian motion --- see (\ref{eq:KLI1}) below. This approach is used in developing the asymptotic analysis in the next section.

\section{Applications}\label{sec:theo-Applications}

This section illustrates the use of Lemma \ref{lem:m-operator}
for trends that involve unit roots, higher order integrated time series and local unit root time series.

\subsection{Unit Root Processes}\label{subsec:Case-0:-Unit}

To demonstrate how the bHP filter moderates the residual component in the trend fitting process, we begin with simple unit root time series that was fully analyzed in \citetalias{phillips2021boosting}. The following discussion is heuristic to reveal the manner in which the moderation operates. Consider an observed time
series $x_{t}$ generated as an $I(1)$ stochastic trend from a unit root process $f_{t}=f_{t-1}+u_{t}$ with initial value $f_0=o_p(\sqrt{n})$ plus a stationary cyclical component $c_t$. 
The scale-normalized series satisfies the functional law \citep{phillips1992asymptotics} 
\begin{equation}
X_{n}(\cdot):=n^{-1/2}x_{t=\lfloor n\cdot\rfloor}\rightsquigarrow B(\cdot),\label{convergenceI1}
\end{equation}
where ``$\rightsquigarrow$'' denotes weak convergence in the relevant probability space, $B(\cdot)=BM(\omega^{2})$ is a Brownian motion with (long run)
variance $\omega^{2}$, and $\lfloor\cdot\rfloor$ is the integer
floor function. The KL representation of this Brownian motion over
the interval {[}0,1{]} is
\begin{equation}
B(r)=\sqrt{2}\sum\limits _{k=1}^{\infty}\dfrac{\sin\left[(k-\frac{1}{2})\pi r\right]}{(k-\frac{1}{2})\pi}\xi_{k}=\sum\limits _{k=1}^{\infty}\sqrt{\lambda_{k}}\varphi_{k}(r)\xi_{k},\label{eq:KLI1}
\end{equation}
where $\xi_{k}\sim i.i.d.\ N(0,\omega^{2})$ are the random coefficients,
$\lambda_{k}=1/[(k-\frac{1}{2})\pi]^{2}$ are the eigenvalues, and
$\{\varphi_{k}(r)=\sqrt{2}\sin[(k-\frac{1}{2})\pi r]=\sqrt{2}\sin(r/\sqrt{\lambda_{k}})\}_{k=1}^{\infty}$
is an orthonormal system of corresponding eigenfunctions. The series \eqref{eq:KLI1} converges almost surely and uniformly over $r \in [0,1]$.

When the innovation $u_{t}$ follows a linear process as in
\begin{equation}
u_{t}=C(\mathbb{L})\varepsilon_{t}=\sum_{j=0}^{\infty}c_{j}\varepsilon_{t-j},\ \ \sum_{j=0}^{\infty}j\left|c_{j}\right|<\infty,\quad C(1)\neq0\label{eq:wold}
\end{equation}
with $C(z)=\sum_{j=0}^{\infty}c_{j} z^j$, $\varepsilon_{t}=i.i.d.\left(0,\sigma_{\varepsilon}^{2}\right)$
and $E\left(\left|\varepsilon_{t}\right|^{p}\right)<\infty$ for some
$p>4$, we construct an expanded probability space with a Brownian motion $B(\cdot)$ for which uniform convergence holds almost surely \citep[Lemma 3.1]{phillips2007unit}, viz., 
\begin{equation}
\sup_{0\leq t\leq n}\left|\frac{x_{t}}{\sqrt{n}}-B\left(\frac{t}{n}\right)\right|=o_{a.s.}\left(\frac{1}{n^{1/2-1/p}}\right).\label{uniformI}
\end{equation}
In this space the convergence (\ref{convergenceI1}) takes the strong form 
\begin{equation*}
n^{-1/2}x_{\lfloor nr\rfloor}-B(r)=o_{a.s.}(1).
\end{equation*}
In what follows and unless otherwise stated, we assume that we are
working in this expanded probability space. In the original space
the results translate, as usual, into weak convergence mirroring (\ref{convergenceI1}). 

Write $\varphi_{k}(t/n)=\sqrt{2}\sin(\frac{t/n}{\sqrt{\lambda_{k}}})=\sqrt{2}\:\textrm{Im}
\left[\exp(\frac{\mathbf{i}t/n}{\sqrt{\lambda_{k}}})\right]$, 
where $\mathbf{i}:=\sqrt{-1}$ is the imaginary unit
and $\textrm{Im}[\cdot]$ gives the imaginary part of the argument. 
When the operator $\left(1-G_{\lambda}\right)^{m}$ is applied to
the $k$th term of (\ref{eq:KLI1}) for any fixed $k$, Lemma \ref{lem:m-operator}
gives 
\begin{equation}
\left(1-G_{\lambda}\right)^{m}
\textrm{Im}\left[{\rm e}^{\frac{\mathbf{i}t/n}{\sqrt{\lambda_{k}}}}\right]
\approx\left[\dfrac{\mu}{\mu+\lambda_{k}^{2}}\right]^{m}
\textrm{Im}\left[{\rm e}^{\frac{\mathbf{i}t/n}{\sqrt{\lambda_{k}}}}\right]
=\left[\dfrac{\mu}{\mu+\lambda_{k}^{2}}\right]^{m}\sin\left(\frac{t/n}{\sqrt{\lambda_{k}}}\right), \label{eq:heur1}
\end{equation}
so that when $m$ is large we have
\begin{equation}
\left(1-G_{\lambda}\right)^{m}\sin\left(\frac{t/n}{\sqrt{\lambda_{k}}}\right)\approx\exp\left(-\frac{m\lambda_{k}^{2}}{\mu+\lambda_{k}^{2}}\right)
\sin\left(\frac{t/n}{\sqrt{\lambda_{k}}}\right) \to 0, \label{eq:heurstic}
\end{equation}
as $m$ and $n$ pass to infinity. With careful handling of a finite-term
approximation to the infinite series in (\ref{eq:KLI1}), 
\citetalias{phillips2021boosting} showed that when $\lambda=\mu n^{4}$ the residual
of the bHP filter becomes 
\[
n^{-1/2}\widehat{c}_{\left\lfloor nr\right\rfloor }^{\left(m\right)}\approx\left(1-G_{\lambda}\right)^{m}n^{-1/2}x_{t=\lfloor nr\rfloor}\rightsquigarrow0,
\]
 thereby recovering the consistency of the trend component $n^{-1/2}\widehat{f}_{\left\lfloor nr\right\rfloor }^{\left(m\right)}\rightsquigarrow B(r)$, as there is no regular cycle component in the unit root model.\footnote{Unit root and many other nonstationary models generate stochastic trends. Such trends often produce trajectories with some form of highly irregular cyclical behavior because pure unit root processes, just as Brownian motion, return infinitely often to the origin; but such cycles are certainly not regular dynamic cycles of the type generated by stationary dynamic autoregressive processes or which were, at least in the past, associated with deterministic dynamic systems. Thinking about business cycles has inevitably evolved in recent years, taking into account historical experience where the period, intensity and irregularity of business cycles and recessions are frequently acknowledged and characterized in the popular parlance by which they are commonly distinguished, as discussed in \citetalias{phillips2021boosting}. Stochastic trends of very general forms may provide appropriate generative models that encompass a wide range of time series trajectories, inclusive of such irregular cycles, that are suited to economic series like unemployment, interest rates and inflation that are mildly nonstationary, as well as series like GDP that typically have some near random walk component.}

\subsection{Higher Order Integrated Processes\label{subsec:Case-1:-High}}

Many other types of nonstationary time series besides $I(1)$ processes occur in macroeconomic data. For instance, aggregate measures of the money supply and nominal price series are often well modeled by higher order integrated time series, particularly by $I(2)$ processes \citep{johansen1995stastistical,haldrup1998econometric}.
The KL series of the limiting Brownian motion process involves orthonormal series of sine functions
but it is equally clear from \eqref{eq:heur1} and \eqref{eq:heurstic} that similar shrinking
factors apply to cosine series and more general trigonometric polynomial functions. Such functions figure in series representations of higher order integrated processes.

Suppose the stochastic trend $f_{t}$ follows a higher order integrated process $I(q)$, for some integer $q\in\left\{ 2,3,\ldots\right\}$, of the form 
\begin{equation*}
\left(1-\mathbb{L}\right)^{q}f_{t}=u_{t}, 
\end{equation*}
where  $u_{t}$ is a linear process satisfying (\ref{eq:wold}). The repeated summation form of $f_t$ from initialization at $t=0$ gives $f_t = \sum_{j_q=1}^t \sum_{j_{q-1}=1}^{j_q} \cdots \sum_{j_1=1}^{j_2}u_{j_1} + p_{q-1}(t)$ where $p_{q-1}(t)$ is a polynomial in $t$ of degree $q-1$ with constant coefficients. Standard weak convergence methods lead to the following limit process of the observed time series after rescaling 
\begin{equation}
\dfrac{x_{t=\lfloor n\cdot\rfloor}}{n^{q-0.5}}\rightsquigarrow B_{q}\left(\cdot\right):=\int_{0}^{\cdot}\int_{0}^{s_{q-1}
}\text{\ensuremath{\cdots\int_{0}^{s_{3}}\int_{0}^{s_{2}}B(s_{1})ds_{1}ds_{2}\cdots ds_{q-2}ds_{q-1}}}, \label{eq:convergenceIq}
\end{equation}
as $n \to \infty$. Uniform convergence in (\ref{uniformI}) ensures that in the expanded probability space we have the corresponding result for $x_{t}/n^{q-0.5}$, viz.,
\begin{equation}
\sup_{0\leq t\leq n}\left|\dfrac{x_{t}}{n^{q-0.5}}-B_{q}\left(\frac{t}{n}\right)\right|=o_{a.s.}\left(1\right).\label{uniformIq}
\end{equation}
The orthonormal series representation of $B_{q}(r)$ is obtained by termwise integration in view of the uniform and almost sure convergence of the KL series for Brownian motion in \eqref{eq:KLI1}, giving    
\begin{align}
B_{q}(r) & =\sum\limits _{k=1}^{\infty}\xi_{k}\sqrt{\lambda_{k}}\text{\ensuremath{\int_{0}^{r}}}\int_{0}^{s_{q-1}}\text{\ensuremath{\cdots\int_{0}^{s_{3}}\int_{0}^{s_{2}}\varphi_{k}(s_{1})ds_{1}ds_{2}\cdots ds_{q-2}ds_{q-1}}}\nonumber \\
 & =\text{\ensuremath{\sqrt{2}}}\sum\limits _{k=1}^{\infty}\left\{ \sum_{\ell=1}^{\lfloor q/2\rfloor}(-1)^{\ell-1}\lambda_{k}^{\ell}\dfrac{r^{q-2\ell}}{(q-2\ell)!}+\lambda_{k}^{q/2}\text{Im}\left[\left(-\mathbf{i}\right)^{q-1}\mathrm{e}^{(\mathbf{i}/\sqrt{\lambda_{k}})r}\right]\right\} \xi_{k}.\label{eq:Lq}
\end{align}
The braces in \eqref{eq:Lq} include two terms: $\sum_{\ell=1}^{\lfloor q/2\rfloor}(-1)^{\ell-1}\lambda_{k}^{\ell}r^{q-2\ell}/(q-2\ell)!$
is a $\left(q-2\right)$th order polynomial, and $\lambda_{k}^{q/2}\text{Im}\left[\left(-\mathbf{i}\right)^{q-1}\mathrm{e}^{(\mathbf{i}/\sqrt{\lambda_{k}})r}\right]$
alternates between a sine (for odd $q$) and a cosine
(for even $q$). 
\begin{example}
Setting $q=2$ makes $x_t$ an $I(2)$ process.
Rescaling by $n^{3/2}$ and letting $n \to \infty$ we have 
\begin{gather}
X_{n}(\cdot)=\dfrac{x_{t=\lfloor n\cdot\rfloor}}{n^{3/2}}\rightsquigarrow B_{2}(\cdot)\equiv\int_{0}^{\cdot}B(s)ds, \; \textrm{ with}\label{convergenceI2}\\
B_{2}(r)=\sqrt{2}\sum\limits _{k=1}^{\infty}\dfrac{1-\cos\left[(k-\frac{1}{2})\pi r\right]}{[(k-\frac{1}{2})\pi]^{2}}\xi_{k}=\sum\limits _{k=1}^{\infty}\lambda_{k}[\sqrt{2}-\psi_{k}(r)]\xi_{k},\label{eq:L2}
\end{gather}
 where $\{\psi_{k}(r)=\sqrt{2}\cos\left[(k-\frac{1}{2})\pi r\right]=\sqrt{2}\cos\left[r/\sqrt{\lambda_{k}}\right]\}_{k=1}^{\infty}$
is an orthonormal system of cosine series. The series  \eqref{eq:L2}
of $B_{2}(r)$ converges faster than that of $B(r)$ since the decay rate of the coefficients $\lambda_k$ exceeds that of $\sqrt{\lambda_k}$ as $k \to \infty$. Correspondingly,
$B_{2}(r)$ is a smooth (once differentiable) Gaussian stochastic process in contrast to the Brownian motion $B(r).$ The first derivative of  $B_{2}(r)$ is $B(r)$, and $B_{2}(r)$ has a zero initial value at the origin.
\end{example}
It was long held as conventional wisdom that the HP filter removed up to four unit roots, thereby detrending integrated processes up to the fourth order \citep{king1993low}. 
This claim was disproved by \citetalias{phillips2021business} for $I(1)$ processes under the expansion
rate $\lambda=\mu n^{4}$, which was shown to match common quarterly time series applications in macroeconomics. The next result does the same for $I(2)$ processes, showing that the smoothing property of the HP filter continues to apply in this case as $n\to\infty$, giving a limit representation that is a smoothed version of $B_{2}(r)$ rather than $B_{2}\left(r\right)$ itself. 

\begin{prop}
\label{pop:hpI2} If $x_{t}$ satisfies the functional law (\ref{convergenceI2})
and $\lambda=\mu n^{4}$, then the HP filtered series has the following
limit form in the extended probability space as $n\to\infty$:
\begin{equation}
\dfrac{\hat{f}_{\lfloor nr\rfloor}^{\mathrm{HP}}}{n^{3/2}}\to_{a.s.}\sum_{k=1}^{\infty}\lambda_{k}\left[\sqrt{2}-\dfrac{\lambda_{k}^{2}}{\mu+\lambda_{k}^{2}}\psi_{k}\left(r\right)\right]\xi_{k}=:f^{\mathrm{HP}}(r).\label{eq:HPI2limit}
\end{equation}
\end{prop}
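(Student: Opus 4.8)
The plan is to follow the operator-calculus route of PJ and PS: reduce the finite-sample smoother $\hat f^{\mathrm{HP}}=Sy$ to the action of the asymptotic pseudo-differential operator $G_\lambda$ on the limiting process $B_2$, expand $B_2$ through its Karhunen--Lo\`eve series \eqref{eq:L2}, and apply Lemma \ref{lem:m-operator} termwise. Because the Example shows that the $k$th KL coefficient of $B_2$ decays like $\lambda_k\asymp k^{-2}$ --- faster than the $\sqrt{\lambda_k}\asymp k^{-1}$ rate of the $I(1)$ case treated in PS --- the series-truncation arguments are, if anything, easier here than in the unit-root analysis; the novelty is merely that cosine eigenfunctions together with a constant drift term replace the sine series.

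First I would invoke the extended-space coupling: by \eqref{uniformIq} with $q=2$, $n^{-3/2}y_t=B_2(t/n)+\varepsilon_{n,t}$ with $\sup_{t\le n}|\varepsilon_{n,t}|=o_{a.s.}(1)$, so that $n^{-3/2}\hat f^{\mathrm{HP}}_{\lfloor nr\rfloor}=\big(S(B_2(\cdot/n))\big)_{\lfloor nr\rfloor}+(S\varepsilon_n)_{\lfloor nr\rfloor}$. Using the boundedness properties of the HP smoother established in PJ (the Green's function of $S$ is absolutely summable uniformly in $n$, whence $\|S\varepsilon_n\|_\infty\le C\|\varepsilon_n\|_\infty=o_{a.s.}(1)$) I would discard the coupling error, and then appeal to the operator-calculus approximation of PJ to replace the matrix $S$ acting on the smooth vector $(B_2(t/n))_t$ by the Fourier-multiplier operator $G_\lambda$ applied to $B_2(r)$, up to a uniformly negligible remainder. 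This step is the main obstacle: it is where the finite, two-sided, boundary-affected matrix operator must be legitimately traded for the clean pseudo-differential operator, and the quantitative bounds inherited from PJ on that trade have to be strong enough to survive the subsequent summation over KL modes.

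Next I would split the series \eqref{eq:L2} into its constant part $\sqrt2\big(\sum_k\lambda_k\xi_k\big)$ and its cosine part $-\sum_k\lambda_k\psi_k(r)\xi_k$. Since second differences annihilate constants, $G_\lambda$ --- equivalently $S$, as $S\mathbf 1=\mathbf 1$ exactly --- fixes the constant part. For the cosine part, write $\psi_k(t/n)=\sqrt2\,\mathrm{Re}\,[\mathrm e^{at/n}]$ with $a=\mathbf i/\sqrt{\lambda_k}=\mathbf i(k-\tfrac12)\pi$, so that $a^4=\lambda_k^{-2}\ge 0$ and $a\in\mathcal A(n)$ whenever $k\le K_n$ with $K_n\asymp\sqrt{\log n}\to\infty$. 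Lemma \ref{lem:m-operator}(a) with $m=1$, together with $G_\lambda=1-(1-G_\lambda)$ and linearity over real parts, then yields $G_\lambda\psi_k(t/n)=\tfrac{\lambda_k^2}{\mu+\lambda_k^2}\psi_k(t/n)+o(1)$ uniformly over $t\le n$ and $k\le K_n$.

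Finally I would assemble the pieces with a truncation at level $K_n$. For the finite block $k\le K_n$, summing the termwise approximations weighted by $\lambda_k$ produces $\sum_{k\le K_n}\lambda_k\big[\sqrt2-\tfrac{\lambda_k^2}{\mu+\lambda_k^2}\psi_k(r)\big]\xi_k$ with aggregate error at most $\big(\sum_k\lambda_k\big)\cdot o(1)=o(1)$, using $\sum_k\lambda_k<\infty$. For the tail $k>K_n$, the bound $\tfrac{\lambda_k^2}{\mu+\lambda_k^2}\in[0,1)$ and $\|G_\lambda\|_2\le 1$ show that both the filtered tail and the tail of the claimed limit are dominated by $\sum_{k>K_n}\lambda_k|\xi_k|$, which is $o_{a.s.}(1)$ because $\lambda_k\asymp k^{-2}$ and $\sup_k|\xi_k|$ grows slower than any polynomial in $k$ (Borel--Cantelli). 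Letting $n\to\infty$, hence $K_n\to\infty$, gives $n^{-3/2}\hat f^{\mathrm{HP}}_{\lfloor nr\rfloor}\to_{a.s.}f^{\mathrm{HP}}(r)$ as in \eqref{eq:HPI2limit}; pointwise convergence in $r$ suffices for the statement, though the same estimates deliver uniformity in $r$.
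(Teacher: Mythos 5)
Your argument follows essentially the same route as the paper's proof: the extended-space coupling \eqref{uniformIq}, truncation of the KL series \eqref{eq:L2} at $K_n=\lfloor\pi^{-1}\sqrt{\log n}\rfloor$, termwise application of the $m=1$ case of Lemma \ref{lem:m-operator} (equivalently Corollary \ref{cor R_sin_cos}(a)/Remark \ref{rem:three_HP}) to the cosine modes with $a=\mathbf i/\sqrt{\lambda_k}$, invariance of constants under the filter, and control of the truncation error via the summability of $\lambda_k\asymp k^{-2}$. Your treatment is correct and, if anything, slightly more explicit than the paper about the matrix-to-pseudo-differential-operator substitution and the tail bound, but no new ideas are needed.
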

Under $\lambda=\mu n^{4},$ Proposition \ref{pop:hpI2} shows that
the HP filtered $I(2)$ trend approaches a limiting stochastic process $f^{\mathrm{HP}}(r)$
which deviates from the limiting trend process $B_{2}(r)$ and is
therefore inconsistent for this expansion rate of $\lambda.$ Correspondingly,
the estimated $\widehat{c}_{t}^{\text{HP}}$ of the cycle component
$c_{t}$ has the following limiting functional form 
\begin{equation}
\frac{\hat{c}_{\lfloor nr\rfloor}^{\text{HP}}}{n^{3/2}}
=n^{-3/2} \left( x_{\lfloor nr\rfloor} - \hat{f}_{\lfloor nr\rfloor}^{\text{HP}} \right)
\to_{a.s.}\sum_{k=1}^{\infty}\dfrac{-\mu\lambda_{k}}{\mu+\lambda_{k}^{2}}\psi_{k}\left(r\right)\xi_{k} =:c^{\text{HP}}(r)\label{eq:L2_c}
\end{equation}
upon standardization in the expanded space. This limit function is
a stochastic process that inherits some of the stochastic trend properties
of the limiting process $B_{2}(r).$ It is therefore to be expected
that with a smoothing parameter that approximates $\lambda=\mu n^{4},$
the HP filter will fail to remove all the trend properties
of the $I(2)$ process and the imputed business cycle estimate $\hat{c}_{t}^{\mathrm{HP}}$
will inevitably carry some of these `spurious' characteristics. Note also that at the origin the filtered series limit function is 
$f^{\mathrm{HP}}(0)=\sum_{k=1}^{\infty}\frac{\sqrt{2} \mu \lambda_{k}}{\mu+\lambda_{k}^{2}} \xi_{k}
= -c^{\text{HP}}(0) \not = 0$, 
a random, mean zero, initialization that is different from $B_2(0)=0$.

\begin{rem}
Given the fact $\lambda_{k}\asymp1/k^{2},$\footnote{For any two positive sequences $a_{n}$ and $b_{n}$, we use $a_{n}\asymp b_{n}$
to signify that $C^{-1}b_{n}\leq a_{n}\leq Cb_{n}$ for some positive
constant $C\in(1,\infty)$ as $n$ is sufficiently large. } the coefficients in the series representation (\ref{eq:HPI2limit}) satisfy 
$
\lambda_{k}^{3} / (\mu+\lambda_{k}^{2}) \asymp k^{-6},
$
from which we deduce that the limit process in (\ref{eq:HPI2limit})
is a Gaussian stochastic process that is continuously differentiable
to the fifth order, with fifth derivative 
\[
[f^{\mathrm{HP}}(r)]^{(5)}=\sum_{k=1}^{\infty}\left[\dfrac{\sqrt{\lambda_{k}}}{\mu+\lambda_{k}^{2}}\varphi_{k}\left(r\right)\right]\xi_{k},
\]
which is a non-differentiable Gaussian process for all $\mu>0$ similar to the Brownian motion $B(r)=\sum\limits _{k=1}^{\infty}\sqrt{\lambda_{k}}\varphi_{k}(r)\xi_{k}.$
Thus,  the trend extracted by the HP filter when $\lambda=\mu n^{4}$
is a very smooth function.

\bigskip
\end{rem}
The inconsistency of the HP filter estimate of $B_{2}(r)$ in \eqref{eq:HPI2limit}
is anticipated in view of \citetalias{phillips2021business}'s
earlier findings for HP filtering of an $I(1)$ stochastic trend. The next result shows that boosting restores consistency to the HP filter for  $I(2)$ and higher order integrated time series. For practical applications, the results for $I(2)$ time series are clearly the most relevant.  
\begin{thm}
\label{thmbhpIq} Suppose that $x_{t}$ satisfies the functional law
(\ref{eq:convergenceIq}). Given $\lambda=\mu n^{4}$, the bHP filter
has the following standardized limit theory  
\[
n^{0.5-q}\hat{f}_{\lfloor nr\rfloor}^{(m)}\rightsquigarrow B_{q}(r)
\]
for all positive integers $q\in \mathbb{N}$ as $m,n\to\infty$.
\end{thm}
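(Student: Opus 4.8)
The plan is to mirror the $I(1)$ argument sketched in Section \ref{subsec:Case-0:-Unit} and the $I(2)$ computation in Proposition \ref{pop:hpI2}, but to handle the general integration order $q$ by working directly with the orthonormal series representation \eqref{eq:Lq} of $B_q(r)$. First I would write $n^{0.5-q}y_{\lfloor nr\rfloor} = B_q(r) + o_{a.s.}(1)$ uniformly in $r$ by \eqref{uniformIq}, so that, modulo the asymptotically negligible remainder, applying the bHP trend operator to the data is the same as applying $1-(1-G_\lambda)^m$ to $B_q(r)$. Since $\hat f^{(m)} = y - \hat c^{(m)}$ and $\hat c^{(m)} = (I_n-S)^m y$, it suffices to show the residual operator $(1-G_\lambda)^m$ annihilates $B_q(r)$ in the limit: $n^{0.5-q}\hat c^{(m)}_{\lfloor nr\rfloor}\rightsquigarrow 0$, which then yields $n^{0.5-q}\hat f^{(m)}_{\lfloor nr\rfloor}\rightsquigarrow B_q(r)$ because there is no cycle component in the $I(q)$ model.

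Next I would decompose \eqref{eq:Lq} termwise. For each fixed $k$, the $k$th summand contains a polynomial piece $\sum_{\ell=1}^{\lfloor q/2\rfloor}(-1)^{\ell-1}\lambda_k^{\ell} r^{q-2\ell}/(q-2\ell)!$ and a trigonometric piece $\lambda_k^{q/2}\,\textrm{Im}[(-\mathbf{i})^{q-1}\mathrm{e}^{(\mathbf{i}/\sqrt{\lambda_k})r}]$. The trigonometric piece is exactly of the form $\mathrm{e}^{at/n}$ with $a = \mathbf{i}/\sqrt{\lambda_k}\in\mathcal{A}(n\wedge m)$ once $n\wedge m$ is large enough (since $a^2 = -1/\lambda_k$ so $a^4 = \lambda_k^{-2}$ is a fixed nonnegative real, eventually below $\log(n\wedge m)$), so Lemma \ref{lem:m-operator}(b) gives $(1-G_\lambda)^m\varphi_k(t/n)\to 0$ and $(1-G_\lambda)^m\psi_k(t/n)\to 0$ as $m,n\to\infty$. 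For the polynomial piece, I would note that $(1-G_\lambda)$ has numerator $\mu\mathbb{L}^{-2}[n(1-\mathbb{L})]^4$, and $[n(1-\mathbb{L})]^4$ acts asymptotically like the fourth derivative $D_r^4$; since each polynomial term has degree $q-2\ell \le q-2 < 4$ whenever $q\le 5$, and more generally one can track that repeated application of $(1-G_\lambda)$ either differentiates the polynomial away or, for higher-degree terms, the $\mu\mathbb{L}^{-2}[n(1-\mathbb{L})]^4/(\cdots+1)$ factor still shrinks — the cleanest route is to absorb the polynomial terms into the trigonometric ones by recognizing that the bracketed expression in \eqref{eq:Lq} is precisely the $(q-1)$-fold integral of $\varphi_k$, so $(1-G_\lambda)^m$ applied to it equals the $(q-1)$-fold integral of $(1-G_\lambda)^m\varphi_k$ up to lower-order boundary terms, and the latter vanishes uniformly by Lemma \ref{lem:m-operator}(b). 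I would then control the tail of the series: since the coefficients decay like $\lambda_k^{q/2}\sqrt{\lambda_k}\asymp k^{-(q+1)}$ and the operator norm of $(1-G_\lambda)^m$ is bounded by $1$, a standard $\varepsilon/3$ truncation argument (finite-term approximation to the infinite KL-type series, exactly as in \citetalias{phillips2021boosting}) closes the estimate uniformly in $r$ and transfers the a.s. convergence in the expanded space back to weak convergence in the original space.

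The main obstacle, as in \citetalias{phillips2021boosting}, will be making the termwise-to-series step rigorous: Lemma \ref{lem:m-operator} is a \emph{pointwise-in-$k$} statement, so to sum it over $k$ one needs a uniform bound on $(1-G_\lambda)^m$ acting on the residual tail $\sum_{k>K}$ of \eqref{eq:Lq}, together with a quantitative rate in Lemma \ref{lem:m-operator}(b) that is uniform over $a\in\mathcal{A}(n\wedge m)$ — in particular one must check that the finite-$K$ truncation error, the operator approximation error from Lemma \ref{lem:m-operator}, and the $o_{a.s.}(1)$ remainder in \eqref{uniformIq} can all be driven below any prescribed $\varepsilon$ by first fixing $K$ large, then sending $n,m\to\infty$. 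A secondary technical point is the polynomial drift $p_{q-1}(t)$ from the initialization and the boundary terms generated by repeated integration; these contribute deterministic polynomial trends of degree $\le q-1$, and one must verify the HP residual operator handles them compatibly — for $q\le 5$ this is immediate since the operator kills polynomials of degree $<4$ and leaves a vanishing remainder on the rest after normalization, but for general $q$ the bracketed polynomial terms in \eqref{eq:Lq} must be tracked through the iteration, which is precisely why the integral representation of the $k$th summand is the more robust device.
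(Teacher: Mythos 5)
Your proposal follows essentially the same route as the paper: uniform a.s.\ approximation of $n^{0.5-q}y_{\lfloor nr\rfloor}$ by a truncated version of the series \eqref{eq:Lq}, termwise application of Lemma \ref{lem:m-operator}(b) (via Corollary \ref{cor R_sin_cos}) to the trigonometric components with $a=\mathbf{i}/\sqrt{\lambda_k}$, separate treatment of the polynomial components, and a truncation argument exploiting the decay $\lambda_k^{q/2}=O(k^{-q})$ together with the uniformity of Lemma \ref{lem:m-operator}(b) over $a\in\mathcal{A}(n\wedge m)$ (the paper lets the truncation level $K_{n,m}=\lfloor\pi^{-1}\sqrt{\log(n\wedge m)}\rfloor$ grow slowly rather than fixing $K$ first, but this is cosmetic).

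The one place you go astray is the polynomial piece. Your hesitation about $q>5$ is unfounded, and the ``cleanest route'' you propose --- commuting $(1-G_\lambda)^m$ with the $(q-1)$-fold integral of $\varphi_k$ ``up to lower-order boundary terms'' --- is both unnecessary and the weakest step in your argument: $(1-G_\lambda)^m$ is a discrete lag-operator filter, the interchange with a continuous integral is not a free move, and the boundary terms it generates are exactly the polynomial terms you were trying to avoid. The direct argument works for every $q$: each polynomial term in the braces of \eqref{eq:Lq} has degree $q-2\ell\le q-2$, and $(1-G_\lambda)^m$ \emph{exactly annihilates} any polynomial of degree up to $4m-1$, because its numerator contains $(1-\mathbb{L})^{4(m+j)}$ and the $4(m+j)$-fold difference of a polynomial of lower degree is identically zero (the paper makes this precise by expanding $(1-G_\lambda)^m$ as $\Gamma(m)^{-1}\int_0^\infty s^{m-1}\mathrm{e}^{-s(1+\lambda\mathbb{L}^{-2}(1-\mathbb{L})^4)}[\lambda\mathbb{L}^{-2}(1-\mathbb{L})^4]^m\,ds$ and applying it term by term). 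Since $m\to\infty$, the condition $4m\ge q$ holds eventually, so the polynomial components vanish identically --- not merely asymptotically --- for all $q$, and the same remark disposes of the initialization drift $p_{q-1}(t)$. With that substitution your outline matches the paper's proof.
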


\noindent When $q=1$ this result includes the unit root $I(1)$ case with $B_1=B$. 
The generalization for $q\geq 2$ follows by use of Lemma \ref{lem:m-operator} and the asymptotic representation of repeated applications of the HP operator on the exponential functions.

\subsection{Local Unit Root Processes \label{subsec:Local-Unit-Root}}

While models with unit roots provide a prototypical framework for capturing persistence in time series data, these models have modifications designed to capture a wider class of time series behavior in which the autoregressive roots are not restricted to unity as they are with integrated processes. An important subclass of more general models with near unit roots \citep{phillips2023estimation} is the class of LUR models 
\begin{equation} \label{eq:LURmodel}
(1-\mathrm{e}^{c/n}\mathbb{L})\, f_{t}=u_{t}, \; \:  t=1,2, \dots,n, \; \textrm{with } f_0=o_p(\sqrt{n}), 
\end{equation}
in which the autoregressive root ${\rm e}^{c/n} \approx 1+c/n$ is local to unity 
for some constant $c\in\mathbb{R}$ and large $n$. These models were developed in \citep{phillips1987towards,chan1987} and have been used for power analyses and in empirical research to provide robustness against pure unit root specifications. Time series generated by \eqref{eq:LURmodel} are nonstationary and, after suitable standardization, the observed time series $x_{t}$ converges to a linear diffusion, or Ornstein-Uhlenbeck (OU), process
\begin{equation}
X_{n}(r):=\dfrac{x_{\lfloor nr\rfloor}}{\sqrt{n}}=\int_{0}^{r}\mathrm{e}^{(r-s)c}dX_{n}(s)+O(n^{-1/2})\rightsquigarrow J_{c}(r):=\int_{0}^{r}\mathrm{e}^{(r-s)c}dB(s),\label{eq:convergenceltu}
\end{equation}
as $n \to \infty$, where $B(\cdot)$ is Brownian motion with variance $\omega^2$ as in \eqref{convergenceI1}.
When $c<0$ the limiting OU process $J_{c}$ is stationary and mean-reverting; when $c>0$
the process is explosive \citep{phillips1987towards,phillips2007limit}. 

Using Lemma 3.1 of \citet{phillips2007unit}, the uniform convergence
law (\ref{uniformI}) holds, ensuring that
\begin{equation}
\sup\limits _{0\leq t{}\leq n}\left|X_{n}\left(t/n\right)-J_{c}\left(t/n\right)\right|=o_{a.s.}(1)\label{uniformltu}
\end{equation}
 in the expanded probability space. A convenient series representation of $J_{c}(r)$ is 
\begin{equation}
\begin{aligned}J_{c}(r) & =\sum\limits _{k=1}^{\infty}\dfrac{1}{\lambda_{k}c^{2}+1}\left[\sqrt{\lambda_{k}}\varphi_{k}(r)+\sqrt{2}c\lambda_{k}\mathrm{e}^{cr}-c\lambda_{k}\psi_{k}(r)\right]\xi_{k},\label{eq:JcrKL}
\end{aligned}
\end{equation}
as derived by \eqref{eq:JCK-detail} in the Appendix.
Note that the first component of \eqref{eq:JcrKL} has the form
\begin{equation*}
\sum_{k=1}^{\infty}\frac{\sqrt{\lambda_{k}}}{\lambda_{k}c^{2}+1}\varphi_{k}(r)\xi_{k}
=\sum_{k=1}^{\infty}\left(\sqrt{\lambda_{k}}-\frac{\lambda_{k}^{3/2}c^2}{\lambda_{k}c^{2}+1}\right)\varphi_{k}(r)\xi_{k}	
\end{equation*}	
in which $\sum_{k=1}^{\infty}\sqrt{\lambda_{k}}\varphi_{k}(r)\xi_{k}=B(r)$, corresponding to the leading (non-differentiable) Brownian motion component in the decomposition $J_{c}(r) = B(r)+c\int_{0}^{r}\mathrm{e}^{(r-s)c}B(s)ds$. 
The remaining terms of \eqref{eq:JcrKL} provide a series representation of the smooth component $c\int_{0}^{r}\mathrm{e}^{(r-s)c}B(s)ds$ of $J_{c}(r)$, one of which is the exponential term
$\sqrt{2}c\nu\mathrm{e}^{cr}$ with a random Gaussian coefficient $\nu:=\sum_{k=1}^{\infty}\frac{\lambda_{k}}{1+c^2\lambda_k}\xi_{k} \sim 
N\left(0, \sigma^2_{\nu}\right)$,
where $\sigma^2_{\nu} = \omega^2\sum_{k=1}^{\infty}\left(\frac{\lambda_{k}}{1+c^2\lambda_k}\right)^2$.

\vspace{1.5mm}
The following proposition shows that the limit representation
of the HP filtered LUR time series is inconsistent, yielding a smoothed version of the diffusion $J_{c}(r).$ 
\begin{prop}
\label{pop:hpltu} If $x_{t}$ satisfies the functional law (\ref{eq:convergenceltu})
and $\lambda=\mu n^{4}$, then the HP filtered series has the following
limiting form as $n\to\infty$:
\begin{equation}
\dfrac{\hat{f}_{\lfloor nr\rfloor}^{\text{HP}}}{n^{1/2}}\to_{a.s.}\sum\limits _{k=1}^{\infty}\frac{1}{\lambda_{k}c^{2}+1}\left[\dfrac{\sqrt{2}c\lambda_{k}\mathrm{e}^{cr}}{\mu c^{4}+1}+\frac{\lambda_{k}^{2}}{\mu+\lambda_{k}^{2}}\left(\sqrt{\lambda_{k}}\varphi_{k}(r)-c\lambda_{k}\psi_{k}(r)\right)\right]\xi_{k}=:f_{\mathrm{LUR}}^{\mathrm{HP}}(r).\label{eq:HPLURlimit}
\end{equation}
\end{prop}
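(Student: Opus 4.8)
The argument follows the template behind Proposition \ref{pop:hpI2} for the unit-root and $I(2)$ cases; the one genuinely new feature is the exponential term $\sqrt{2}c\lambda_{k}\mathrm{e}^{cr}$ in the series representation \eqref{eq:JcrKL} of $J_{c}$, which calls for the real-argument instance $a=c$ of Lemma \ref{lem:m-operator} alongside the purely imaginary arguments that govern the trigonometric terms. Working in the extended probability space, the plan is to first invoke the uniform strong approximation \eqref{uniformltu}, $\sup_{t\le n}|n^{-1/2}y_{t}-J_{c}(t/n)|=o_{a.s.}(1)$, together with the fact that the HP smoother $S=(I_{n}+\lambda D_{n}D_{n}')^{-1}$ is a contraction on $\ell^{2}$ (its eigenvalues lie in $(0,1]$) and is bounded uniformly in $n$ as a map on $\ell^{\infty}$, so that $n^{-1/2}\widehat{f}^{\mathrm{HP}}_{\lfloor nr\rfloor}$ differs from $S$ applied to the sampled path $\big(J_{c}(t/n)\big)_{t=1}^{n}$ by $o_{a.s.}(1)$. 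Since $S$ behaves asymptotically like the operator $G_{\lambda}$ of Section \ref{subsec:Operator-Form}, the problem reduces to evaluating $G_{\lambda}$ on the Karhunen--Lo\`eve series \eqref{eq:JcrKL}.

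Next I would truncate \eqref{eq:JcrKL} at a fixed level $K$, writing $J_{c}=J_{c}^{(K)}+R_{K}$; using $\lambda_{k}\asymp k^{-2}$ and Gaussian tail bounds on the $\xi_{k}$, the remainder satisfies $\sup_{0\le r\le1}|R_{K}(r)|\to0$ in probability as $K\to\infty$, uniformly in $n$, and applying the bounded operator $S$ to the sampled tail preserves this smallness, so it suffices to treat the finitely many terms $k=1,\ldots,K$. For each such $k$, writing $\varphi_{k}(t/n)=\sqrt{2}\,\mathrm{Im}[\mathrm{e}^{\mathbf{i}(t/n)/\sqrt{\lambda_{k}}}]$ and $\psi_{k}(t/n)=\sqrt{2}\,\mathrm{Re}[\mathrm{e}^{\mathbf{i}(t/n)/\sqrt{\lambda_{k}}}]$ and using that the real operator $G_{\lambda}$ commutes with $\mathrm{Re}[\cdot]$ and $\mathrm{Im}[\cdot]$, I would apply Lemma \ref{lem:m-operator}(a) with $m=1$ to the exponents $a=\mathbf{i}/\sqrt{\lambda_{k}}=\mathbf{i}(k-\tfrac12)\pi$ (so $a^{4}=\lambda_{k}^{-2}$) and $a=c$, both of which lie in $\mathcal{A}(n)$ for all large $n$ since $k\le K$ and $c$ are fixed. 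The lemma gives, uniformly in $t\le n$, the multipliers $G_{\lambda}\varphi_{k}\to\tfrac{\lambda_{k}^{2}}{\mu+\lambda_{k}^{2}}\varphi_{k}$ and $G_{\lambda}\psi_{k}\to\tfrac{\lambda_{k}^{2}}{\mu+\lambda_{k}^{2}}\psi_{k}$, while $G_{\lambda}$ sends $r\mapsto\mathrm{e}^{cr}$ to $\tfrac{1}{\mu c^{4}+1}\mathrm{e}^{cr}$; the negligible discrepancy between the finite matrix $S$ and its asymptotic form $G_{\lambda}$ on these fixed-frequency components is the same pseudo-differential-operator estimate of \citetalias{phillips2021business} used in the proof of Proposition \ref{pop:hpI2}.

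Assembling the pieces, the $k$-th term of $S$ applied to $\big(J_{c}(t/n)\big)$ converges to $\tfrac{\xi_{k}}{\lambda_{k}c^{2}+1}\big[\tfrac{\lambda_{k}^{2}}{\mu+\lambda_{k}^{2}}\sqrt{\lambda_{k}}\varphi_{k}(r)+\tfrac{\sqrt{2}c\lambda_{k}}{\mu c^{4}+1}\mathrm{e}^{cr}-\tfrac{\lambda_{k}^{2}}{\mu+\lambda_{k}^{2}}c\lambda_{k}\psi_{k}(r)\big]$, which is exactly the $k$-th summand of $f_{\mathrm{LUR}}^{\mathrm{HP}}(r)$ in \eqref{eq:HPLURlimit}. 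Summing $k=1,\ldots,K$, sending $n\to\infty$, and then $K\to\infty$ — the last step justified by the almost sure uniform convergence of the series in \eqref{eq:HPLURlimit}, whose coefficients decay at least as fast as those of \eqref{eq:JcrKL} — delivers $n^{-1/2}\widehat{f}^{\mathrm{HP}}_{\lfloor nr\rfloor}\to_{a.s.}f_{\mathrm{LUR}}^{\mathrm{HP}}(r)$, as claimed.

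The step I expect to be the main obstacle is the interchange of the limits $n\to\infty$ and $K\to\infty$: one must control the HP-filtered Karhunen--Lo\`eve tail \emph{uniformly in $n$} (and in $r$), not merely in mean square, and this is precisely where the uniform operator bound on $S$ and the polynomial decay $\lambda_{k}\asymp k^{-2}$ carry the argument. The exponential term itself is otherwise harmless — $c$ being a fixed constant, $c\in\mathcal{A}(n)$ holds eventually — so it contributes bookkeeping but no new difficulty.
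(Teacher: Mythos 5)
Your proposal is correct and follows essentially the same route as the paper: strong approximation of $n^{-1/2}y_{\lfloor nr\rfloor}$ by the Karhunen--Lo\`eve series \eqref{eq:JcrKL}, termwise application of Lemma \ref{lem:m-operator} with $m=1$ to the building blocks $\mathrm{e}^{ct/n}$, $\varphi_{k}$, $\psi_{k}$ (exactly the content of Remark \ref{rem:three_HP}), and reassembly of the multipliers $\tfrac{1}{\mu c^{4}+1}$ and $\tfrac{\lambda_{k}^{2}}{\mu+\lambda_{k}^{2}}$ into \eqref{eq:HPLURlimit}. The only substantive deviation is organizational: you truncate at a fixed $K$ and iterate the limits $n\to\infty$ then $K\to\infty$ using the uniform $\ell^{\infty}$ bound on $S$, whereas the paper truncates at the $n$-dependent level $K_{n}=\lfloor\pi^{-1}\sqrt{\log n}\rfloor$ so that the uniformity over $a\in\mathcal{A}(n)$ in Lemma \ref{lem:m-operator} covers all retained frequencies at once; both work, and the majorization step you lean on is the same one the paper invokes for the $o_{a.s.}(1)$ error. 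One small correction: to obtain the stated almost-sure convergence you should control the KL tail $\sup_{r}|R_{K}(r)|$ almost surely (which holds, since the series \eqref{eq:JcrKL} converges a.s.\ uniformly), not merely in probability as you wrote.
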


\bigskip

\begin{rem} 
Since $\lambda_{k}\asymp1/k^{2}$, the coefficients associated with
the sine and cosine waves in (\ref{eq:HPLURlimit}) satisfy
\[
\dfrac{\lambda_{k}^{5/2}}{(\lambda_{k}c^{2}+1)(\mu+\lambda_{k}^{2})}\asymp\dfrac{1}{k^{5}},\ \dfrac{c\lambda_{k}^{3}}{(\lambda_{k}c^{2}+1)(\mu+\lambda_{k}^{2})}\asymp\dfrac{1}{k^{6}},
\]
respectively. The real exponential function component $\sqrt{2}c\mathrm{e}^{cr}\sum\limits _{k=1}^{\infty}\frac{\lambda_{k}\xi_k}{(\lambda_{k}c^{2}+1)(\mu c^{4}+1)}$ has a random coefficient and
is infinitely differentiable. The limit process is therefore a Gaussian
stochastic process continuously differentiable to the fourth order
with the  fourth derivative given by 
\begin{align*}
[f_{\mathrm{LUR}}^{HP}(r)]^{(4)}= \sum\limits _{k=1}^{\infty}\frac{1}{\lambda_{k}c^{2}+1}\left[\frac{\sqrt{2}c^5 \lambda_{k}\mathrm{e}^{cr}}{\mu c^{4}+1}
+\dfrac{1}{\mu+\lambda_{k}^{2}}\left(\sqrt{\lambda_{k}}\varphi_{k}(r)-c\lambda_{k}\psi_{k}(r)\right)\right]\xi_{k},
\end{align*}
which is a convergent series.
\end{rem}

\medskip

\begin{rem} \label{rem:Surprise1}
The limits of the HP estimated trend \eqref{eq:HPLURlimit} and the HP estimated cycle 
\begin{equation}
\dfrac{\hat{c}_{\lfloor nr\rfloor}^{\mathrm{HP}}}{n^{1/2}}\to_{a.s.}\sum\limits _{k=1}^{\infty}\frac{1}{\lambda_{k}c^{2}+1}\left[
\frac{\sqrt{2} \mu  c^5 \lambda_{k}\mathrm{e}^{cr}}{\mu c^{4}+1} 
+\frac{\mu}{\mu+\lambda_{k}^{2}}\left(\sqrt{\lambda_{k}}\varphi_{k}(r)-c\lambda_{k}\psi_{k}(r)\right)\right]\xi_{k}
\label{eq:LUR_c}
\end{equation}
have an exponential trend component ${\rm e}^{cr}$, scaled respectively by the positively correlated zero mean Gaussian coefficients $\frac{\sqrt{2}c}{1+\mu c^{4}}\nu$ and $\frac{\sqrt{2}\mu c^{5}}{1+\mu c^{4}}\nu$
whose covariance is
$2\mu \omega^2 \frac{c^{6}}{\left(1+\mu c^{4}\right)^2}
\sigma^2_{\nu}>0$
for $\mu>0$. For $c<0$ the deterministic factor ${\rm{e}}^{cr}$ induces exponential decay in both the limiting HP fitted trend and fitted cycle. When $c>0$ the factor $\mathrm{e}^{cr}$ induces exponential growth in these components. 
When $c=0$ the limit in \eqref{eq:HPLURlimit} corresponds to
\begin{align*}
\dfrac{\hat{f}_{\lfloor nr\rfloor}^{\text{HP}}}{n^{1/2}} \to_{a.s.} \sum\limits _{k=1}^{\infty}\frac{\lambda_{k}^{2}}{\mu+\lambda_{k}^{2}}\sqrt{\lambda_{k}}\varphi_{k}(r)\xi_{k}, \;\;
\dfrac{\hat{c}_{\lfloor nr\rfloor}^{\text{HP}}}{n^{1/2}} \to_{a.s.} \sum\limits _{k=1}^{\infty}\frac{\mu}{\mu+\lambda_{k}^{2}}\sqrt{\lambda_{k}}\varphi_{k}(r)\xi_{k}, 
\end{align*}
giving the same findings as in \citetalias{phillips2021business}.
\end{rem}

\medskip

\begin{rem}
\label{rem:Surprise2} These latter two properties of the limiting residual process in (\ref{eq:LUR_c})
contrast with those in \eqref{eq:L2_c} for the HP fitted residual of $B_{2}(r)$,
where the HP filter removes the polynomial (in this case, intercept) component in the residual, leaving only the trigonometric functions. 
The HP filter's elimination of the intercept of $B_{2}(r)$ in the fitted residual is explained by the fact that the HP filter removes time polynomial functions up to the third degree, thereby including the case of the intercept in the representation of $B_{2}$.
This facility does not include the exponential function
$\mathrm{e}^{cr}=\sum_{j=0}^{\infty}\left(cr\right)^{j}/j!$, which exceeds the capacity 
of the HP filter. Nonetheless, when $c<0$ the exponential decay factor ${\rm e}^{cr}$ diminishes the magnitude of this component in the residual.
Simulation evidence given in Appendix \ref{sec:Additional-Numerical-Results} corroborates the sign effects of $c$ on the estimation error of the HP filter.
\end{rem}

\bigskip

Whereas the HP filter fails to fully capture an exponential trend function, this objective
is fulfilled by the bHP filter, as confirmed in the next result.
\begin{thm}
\label{thmbhpltu} If $x_{t}$ satisfies the functional law (\ref{eq:convergenceltu})
and $\lambda=\mu n^{4}$, then the bHP filter is consistent with  
$n^{-1/2}\widehat{f}_{\lfloor nr\rfloor}^{(m)}\rightsquigarrow J_{c}(r)$
as $m,n\to\infty$.
\end{thm}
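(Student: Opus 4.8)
\noindent\textit{Proof strategy.} The plan is to follow the template used for the $I(1)$ and higher‑order integrated cases (the heuristic of Section~\ref{subsec:Case-0:-Unit} and Theorem~\ref{thmbhpIq}). I work throughout in the expanded probability space of \eqref{uniformltu}, where $n^{-1/2}y_{\lfloor nr\rfloor}\to_{a.s.}J_{c}(r)$ uniformly in $r\in[0,1]$, and translate back to weak convergence in $D[0,1]$ at the end. Writing the bHP trend in operator form, $n^{-1/2}\widehat f_{\lfloor nr\rfloor}^{(m)}=n^{-1/2}y_{\lfloor nr\rfloor}-(1-G_{\lambda})^{m}\,n^{-1/2}y_{\lfloor nr\rfloor}$, consistency reduces to showing that the bHP residual vanishes, i.e.\ $\sup_{0\le r\le1}\bigl|(1-G_{\lambda})^{m}\,n^{-1/2}y_{\lfloor nr\rfloor}\bigr|\to0$ as $m,n\to\infty$. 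Since the HP residual operator is norm‑reducing (its finite‑sample counterpart $I_{n}-S$ is symmetric with eigenvalues in $[0,1)$ and has $O(1)$ absolute row sums), the uniform error from replacing $n^{-1/2}y_{\lfloor nr\rfloor}$ by $J_{c}(r)$ is asymptotically negligible, so it suffices to prove $\sup_{r}\bigl|(1-G_{\lambda})^{m}J_{c}(r)\bigr|\to0$.

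Next I would feed the series representation \eqref{eq:JcrKL} of $J_{c}$ through the operator termwise. It contains three kinds of modes: the sine modes $\varphi_{k}(r)=\sqrt{2}\,\mathrm{Im}\,\mathrm{e}^{(\mathbf{i}/\sqrt{\lambda_{k}})r}$, the cosine modes $\psi_{k}(r)=\sqrt{2}\,\mathrm{Re}\,\mathrm{e}^{(\mathbf{i}/\sqrt{\lambda_{k}})r}$, and --- the feature absent from the $I(q)$ analysis --- the real exponential mode $\mathrm{e}^{cr}$. For the trigonometric modes $a=\mathbf{i}/\sqrt{\lambda_{k}}$, so $a^{4}=1/\lambda_{k}^{2}$ and Lemma~\ref{lem:m-operator} produces the shrinkage factor $\bigl(\mu/(\mu+\lambda_{k}^{2})\bigr)^{m}$; for the exponential mode $a=c$, so $a^{4}=c^{4}\ge0$ and the factor is $\bigl(\mu c^{4}/(\mu c^{4}+1)\bigr)^{m}$, which lies in $[0,1)$ for every real $c$ and decays geometrically, being bounded by $\mathrm{e}^{-m/(\mu c^{4}+1)}$ (for $c=0$ the mode is a constant and is annihilated, recovering the pure unit‑root case). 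For a fixed $m$ this identifies the candidate limit of the bHP trend as
\[
f^{(m)}_{\mathrm{LUR}}(r)=\sum_{k=1}^{\infty}\frac{1}{\lambda_{k}c^{2}+1}\Big[\big(1-(\tfrac{\mu}{\mu+\lambda_{k}^{2}})^{m}\big)\big(\sqrt{\lambda_{k}}\varphi_{k}(r)-c\lambda_{k}\psi_{k}(r)\big)+\big(1-(\tfrac{\mu c^{4}}{\mu c^{4}+1})^{m}\big)\sqrt{2}\,c\lambda_{k}\mathrm{e}^{cr}\Big]\xi_{k},
\]
which agrees with Proposition~\ref{pop:hpltu} at $m=1$ (using $1-\mu/(\mu+\lambda_{k}^{2})=\lambda_{k}^{2}/(\mu+\lambda_{k}^{2})$ and $1-\mu c^{4}/(\mu c^{4}+1)=1/(\mu c^{4}+1)$); the fixed‑$m$ convergence $n^{-1/2}\widehat f^{(m)}_{\lfloor nr\rfloor}\rightsquigarrow f^{(m)}_{\mathrm{LUR}}$ is obtained by iterating the single‑filter analysis of Proposition~\ref{pop:hpltu}, each intermediate residual $n^{-1/2}\widehat c^{(j)}_{\lfloor nr\rfloor}$ converging a.s.\ uniformly to a function of the same series form so that the next application of $I_{n}-S$ can again be analysed. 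As $m\to\infty$ every bracketed multiplier tends to $1$, so $f^{(m)}_{\mathrm{LUR}}(r)\to J_{c}(r)$, and assembling the two limits through the standard diagonal argument for iterated limits, together with the transfer from the expanded space to the original one, yields $n^{-1/2}\widehat f^{(m)}_{\lfloor nr\rfloor}\rightsquigarrow J_{c}(r)$.

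The hard part is making the termwise reasoning rigorous \emph{uniformly in $r$} and jointly in $m,n$: Lemma~\ref{lem:m-operator} controls a given exponential mode only while its index stays in $\mathcal A(n\wedge m)$, i.e.\ for $k\lesssim\sqrt{\log(n\wedge m)}$, whereas \eqref{eq:JcrKL} is an infinite series whose trigonometric part has coefficients decaying only like $1/k$ (inherited from the Brownian motion component). I would split $J_{c}$ into a head of $K=K_{n,m}\to\infty$ terms --- chosen so that all head modes and the mode $a=c$ lie in $\mathcal A(n\wedge m)$ --- and a tail. The head is a finite sum to which Lemma~\ref{lem:m-operator}(a)--(b) apply, and a summation‑by‑parts step, exploiting the monotonicity in $k$ of the multipliers $(\mu/(\mu+\lambda_{k}^{2}))^{m}$ together with the \emph{almost sure uniform} convergence of the KL‑type series \eqref{eq:JcrKL} (which follows from that of \eqref{eq:KLI1}), bounds the head contribution by $O\big(\,(\mu/(\mu+\lambda_{1}^{2}))^{m}\big)$ up to a vanishing approximation error; the same Abel argument bounds the filtered tail uniformly in $m$ by the deterministic, vanishing size of the untruncated tail of \eqref{eq:JcrKL}. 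Combining the head and tail bounds gives $\sup_{r}\bigl|(1-G_{\lambda})^{m}J_{c}(r)\bigr|\to0$ and completes the argument. The exponential mode $\mathrm{e}^{cr}$, by contrast, is the easy part here: it is a single term with a bounded random coefficient $\nu$ and its shrinkage factor is strictly below one for all $c$, so its contribution to the residual is $O_{a.s.}\bigl(\mathrm{e}^{-m/(\mu c^{4}+1)}\bigr)$.
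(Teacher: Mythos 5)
Your proposal is correct and follows essentially the same route as the paper's proof: work in the expanded probability space of \eqref{uniformltu}, truncate the series representation \eqref{eq:JcrKL} at $K_{n,m}=\lfloor\pi^{-1}\sqrt{\log(n\wedge m)}\rfloor$ terms so that the exponential mode $a=c$ and the trigonometric modes $a=\mathbf{i}/\sqrt{\lambda_{k}}$ all lie in $\mathcal{A}(n\wedge m)$, apply Lemma \ref{lem:m-operator} and Corollary \ref{cor R_sin_cos} termwise to obtain the shrinkage factors $(\mu c^{4}/(\mu c^{4}+1))^{m}$ and $(\mu/(\mu+\lambda_{k}^{2}))^{m}$, and control the random coefficients and the tail via the almost sure uniform convergence of the series. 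Your intermediate fixed-$m$ limit $f^{(m)}_{\mathrm{LUR}}$ with a diagonal argument, and the Abel-summation bound for the head and tail, are harmless variations on the paper's direct joint $m,n\to\infty$ argument, which instead bounds the Gaussian coefficient sums $\sum_{k}\frac{\lambda_{k}}{\lambda_{k}c^{2}+1}\xi_{k}$ and $\sum_{k}\frac{\sqrt{\lambda_{k}}}{\lambda_{k}c^{2}+1}\xi_{k}$ by their finite variances.
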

For any finite $c\in\mathbb{R},$ Theorem \ref{thmbhpltu} shows that boosting the HP
filter removes stochastic trend components in the residual and provides consistent recovery of a local to unity trend. 
With respect to the arguments given above in Remark \ref{rem:Surprise2} regarding the effects of filtering on a finite degree time polynomial trend, the limit theory of the bHP filter in
\citetalias{phillips2021boosting} (Theorem 2, p.~555) shows that 
the bHP filter in $m$ iterations
removes a polynomial trend of degree $\left(4m-1\right)$ from the residual cycle. 
Passing $m$ to infinity ensures that boosting captures all the terms in a power series expansion of an exponential trend, corroborating the consistency of the bHP filter given in Theorem \ref{thmbhpltu} here. 

\subsection{Linear Combination of Stochastic and Deterministic Trends}

Time series models sometimes explicitly include deterministic trends and structural breaks as constituent
trend components. Such components may be analyzed for the wider class of stochastic trend functions considered in this paper. 
In particular, following \citetalias{phillips2021boosting} we consider a deterministic trend given by the time polynomial
\begin{equation}
d_{n}(t)=\alpha_{n}+\beta_{n,1}t+\dots+\beta_{n,J}t^{J}.\label{eq:deter}
\end{equation}
It is a direct corollary that bHP consistently estimates the trend
if the underlying stochastic trend
is accompanied by an additive polynomial deterministic trend.
\begin{cor}
\label{cor:bhpIqdeter} Suppose that $x_{t}^{0}$ satisfies the functional
law $\dfrac{x_{\lfloor nr\rfloor}^{0}}{n^{q-0.5}}\rightsquigarrow B_{q}(r)$.{}
Let $x_{t}=x_{t}^{0}+d_{n}(t)$ where $d_{n}(t)$ is given by (\ref{eq:deter})
and the coefficients satisfy $\alpha_{n}/n^{q-0.5}\to\alpha$ and
$n^{j-q+0.5}\beta_{n,j}\to\beta_{j}$ for $j=1,2,\dots,J$. Given
$\lambda=\mu n^{4}$, the asymptotic form of the bHP filtered trend is  
\begin{equation*}
\dfrac{\hat{f}_{\lfloor nr\rfloor}^{(m)}}{n^{q-0.5}}\rightsquigarrow B_{q}(r)+d(r),
\end{equation*}
as $m,n\to\infty$, where $d(r)=\alpha+\beta_{1}r+\dots+\beta_{J}r^{J}.$
\end{cor}

A parallel result  holds giving consistency of the bHP filter for an LUR time series with deterministic time polynomial drifts and structural breaks. Simply set $q=1$ and replace $B_{q}\left(r\right)$
by $J_{c}\left(r\right)$ in Corollary \ref{cor:bhpIqdeter}. 
For any stochastic trend considered in this paper, 
if the deterministic trend is consists of finite piece-wise polynomials, 
then consistent estimation can be obtained following 
\citetalias{phillips2021boosting}'s Theorem 3. 
Full statements are omitted.


\bigskip 

Given the consistency with the deterministic trends, it is easy to see that finitely additive normalized combinations of these trends are correspondingly included. For example, suppose 
\[
x_{t}=\frac{x_{t}^{\left(1\right)}}{n}+x_{t}^{\left(2\right)}+\sqrt{n}\alpha+\frac{\beta}{\sqrt{n}}t
\]
where $x_{t}^{\left(1\right)}$ is an $I(2)$ trend as in (\ref{convergenceI2}),
$x_{t}^{\left(2\right)}$ is an LUR trend as in (\ref{eq:convergenceltu}), and the respective innovations of these time series $(u_{t}^{(1)},u_{t}^{(2)})'$
are potentially correlated random variables satisfying a bivariate
version of (\ref{eq:wold}). Then the boosted HP filter reproduces the asymptotic form of this combined trend process so that
\[
n^{-1/2}\widehat{f}_{\lfloor nr\rfloor}^{(m)}\rightsquigarrow B_{2}(r)+J_{c}(r)+\left(\alpha+\beta r\right)
\]
given $\lambda=\mu n^{4}$ as $m,n\to\infty$. Such results demonstrate the versatility of the boosted filter in dealing with complex forms of nonstationary time series. 

\section{Simulations\label{sec:Simulations}}

Practical implementation of boosting requires two tuning parameters $\lambda$
and $m$ to be selected. For low frequency macroeconomic data, \citetalias{phillips2021boosting} 
suggested that the conventional choice $\lambda=1600$ be used for quarterly
data, so that the first iteration gives the HP filter. This choice provides a
benchmark that can be adjusted in the case of annual or monthly data \citep{ravn2002adjusting}.
To determine the boosting number $m$ \citetalias{phillips2021boosting} proposed a data-driven stopping rule based on a Bayesian-type information criterion (BIC)
\begin{equation}
IC(m)=\dfrac{\text{\ensuremath{\widehat{c}^{(m)\prime}\widehat{c}^{(m)}}}}{\widehat{c}^{\mathrm{HP}\prime}\widehat{c}^{\mathrm{HP}}}+\log n\dfrac{\text{\text{tr}}\left[I_{n}-(I_{n}-S)^{m}\right]}{\text{tr}(I_{n}-S)},\label{eq:BIC}
\end{equation}
which is motivated by a bias-variance trade-off.\footnote{In addition to the BIC \eqref{eq:BIC},
\citetalias{phillips2021boosting} also suggested a stopping rule based on an augmented Dickey-Fuller 
unit root test, according to the notion that cyclical
behavior should not exhibit unit root behavior. Simulations in \citetalias{phillips2021boosting} 
showed that \eqref{eq:BIC} generally provided better finite sample performance and has therefore been used here.
} 
Unless explicitly stated otherwise, 
the numerical work of bHP in this paper employs the following algorithm.

\begin{lyxalgorithm}
\label{alg:Automated-bHP-BIC} The boosted HP filter (with the BIC stopping rule)
\end{lyxalgorithm}
\begin{description}
\item [{Step1}] Set a maximum number of iterations, say, $m_{\mathrm{max}}=200$.
Specify the smoothing parameter $\lambda=1600$ for quarterly data,
$\lambda=129600$ for monthly data, or $\lambda=6.25$ for annual
data. 
\item [{Step2}] Run Algorithm \ref{alg:Boosted-HP-filter} with $m=m_{\mathrm{max}}$,
and compute $IC\left(m\right)$ in (\ref{eq:BIC}) at each iteration. 
\item [{Step3}] Given the minimizer $\widehat{m}:=\min_{m\leq m_{\mathrm{max}}}IC\left(m\right)$,
compute $\widehat{f}^{\left(\widehat{m}\right)}$ and $\widehat{c}^{\left(\widehat{m}\right)}$. 
\end{description}
Code to implement the above algorithm is provided. In \texttt{R} software simply call
\begin{quotation}
\texttt{BoostedHP(x, lambda = 1600, stopping = \textquotedbl BIC\textquotedbl ,
Max\_Iter = 200)}\footnote{An \texttt{R} package can be accessed at \url{https://github.com/zhentaoshi/bHP_R_pkg}
\citep{chenshi2021}, and this command line is the default setting of the
arguments in the function. Parallel open-source functions are also
available in \texttt{Matlab} and \texttt{Python} at \url{https://github.com/zhentaoshi/Boosted_HP_filter}.}
\end{quotation}
where \texttt{x} is the observed time series, and this function will
return the estimated trend and cyclical components, along with the
sequence of values of $IC\left(m\right)$ and the number of iterations
$\widehat{m}$.


\subsection{Data Generating Processes}

In the following numerical exercises, we generate time series in the general form of (\ref{eq:y_decomp}).
Along with various stochastic and deterministic trend processes,
we specify a cyclical component $c_t$ satisfying a stationary AR(2)\footnote{
We have experimented with many designs of the stationary component and found that the relative performance of the filters is robust. Due to limitations of space we report here the results of a single illustrative design.}
$$
(1 - \beta_1 \mathbb{L} - \beta_2 \mathbb{L}^2)\,  c_{t}=  e_{t},
\ \ \mbox{where } e_t \sim i.i.d.~N(0,\sigma^2_e). 
$$  
To mimic two-year business cycles, we fix $\beta_1 = 1$ and set $\beta_2 = -0.5469$ for quarterly data
to produce a spectral density function\footnote{See \citet[Property 4.3]{shumway2000time} for the formula of spectral density functions for ARMA processes.} peaking at $\omega = 1/8$ corresponding to a periodicity of $8$ quarters. Similarly, we set $\beta_2 = -0.3492$ for monthly data so that the spectral density peaks at $1/24$, reflecting a periodicity of $24$ months. 

We first design three data generating processes (DGPs) based on an $I(2)$ trend. DGP1 is a simple $I(2)$ process,  upon which DGP2 adds another 3rd order polynomial trend, and DGP3 further involves a structural break in the middle. All these cases are covered by the asymptotics in Section \ref{sec:theo-Applications}.  
\begin{lyxlist}{00.00.0000}
\item [{DGP1.}] $f_{t}^{(1)}$ satisfies $(1 - \mathbb{L})^2 f_{t}^{(1)}= v_{t}$, where $v_{t}\sim i.i.d.~N(0,1)$.
\item [{DGP2.}] $f_{t}^{(2)}=f_{t}^{(1)}+200\left(t/n\right)^{3}.$ A third degree polynomial trend is added to DGP1.
\item [{DGP3.}] $f_{t}^{(3)}=f_{t}^{(1)}+200\left(t/n\right)^{3}\cdot1\{t>0.5n\}$. A third degree polynomial trend with a structural break is added to DGP1.
\end{lyxlist}
While we set 
$\sigma_e=5$ for DGPs 1--3 so that the stationary component $c_{t}$ is non-negligible in finite samples in relation to the $I(2)$ trend,
we reduce $\sigma_e$ to 1 for the LUR trend 
to match the settings in \citetalias{phillips2021boosting} where unit root trends were employed, since 
LUR processes have less prominent trend behavior than $I(2)$ processes.
The following designs in DGPs 4--6 are used with a baseline LUR model parallel to DGPs 1--3. 
LURs from three groups (near explosive, unit root, near stationary) are used with $c \in \{3,0,-3\}$.

\begin{lyxlist}{00.00.0000}
\item [{DGP4.}] $f_{t}^{(6)}=\mathrm{e}^{c/n}f_{t-1}^{(6)}+v_{t}$.
\item [{DGP5.}] $f^{(7)}=f^{(6)}(t)+200\left(t/n\right)^{3}.$
\item [{DGP6.}] $f_{t}^{(8)}=f^{(6)}(t)+200\left(t/n\right)^{3}\cdot1\{t>0.5n\}.$
\end{lyxlist}

We use ``quarterly data'' and ``monthly data'' to refer to the respective sample sizes mimicking the FRED databases. Following Algorithm \ref{alg:Automated-bHP-BIC}, we use $\lambda=1600$ for quarterly data. Sample sizes are $n=100$ (25 years)
as a baseline, while $n=200$ (50 years) and $300$ (75 years) are comparable in size to the empirical application, where most time series in the FRED-QD database have 258 quarters. For monthly data $\lambda=129600$ is used and sample sizes are scaled up by a factor of three, giving $n \in \{300, 600, 900\}$ which is in line with the FREQ-MD database of 775 months.

\subsection{Simulation Results}

\begin{table}[t]
\centering 
\caption{MSE of the Estimated Trend: $I(2)$}\label{tab:I2}
\small
\begin{tabular}{ccrrrr|ccrrrr}
\hline\hline 
\multicolumn{6}{c}{Quarterly data}                 & \multicolumn{6}{c}{Monthly data}                                    \\
\hline  
DGP                & $n$   &  HP & 2HP & bHP &  HRF  & DGP                & $n$   &  HP & 2HP & bHP &  HRF \\
\hline 
\multirow{3}{*}{1} & 100 & 26.66 & 15.99 & 12.92 & 438.38 & \multirow{3}{*}{1} & 300 & 600.84 & 285.51 & 54.33 & 4376.96 \\
                   & 200 & 26.46 & 15.67 & 13.31 & 717.81 &                    & 600 & 608.18 & 288.03 & 59.83 & 5780.87 \\
                   & 300 & 26.26 & 15.50 & 13.67 & 885.12 &                    & 900 & 606.74 & 287.26 & 62.75 & 6272.95 \\
                   \hline 
\multirow{3}{*}{2} & 100 & 27.15 & 16.11 & 12.96 & 462.42 & \multirow{3}{*}{2} & 300 & 600.95 & 285.51 & 54.34 & 4374.37 \\
                   & 200 & 26.48 & 15.68 & 13.31 & 719.72 &                    & 600 & 608.19 & 288.03 & 59.85 & 5781.40 \\
                   & 300 & 26.26 & 15.50 & 13.67 & 885.71 &                    & 900 & 606.75 & 287.26 & 62.75 & 6273.40 \\
                   \hline 
\multirow{3}{*}{3} & 100 & 39.68 & 26.05 & 20.66 & 530.78 & \multirow{3}{*}{3} & 300 & 613.67 & 295.66 & 59.12 & 4493.40 \\
                   & 200 & 32.04 & 20.17 & 17.06 & 759.30 &                    & 600 & 614.22 & 292.71 & 62.18 & 5854.74 \\
                   & 300 & 29.99 & 18.49 & 16.23 & 917.18 &                    & 900 & 610.23 & 290.13 & 64.28 & 6323.76 \\
                   \hline\hline                    
\end{tabular}
\end{table}

The original HP filter, `twicing' $m=2$ iterated HP filter (2HP) \citep{hall2024selecting}, bHP
and the autoregressive method are applied to trended time series based on $I(2)$ and LUR. 
For each replication, the mean squared error
(MSE) is calculated as the sample average of $(\widehat{f}_{t}-f_{t})^{2}$, which is equivalent to the average of $(\widehat{c}_{t}-c_{t})^{2} = \left((x_{t}-\hat{c}_{t})-(x_{t}-c_{t})\right)^{2} = (f_{t} - \widehat{f}_{t})^{2} $.
That is, the square of the fitting errors of the trend and the cycle are identical.

We report the empirical MSE averaged over 5000 replications.
Table \ref{tab:I2} displays the MSEs for DGPs 1--3 where the stochastic
trends are based on $I(2)$ time series with additional trend and break components. It is evident that 2HP improves the original HP filter, and the bHP obtains the lowest MSEs. The MSEs for bHP for each sample size are all close across DGPs 1--2, suggesting that the additional deterministic trend components are
estimated accurately and the estimation errors stem primarily from the $I(2)$ stochastic trend.  
When the tuning parameter $\lambda$ rises from $1600$ for the quarterly data to $129600$ for monthly data, the MSEs for HP and 2HP are inflated in many cases tenfold or more. Evidently, the conventional choice $\lambda=129600$ for monthly data 
seems much too large for HP trend detection in monthly $I(2)$ time series. By contrast
the same large tuning parameter has a considerably muted effect on the trend detection performance of bHP.

For the HRF, $(h,p)=(8,4)$ is set for quarterly data and $(h,p)=(24,12)$ for
monthly data following \citet{hamilton2018you}. 
The MSEs from HRF are far larger than those from the HP filters appear increasing substantially across sample sizes, revealing the limitations of regression specifications as a flexible trend detection device.

\begin{table}[htbp]
\centering 
\caption{MSE of the Estimated Trends: LUR}\label{tab:LUR}
\small

\begin{tabular}{cc|rrrr|rrrr|rrrr}
\hline\hline
 & & \multicolumn{4}{c}{$c=3$} & \multicolumn{4}{c}{$c=0$} & \multicolumn{4}{c}{$c=-3$} \\ 
\hline
\multicolumn{14}{c}{Quarterly data} \\   
{DGP} &  {$n$}  & HP & 2HP    & bHP   & HRF  & HP & 2HP     & bHP   & HRF  & HP & 2HP     & bHP    & HRF  \\
                     \hline
                
\multirow{3}{*}{4} & 100 & 4.11  & 2.12  & 1.51 & 9.86  & 1.77  & 1.51  & 1.37 & 6.42  & 1.80  & 1.54  & 1.40 & 5.73  \\
                   & 200 & 1.80  & 1.50  & 1.43 & 9.53  & 1.80  & 1.52  & 1.46 & 7.76  & 1.82  & 1.54  & 1.48 & 7.23  \\
                   & 300 & 1.78  & 1.50  & 1.49 & 9.47  & 1.80  & 1.52  & 1.51 & 8.25  & 1.82  & 1.53  & 1.53 & 7.91  \\
                   \hline 
\multirow{3}{*}{5} & 100 & 4.58  & 2.23  & 1.55 & 16.09 & 2.26  & 1.62  & 1.37 & 17.87 & 2.28  & 1.65  & 1.41 & 17.73 \\
                   & 200 & 1.82  & 1.50  & 1.43 & 11.33 & 1.82  & 1.52  & 1.46 & 12.11 & 1.83  & 1.54  & 1.48 & 11.88 \\
                   & 300 & 1.78  & 1.50  & 1.49 & 10.29 & 1.80  & 1.52  & 1.51 & 10.80 & 1.82  & 1.53  & 1.53 & 10.72 \\
                   \hline 
\multirow{3}{*}{6} & 100 & 16.94 & 12.06 & 8.37 & 87.73 & 14.62 & 11.44 & 8.04 & 94.59 & 14.60 & 11.44 & 8.08 & 94.25 \\
                   & 200 & 7.49  & 6.04  & 4.71 & 41.51 & 7.48  & 6.06  & 4.73 & 43.47 & 7.48  & 6.06  & 4.74 & 43.17 \\
                   & 300 & 5.45  & 4.44  & 3.72 & 29.20 & 5.49  & 4.47  & 3.73 & 30.43 & 5.48  & 4.47  & 3.74 & 30.23 \\
                     \hline

\multicolumn{14}{c}{Monthly data} \\    
{DGP} &  {$n$}  & HP  & 2HP    & bHP   & HRF  & HP & 2HP     & bHP   & HRF  & HP  & 2HP    & bHP    & HRF  \\

                     \hline
\multirow{3}{*}{4} & 300 & 5.86  & 4.28  & 2.17 & 25.18 & 4.93  & 4.05  & 2.11 & 17.31  & 4.97  & 4.10  & 2.13 & 15.47  \\
                   & 600 & 5.06  & 4.12  & 2.21 & 25.40 & 5.06  & 4.13  & 2.22 & 21.15  & 5.05  & 4.13  & 2.22 & 19.76  \\
                   & 900 & 5.06  & 4.13  & 2.27 & 25.17 & 5.09  & 4.16  & 2.28 & 22.64  & 5.11  & 4.17  & 2.29 & 21.73  \\
                   \hline 
\multirow{3}{*}{5} & 300 & 6.37  & 4.40  & 2.22 & 31.32 & 5.42  & 4.17  & 2.16 & 34.71  & 5.46  & 4.21  & 2.18 & 33.85  \\
                   & 600 & 5.08  & 4.12  & 2.21 & 27.15 & 5.08  & 4.14  & 2.22 & 29.07  & 5.07  & 4.13  & 2.23 & 28.13  \\
                   & 900 & 5.06  & 4.13  & 2.27 & 26.09 & 5.09  & 4.16  & 2.28 & 27.70  & 5.12  & 4.17  & 2.29 & 27.15  \\
                   \hline 
\multirow{3}{*}{6} & 300 & 18.33 & 13.92 & 6.25 & 98.53 & 17.49 & 13.77 & 6.16 & 108.23 & 17.43 & 13.73 & 6.18 & 107.25 \\
                   & 600 & 10.69 & 8.62  & 4.18 & 56.23 & 10.68 & 8.66  & 4.20 & 59.40  & 10.65 & 8.62  & 4.18 & 58.47  \\
                   & 900 & 8.71  & 7.05  & 3.60 & 44.44 & 8.76  & 7.10  & 3.60 & 46.92  & 8.75  & 7.09  & 3.62 & 46.07\\
                    \hline\hline
\end{tabular}
\end{table}

Similar phenomena and rankings of MSEs are observed in Table \ref{tab:LUR}
for DGPs 4--6 when the underlying stochastic trend is LUR. When $n=100$
for quarterly data and $n=300$ for monthly data, the near-explosive
case with $c=3$ yields larger MSEs for HP. Under other sample sizes
the results are stable across the values $c \in \{3,0,-3\}$. The MSEs from bHP
are only slightly affected when $\lambda$ is raised from 1600 to
$129600$. Again, the bHP filter is uniformly superior to the competitors.

Regarding the number of iterations, under the $I(2)$ trends the average $m$ for bHP ranges from 2.5 to 4.2 for quarterly data, and from 15 to 20 for monthly data. This result is consistent with our theoretical result that a larger $\lambda$ in the HP filter for monthly data leads to an overly  smooth estimated trend and thus requires more iterations for consistency. Under the LUR trends the two ranges become 2.2--8 and 30--60.

In summary, the simulation results show that, in terms of MSEs the conventional
choices of $\lambda$ for quarterly and monthly data appear
too large for good trend detection by the HP filter. In all cases, HRF delivers worse MSEs in trend detection than HP. 
It is clear from these results that the iterated procedure makes the initial choice of the penalty parameter $\lambda$ less critical for performance, 
compensating for the shortcomings of HP particularly with more complex trend processes, thereby providing a more robust approach to general trend estimation than the other methods.
Results in \citetalias{phillips2021business} did show that when the penalty $\lambda=o(n)$, the HP filter is consistent for a stochastic trend when $n \to \infty,$ but no specific rate was suggested or explored. This finding is therefore not particularly helpful for empirical practice. But coupled with the consistency of bHP when $\lambda=\mu n^4$ and the robustness of bHP to choice of the penalty $\lambda$ in initiating iterations, it suggests the existence of a linkage between an `optimal' penalty $\lambda$ and the boosting iteration number $m$ that assures consistent trend estimation. Possible linkages of this type are being considered in ongoing work by \cite{yamada2023} and \cite{biswas2022longrange}.\footnote{\cite{biswas2022longrange} show that if $\lambda=c\mu n^4$ and $\hat \lambda = \hat c(m) \mu n^4$ is chosen to minimize the $L_2$ norm between the HP filter and the bHP filter after $m$ iterations, then the resulting HP with penalty $\hat \lambda$ is consistent when $ \hat c(m) \to 0$ as $m,n \to \infty$. \cite{yamada2023} derives a relationship between the smoothing matrices involved in the HP and bHP filters whose properties confirm the finding of \citetalias{phillips2021boosting} that `the asymptotic effect of increasing $m$ is similar to reducing
	the value of $\lambda$ in the simple HP filter'.}

\section{Empirical Applications}\label{sec:emp}
\subsection{Illustration: US GDP \label{sec:Appetizer}}

\begin{sidewaysfigure}
\begin{minipage}[t]{0.45\columnwidth}%
\subfloat[Estimated Trends]{\includegraphics[scale=0.32]{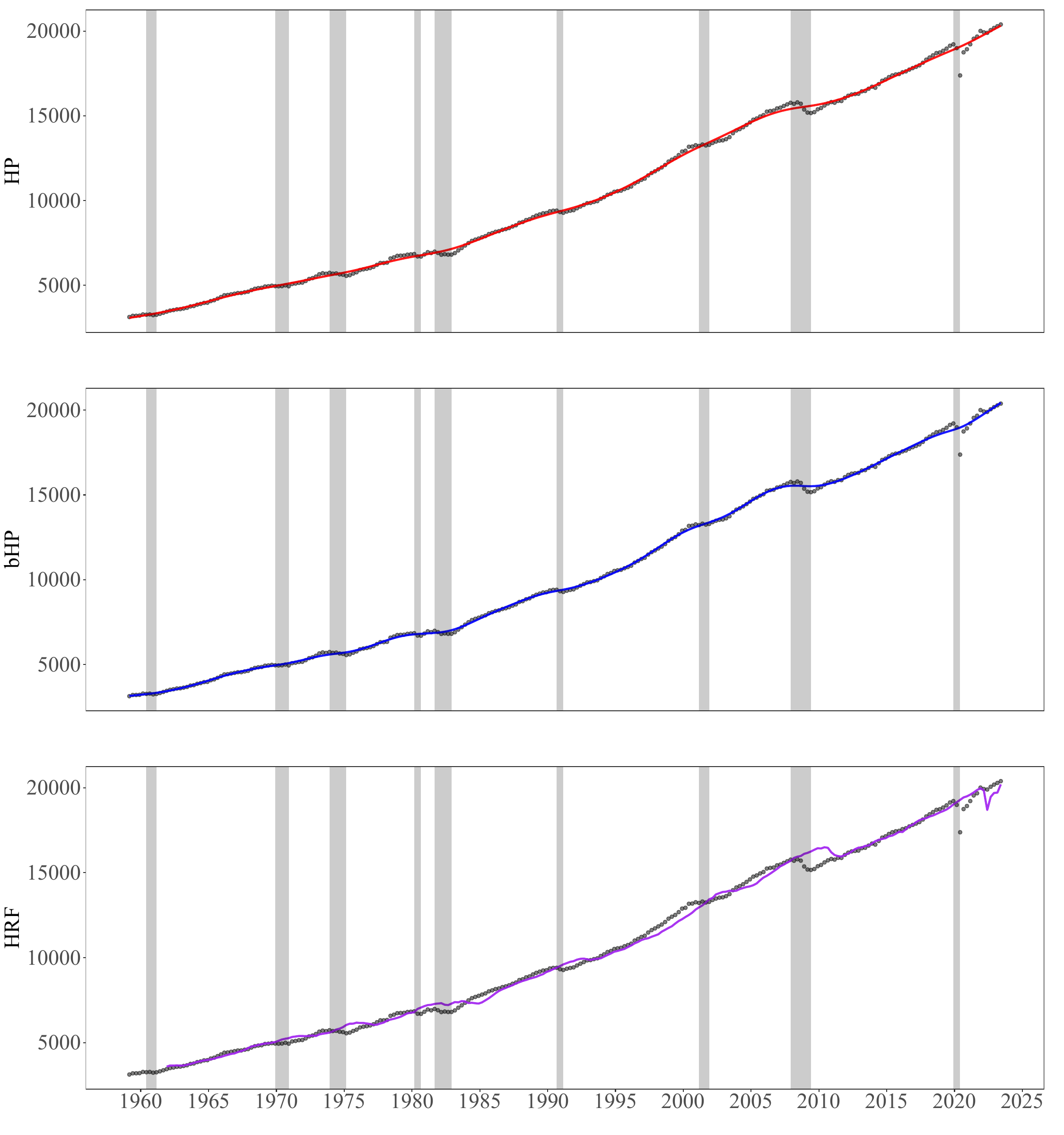}
}%
\end{minipage}\hfill{}%
\begin{minipage}[t]{0.45\columnwidth}%
\subfloat[Estimated Cycles]{\centering

\includegraphics[scale=0.32]{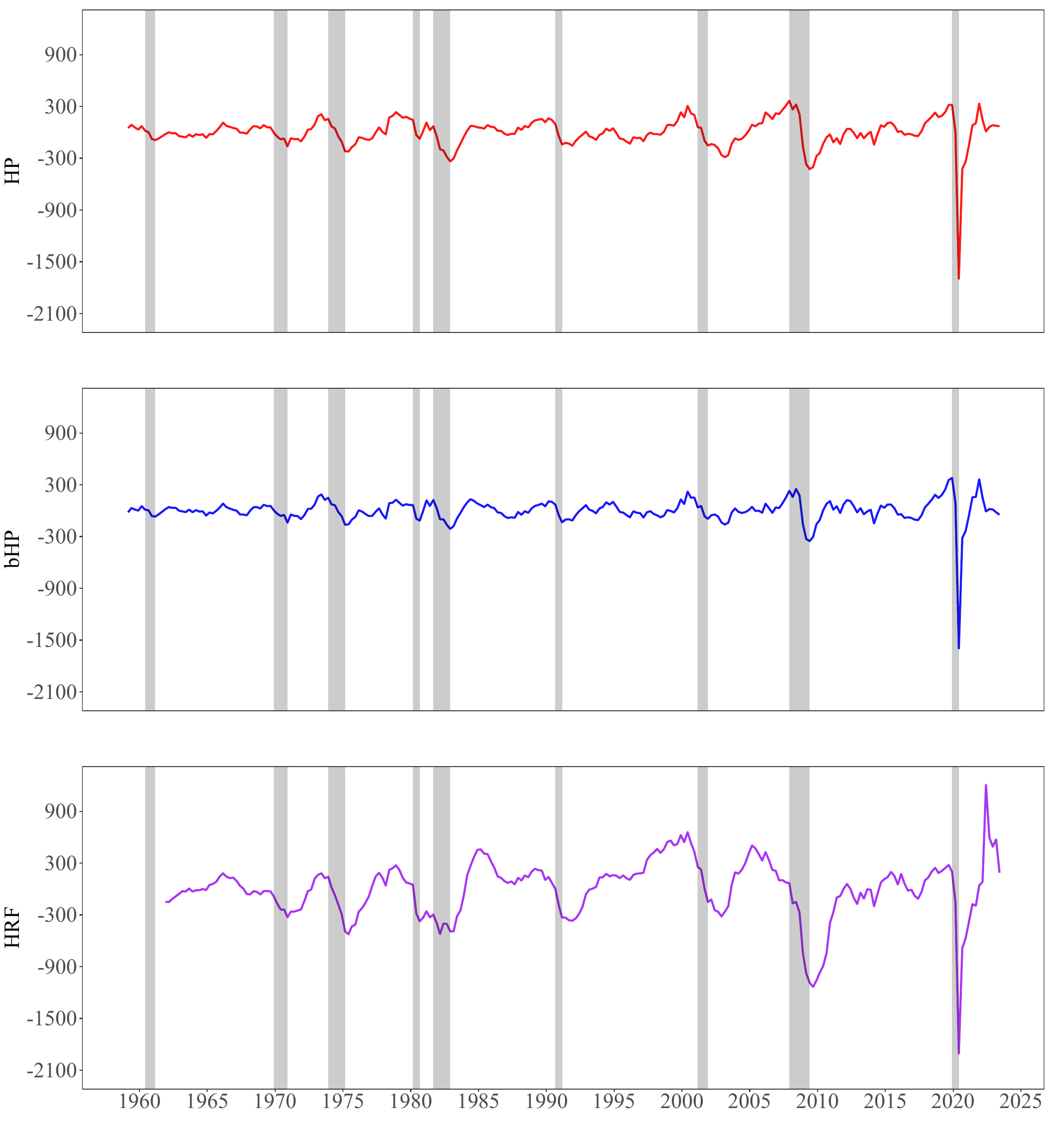}

}%
\end{minipage}
\caption{\label{fig:GDP-q} Trends, Cycles, and Raw Data of US Real GDP}
\end{sidewaysfigure}

\begin{sidewaysfigure}
\begin{minipage}[t]{0.45\columnwidth}%
\subfloat[Estimated Trends]{\includegraphics[scale=0.32]{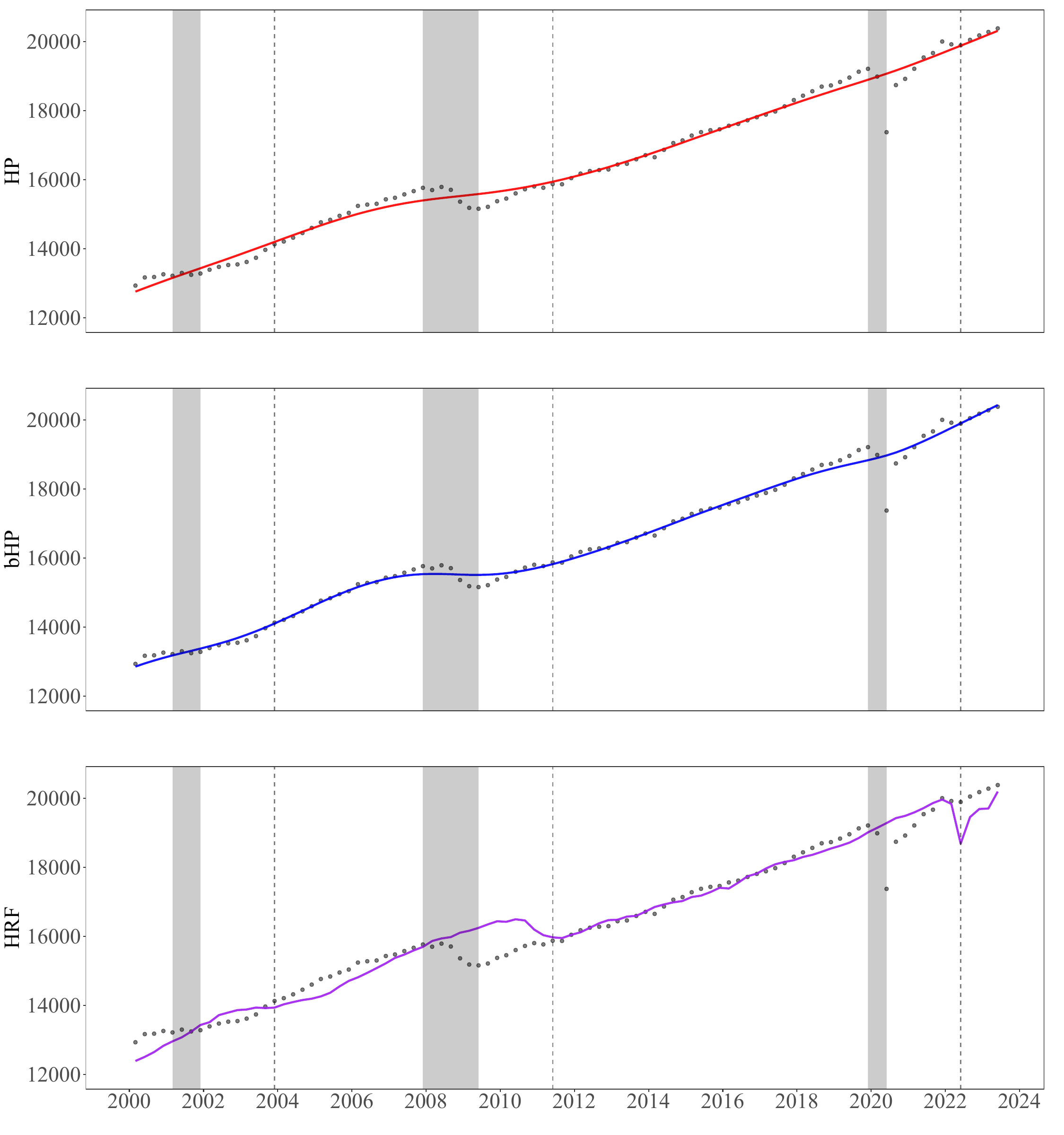}
}%
\end{minipage}\hfill{}%
\begin{minipage}[t]{0.45\columnwidth}%
\subfloat[Estimated Cycles]{\centering
\includegraphics[scale=0.32]{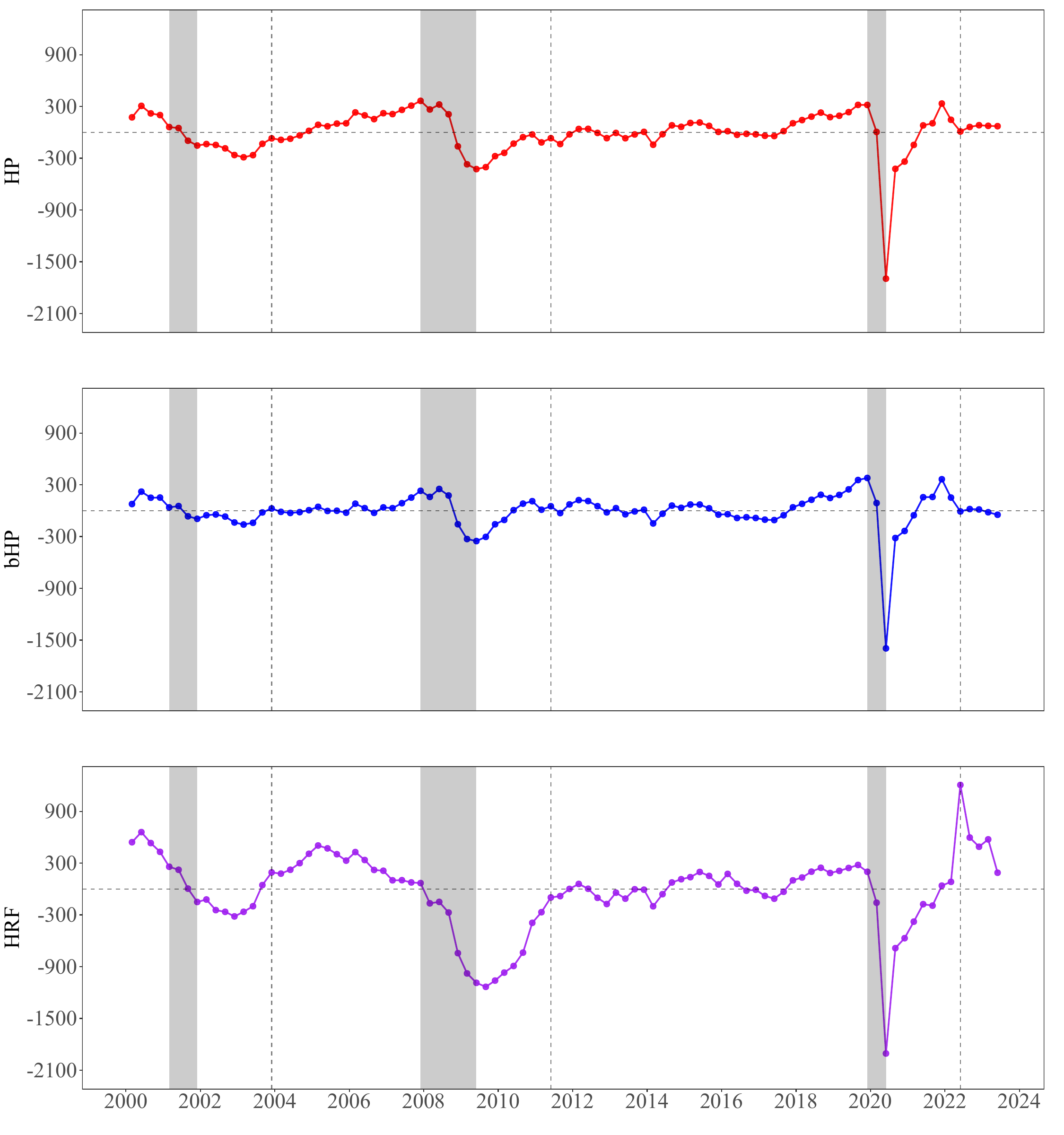}
}%
\end{minipage}

\footnotesize{Notes: In the right panels, the horizontal dotted line is $y=0$. The vertical dotted lines mark two-years after the last quarter of the recessions identified by NBER.}

\caption{\label{fig:GDP-q-21} Zoomed-in version of Figure \ref{fig:GDP-q}
after 2000}
\end{sidewaysfigure}


At the time
of writing, US real quarterly GDP in the FRED-QD
database consists of 258 quarterly observations from 1959:Q1 to 2023:Q2.
In Figure \ref{fig:GDP-q} the shaded time periods mark recessions
identified by the National Bureau of Economic Research (NBER). 

In the left panels,\footnote{Figures \ref{fig:GDP-q} and \ref{fig:GDP-q-21} are left-rotated $90^{\circ}$ to landscape mode in the paper. So the left (right) panel of the paper appears in the lower (upper) position in the rotated display. Similarly, the top, center and bottom panels refer to the unrotated graphics, whereas they are shown in $90^{\circ}$ rotated view in the paper. The descriptions in the text refer to the original unrotated orientation of the graphics.} the black dots in the graphics display the raw data, and the colored solid
lines are the trends as estimated by the HP filter (red line, top), bHP (blue line, middle), and HRF (violet line, lower).\footnote{The 2HP `twicing' filter was also calculated for all empirical examples but the results are not displayed as they are very close to those of HP.}
The HP filter produces a smooth, monotonically increasing trend. bHP is more responsive with a slight dip during the long
recession in 2008--09, leaving the residual to capture that long cycle; and the trends from HP do not decline in the short Covid-19 recession despite the enormous single period drop in the raw data in 2020:Q2, again leaving this dynamic to the cycle. 
By contrast, with $h=8$ the HRF largely shifts the observed data to the right by two years.

More revealing are the estimated cycles shown in the right panels, which measure the
gap between the raw data and the estimated trend. 
In the zoomed-in sub-figures, each data point is discernible.  The residuals from HP and bHP exhibit clear business cycles that capture the trough in all
recessions. The GFC was the most severe prior to 2020, but it was dwarfed in magnitude by the Covid-19 recession. 
The cyclical component took 1 year to return to positive after Covid-19.

The estimated cycle by HRF in Figure \ref{fig:GDP-q-21} shows some peculiarity. 
In all of the 21st-century three recessions, it take 2 years to recover, as marked by the dotted vertical lines. In particular, we observed an unprecedented peak in 2022:Q2, which is the consequence of the colossal slump in 2020:Q2 entering the 8-period ahead predictive model for the first time. 
These reversions are symptomatic of compensating over-prediction that can occur with autoregressive data fits,\footnote{See Figure \ref{fig:HRF-h} for $h = 1, 4, 8$ and $12$ for this built-in feature of HRF.}
which in turn accentuate the swings in actual economic activity.



\subsection{Full FRED Databases\label{sec:Empirical-Application}}

This section gives a `big
data' application, making full use of the time series in these 
two databases to draw an overall picture of trend and growth cycles in the US economy.
There are 246 quarterly series in the FRED-QD database and 127 series in the FREQ-MD database. This broad range of time series includes a mixture of trend behaviors. 

The quarterly data findings are presented first. FRED-QD is a comprehensive collection
of macroeconomic time series, covering 14 categories of economic activity concerning, for example,
the national income and product accounts, the labor market covering unemployment,
earnings and productivity, and financial markets covering interest rates, money and credit.
Most of the 246 individual time series are complete, with 258 quarters
spanning 1959:Q1 to 2023:Q2. A few sequences have missing values and the shortest one has 121 quarters. Many of the time series show strong trend behavior, such as the real GDP, the consumer price index, and stock market indices such as the S\&P 500. Other series have random wandering behavior with a more limited range, such as the unemployment rate and
interest rates, although the historical patterns vary considerably over the sample period. Our approach to the analysis of the time series in these databases is agnostic, using the HP, bHP and HRF methodologies to filter each
individual time series without any pre-processing.
Regardless of the sample size, we maintain the use of the standard setting $\lambda=1600$, just as in the simulations.

These time series have their own distinctive scales of measurement. GDP, for instance, is in
trillions of US dollars whereas interest rates are reported in small percentage terms.
To make the time series comparable for group assessment we take the filtered residuals
and standardize each estimated residual sequence to have unit sample
variance. These standardized residuals provide rough individual measures of the nature and form of business
cycles over the historical period.  

\begin{figure}
\includegraphics[scale=0.4]{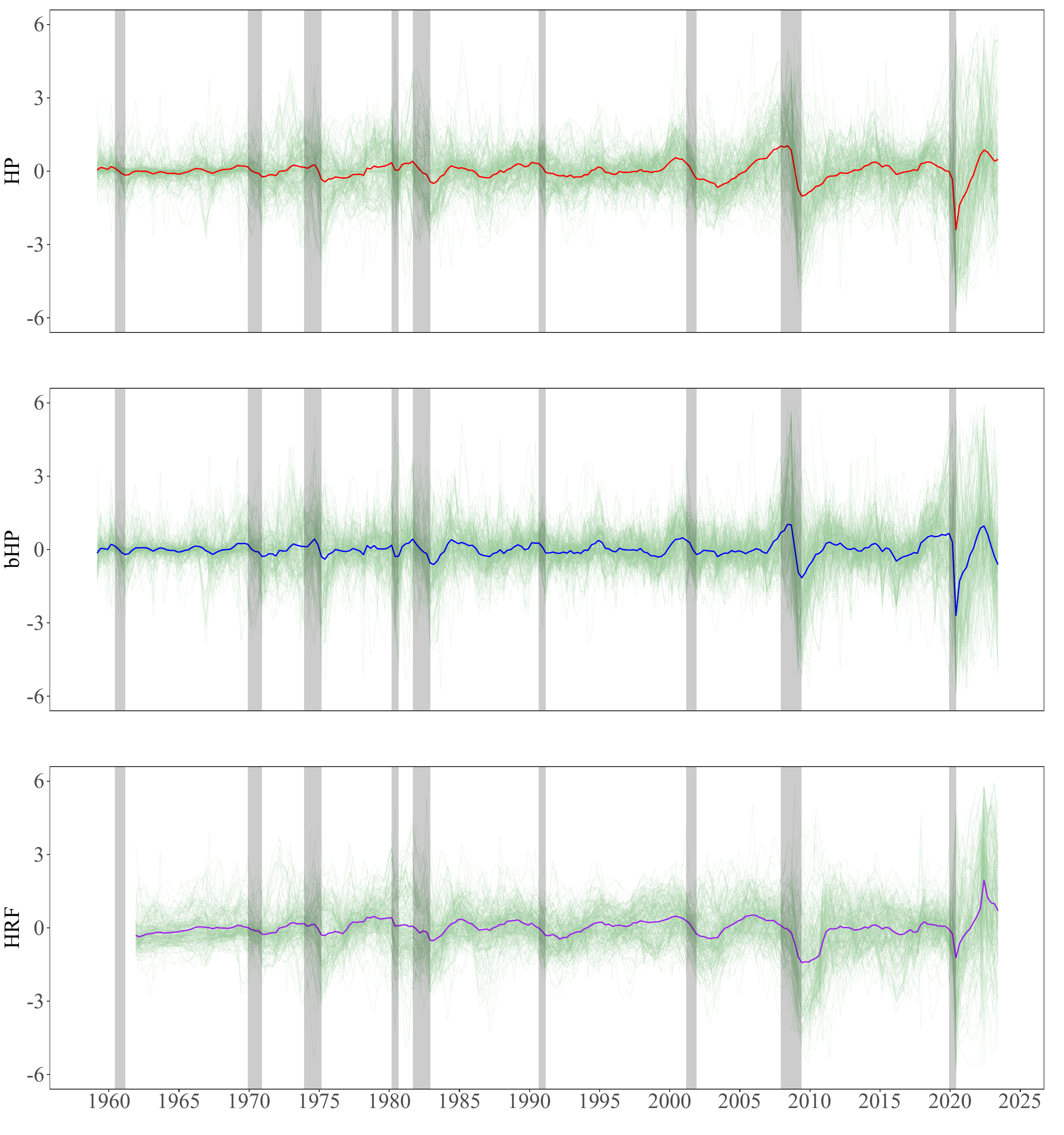}

\caption{\label{fig:Aggregate-q} Aggregate Cyclical Indices from FRED-QD}
\end{figure}

Of particular interest are the shapes of the estimated cycles around periods of 
recession. Some of the time series require reorientation for this purpose as their natural movements reverse normal directions during recessions, e.g., by uptrending in recession periods. These are manually identified and their signs changed. Thus, while most variables decline as economic activity diminishes, unemployment rates rise and so flipping the sign of the unemployment rate to negative brings the direction of movement in line with the majority of the other variables. Accordingly, upon inspection minus signs are assigned to ID 58-72 and 197 
(unemployment) and ID 158-162 (money stocks) in the database.


The mass of thin green lines in Figure \ref{fig:Aggregate-q} show the cyclical
components of the time series as estimated by the HP filter (upper panel), the 
bHP (middle panel), and the HRF (lower panel).
These individual residual series are evidently noisy, and the collection of 246 lines in a single graph merge together into a dense green shading where individual movements are hardly visible.
We add solid
lines (red for HP, blue for bHP, violet for HRF) to aggregate the individual sequences by simple cross section 
averaging into ensemble indices. These indices provide summary indicator measures of the business cycle status of
the economy that broadly reflect the movements of the individual time series estimates. 

The average indices generated by the HP and bHP filters
move in a fairly synchronized way, particularly around the recession periods identified by the NBER which are marked by the grey shaded areas.
For instance, after the dot-com bubble burst in March 2000 these indices each decline to the trough during the 2001 recession of the early 2000s, matching the contraction phase of the business cycle. The bHP index then begins to rise again with the recovery, whereas the HP index is slower and takes much longer to recover. For the great recession of 2008--2009, these two indices again decline to a trough as the financial sector turmoil spreads through the real economy during the economic contraction and then subsequently go through a slow return to normalcy matching the slow recovery with the HP index again taking longer to recover. The recent Covid-19 recession was of much greater scale and the recovery followed swiftly. Outside of
the recession periods, business and growth cycles are still visible in the movement of the indices, such as 
the long expansion in 1990s and the strong economic activity prior to the Covid pandemic.
Overall, from the historical data these two indices appear in line with consensus perceptions of US economic activity. Although the differences between these two indices are subtle, the bHP
filtered index seems more responsive to historical movements in the data and more in line with the chronology of the recession phases than the HP filter. 
The results for the HRF again feature the
enormous peak in 2022:Q2, and the lagged recovery after the dot-com bubble and GFC. 


\begin{figure}
\includegraphics[scale=0.4]{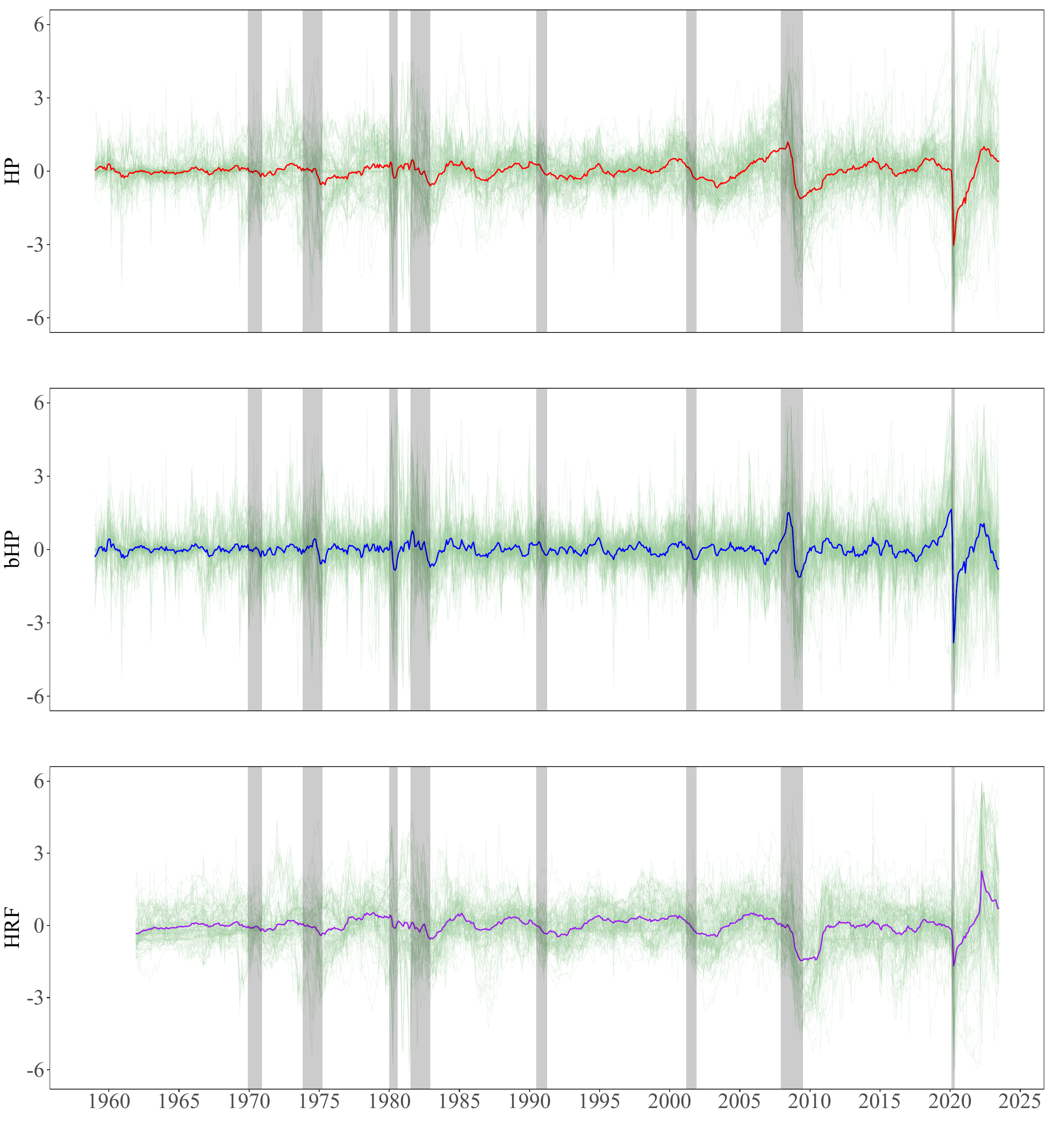}
\caption{\label{fig:Aggregate-m} Aggregate Cyclical Indices from FRED-MD}{}
\end{figure}

To check the robustness of these findings from quarterly data, parallel exercises are
conducted for the 127 individual time series in the monthly FRED-MD database. The tuning parameter $\lambda=129600$ is used for HP and bHP, and a regression filter with $(h,p)=(24,12)$ is employed. Negative signs are given to series 25--31 (unemployment) and
series 70--73 (money stock). The scale-standardized residuals
are plotted in Figure \ref{fig:Aggregate-m} again in thin green lines and the colored solid lines are simple averages of these. 
Similar patterns are observed from these monthly data and the filtered indices
are now evident with fine grain monthly movements. For example, 
in the recovery episode after Covid-19,
the aggregate HP filtered index are relatively smooth whereas the bHP index shows
small fluctuations that accord with waves of new Covid variants and the effects of lockdowns on economic activity. 



\section{Conclusion} 

This paper extends the analysis of the boosted HP filter
to higher order integrated processes and time series with roots that are local to unity, possibly coupled with polynomial time trends and structural breaks. 
The primary asymptotic effect of the HP filter is to smooth the underlying stochastic trend, a property that typically leads to inconsistent estimation but still provides a general picture of how the trend has evolved historically. Boosting the HP filter by repeated application ensures consistent estimation of the full limiting trend process and allows for a rich combination of possible trend behavior including breaks. Practical implementation is facilitated by a data-driven procedure that gives researchers an automated machine learning tool of empirical analysis. 

The HP and bHP methods of trend estimation are motivated by the penalized likelihood approach initiated by \cite{whittaker1922new} which seeks in the context of a model such as \eqref{eq:y_decomp} to find the `most probable' trend function $f_t$ over a historical period. Trends, as is now well understood, encompass a vast range of possible behavior in which future behavior is not always predictable from the past. It is for this reason that economists are so often wrong in assessing macroeconomic and financial market tendencies, to wit: whether inflation will be transient or sustained, whether there is stock market exuberance and potentially serious consequences of collapse,\footnote{Recall Queen Elizabeth II's timely and famously insightful comment at the London School of Economics (5 November, 2008) on the 2008 GFC collapse that `It's awful --- why did nobody see it coming?'
}
when a recession will occur, how extensive it will be, how long a recovery will take and whether an earlier trend path will be resumed. In a model of the form \eqref{eq:y_decomp} representing behavior over an historical period, such characteristics are implicitly embodied.


The alternative paradigm of pure predictive modeling was strongly advocated by \cite{hamilton2018you} and motivated by the following characterization:

\begin{quotation}
	``Here I suggest an alternative concept of what we might mean by
	the cyclical component of a possibly nonstationary series: How different
	is the value at date $t+h$ from the value that we would have expected
	to see based on its behavior through date $t$?'' \citep[p.~836]{hamilton2018you}
\end{quotation}

\begin{figure}
\includegraphics[scale=0.4]{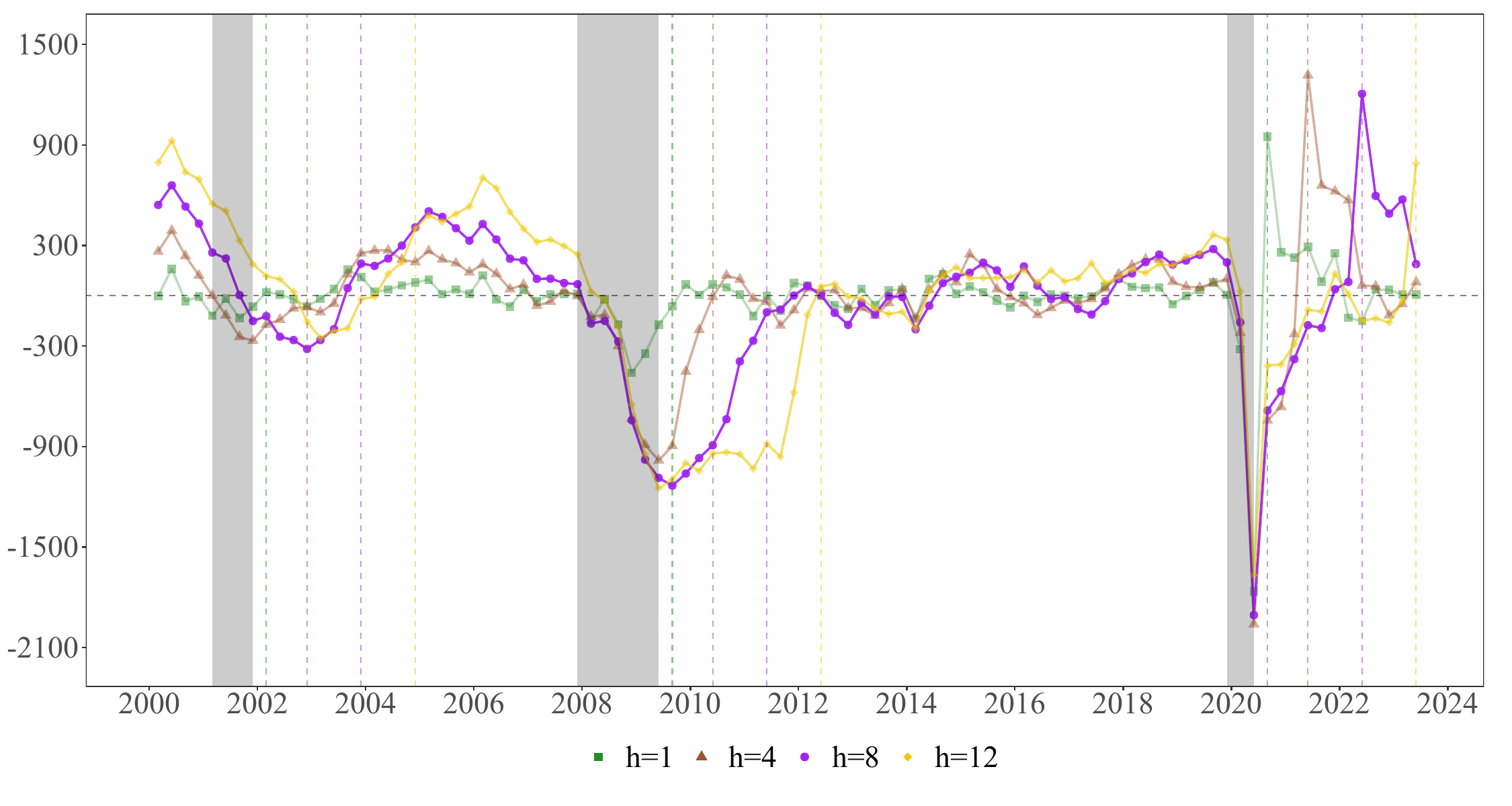}
\caption{\label{fig:HRF-h} Cyclical Component of US GDP under $h=1, 4, 8$, and $12$}{}
\end{figure}

\noindent This conceptualization acknowledges the possibility of model misspecification in that future paths, including cycles but also trends, may differ from those that may be anticipated from past observations. In doing so, the view implicitly accepts that predictive models such as autoregressions may well be incompatible with historical models such as (\ref{eq:y_decomp}) that seek to graduate the data to understand the trends and cycles that have occurred in the past. 
The choice of $h$, based on the researcher's prior knowledge of length of the business cycles, plays a fundamental role. In Figure \ref{fig:HRF-h}, we plot the estimated cyclical component of the US GDP under $h=1, 4, 8,$ and $12$. When $h=8$, the curve is exactly the same as the bottom-right panel of Figure \ref{fig:GDP-q-21}.
The dotted vertical lines are $h$-period after the end of the recessions, and the colors are associated with those of the curves. 
According to this graph, how long the economy recovers from the recession is not determined by the data, but by the researcher's choice of $h$. This feature is particularly pronounced in the enormous peaks after the Covid-19 recession.

Trends and cycles in real world economies are complex and often evolve in unanticipated ways even though the mechanisms and behaviors that drive them may well have some common characteristics \citep{reinhart2009time}. As \citetalias{phillips2021boosting} (p.~551) remarked:

\begin{quotation}
 ``... the period and intensity of business cycles and recessions are so noted for their irregularity that these features are embodied in the many popular descriptive terminologies that are given to them, among which we may mention the terms great depression, great moderation, great recession, short sharp recession, and long recovery.'' 
\end{quotation} 

Such phenomena are difficult to faithfully capture by a supervised learning method like an autoregression that uses only lagged observations to predict future outcomes. 
By contrast, bHP is a nonparametric unsupervised machine learning approach
that is capable of extracting a wide class of underlying trends and cycles from the data. In doing so, the method accommodates historical decomposition of the type \eqref{eq:y_decomp} and recognizes the existence of underlying trend and cycle formations that may take irregular and unpredictable forms. 
The asymptotic theory, simulations and empirical applications reported here all corroborate these advantages.

\bigskip

{\singlespacing
\footnotesize
\bibliographystyle{chicagoa}
\bibliography{bhp2}
}
\processdelayedfloats
\newpage{}

\setcounter{footnote}{0}
\setcounter{table}{0} 
\setcounter{figure}{0} 
\setcounter{equation}{0} 

\renewcommand{\thefootnote}{\thesection.\arabic{footnote}} 
\renewcommand{\theequation}{\thesection.\arabic{equation}} 
\renewcommand{\thefigure}{\thesection.\arabic{figure}} 
\renewcommand{\thetable}{\thesection.\arabic{table}} 

\renewcommand*{\thefootnote}{\fnsymbol{footnote}}

\newtheorem{theorem}{Theorem}[section]
\newtheorem{lemma}{Lemma}[section]
\newtheorem{corollary}{Corollary}[section]

\theoremstyle{remark}
\newtheorem{remark}{Remark}[section]

\newpage
\onehalfspacing
\appendix

\bigskip 
\bigskip 

\begin{center}
\LARGE Online Appendix for ``The boosted HP filter is \\more general than you might think''\footnote[1]{ Phillips acknowledges research support from a Kelly Fellowship at the University of Auckland. Shi thanks the National Natural Science Foundation of China (NSFC) for financial support under the grant number 72133002.}
\end{center}

\bigskip

\begin{center}
\large{
Ziwei Mei$^{a}$, Peter C. B. Phillips$^{b,c,d,e}$ and Zhentao Shi$^{a}$ \vspace{2mm}
\\ \small $^{a}$The Chinese University of Hong Kong 
\\ \small $^{b}$University of Auckland, $^{c}$Yale University, 
\\ \small $^{d}$Singapore Management University, $^{e}$University of Southampton 
}
\end{center}

\bigskip

\begin{quotation} 
This supplementary document has two sections. Section \ref{sec:proof} provides proofs of the theoretical statements in the main text. Section \ref{sec:numerical} corroborates by numerical simulations the asymptotic implications of Proposition \ref{pop:hpltu}, and reports the MSEs under the $t(5)$ errors.
\end{quotation} 

\section{Proofs}\label{sec:proof}

\subsection{Preparatory Results}

To simplify the proof of Lemma \ref{lem:m-operator}, we start with the following preliminary result for the simple case where $a\neq0$ and $m=1$. For any $a\in\mathcal{A}\left(n\right)\backslash\{0\}$,
define the two complex numbers 
\begin{align*}
\delta & ={\rm e}^{2a/n}\left[\frac{n}{a}\left(1-{\rm e}^{-a/n}\right)\right]^{4}-1,\\
\zeta & =\dfrac{\mu a^{4}\left(1+\delta\right)}{\mu a^{4}\left(1+\delta\right)+1}-\dfrac{\mu a{}^{4}}{1+\mu a{}^{4}}.
\end{align*}

\begin{lemma}\label{lem:lem2} Let $\lambda=\mu n^{4}$ and $a\in\mathcal{A}\left(n\right)\backslash\{0\}$.
Then 
\begin{equation}
\left(1-G_{\lambda}\right){\rm e}^{at/n}=\left(\dfrac{\mu a^{4}}{\mu a^{4}+1}+\zeta\right){\rm e}^{at/n},\label{eq:Lem2-1}
\end{equation}
 and the real part and the imaginary part of $\zeta$ are both of
order $O\left(|a|/n\right)$. 
\end{lemma}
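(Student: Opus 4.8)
The plan is to evaluate $(1-G_\lambda)\mathrm{e}^{at/n}$ by brute force, exploiting the fact that $\mathrm{e}^{at/n}$, viewed as a function of $t$, is an eigenfunction of the lag operator: $\mathbb{L}\,\mathrm{e}^{at/n} = \mathrm{e}^{-a/n}\mathrm{e}^{at/n}$, hence of every rational function of $\mathbb{L}$. First I would apply the denominator operator $\mu\mathbb{L}^{-2}[n(1-\mathbb{L})]^4 + 1$ to $\mathrm{e}^{at/n}$; since $\mathbb{L}^{-2}$ multiplies by $\mathrm{e}^{2a/n}$ and $[n(1-\mathbb{L})]^4$ multiplies by $[n(1-\mathrm{e}^{-a/n})]^4$, this returns $(\mu w + 1)\,\mathrm{e}^{at/n}$ with $w := \mathrm{e}^{2a/n}[n(1-\mathrm{e}^{-a/n})]^4 = a^4\,\mathrm{e}^{2a/n}\bigl[\tfrac{n}{a}(1-\mathrm{e}^{-a/n})\bigr]^4 = a^4(1+\delta)$, the last identity being exactly the definition of $\delta$. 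Dividing by $\mu w + 1$ (its non-vanishing is checked at the end) gives $G_\lambda\mathrm{e}^{at/n} = (\mu w+1)^{-1}\mathrm{e}^{at/n}$ and therefore
\[
(1-G_\lambda)\mathrm{e}^{at/n} = \frac{\mu w}{\mu w+1}\,\mathrm{e}^{at/n} = \frac{\mu a^4(1+\delta)}{\mu a^4(1+\delta)+1}\,\mathrm{e}^{at/n} = \left(\frac{\mu a^4}{\mu a^4+1} + \zeta\right)\mathrm{e}^{at/n},
\]
where the last step merely invokes the definition of $\zeta$; this establishes \eqref{eq:Lem2-1}.

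Next I would bound $\delta$. Writing $z := a/n$, note $\tfrac{n}{a}(1-\mathrm{e}^{-a/n}) = (1-\mathrm{e}^{-z})/z$ is an entire function of $z$ with value $1 - z/2 + O(z^2)$ near $0$, so its fourth power equals $1 - 2z + O(z^2)$; multiplying by $\mathrm{e}^{2z} = 1 + 2z + O(z^2)$ cancels the linear term, giving $\mathrm{e}^{2z}\bigl[(1-\mathrm{e}^{-z})/z\bigr]^4 = 1 + O(z^2)$, i.e.\ $\delta = O(z^2) = O\bigl((|a|/n)^2\bigr)$. (Thus the factor $\mathbb{L}^{-2}$, i.e.\ $\mathrm{e}^{2a/n}$, is inserted precisely to eliminate the $O(z)$ term; this is the feature that makes the residual-operator approximation sharper than a naive one.) Because $a \in \mathcal{A}(n)$ forces $a^4 \in [0,(\log n)^2]$ and hence $|a| \le (\log n)^{1/2}$, this expansion is uniform: $|\delta| \le C|a|^2/n^2$ for an absolute constant $C$ and all $a\in\mathcal{A}(n)\setminus\{0\}$, so in particular $|\delta|\le 1/2$ once $n$ is large.

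Finally I would estimate $\zeta$. Since $a^4$ is a nonnegative real, $b := \mu a^4 \ge 0$ is real, and a one-line manipulation gives the closed form
\[
\zeta = \frac{b(1+\delta)}{b(1+\delta)+1} - \frac{b}{b+1} = \frac{b\,\delta}{\bigl(b(1+\delta)+1\bigr)(b+1)}.
\]
From $|\delta|\le 1/2$ we get $\mathrm{Re}(1+\delta)\ge 1/2$, hence $\mathrm{Re}\bigl(b(1+\delta)+1\bigr) = b\,\mathrm{Re}(1+\delta)+1 \ge 1$, so $b(1+\delta)+1 \ne 0$ (this retroactively justifies the inversion above) and $|b(1+\delta)+1| \ge 1$; together with $|b/(b+1)| \le 1$ this yields $|\zeta| \le |\delta| = O\bigl((|a|/n)^2\bigr) = O(|a|/n)$, and since $|\mathrm{Re}\,\zeta|$ and $|\mathrm{Im}\,\zeta|$ are both at most $|\zeta|$, both parts of $\zeta$ are $O(|a|/n)$, as claimed. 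The only place demanding any care is the uniform Taylor control of $\delta$ over the (complex) parameter set $\mathcal{A}(n)$ together with the exact cancellation of its first-order term; but once the slow bound $|a|\le(\log n)^{1/2}$ on the parameter set is in hand, nothing here is deep and the remainder is bookkeeping.
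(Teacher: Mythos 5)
Your proof is correct and follows essentially the same route as the paper's: compute the exact eigenvalue of $(1-G_{\lambda})$ on $\mathrm{e}^{at/n}$ to get the identity with $\delta$, Taylor-expand $\delta$ uniformly over $\mathcal{A}(n)$, and bound $\zeta$ by showing the relevant denominator stays away from zero. The differences are only refinements of the same argument --- you read off the eigenvalue directly from $\mathbb{L}\,\mathrm{e}^{at/n}=\mathrm{e}^{-a/n}\mathrm{e}^{at/n}$ rather than via the paper's integral (pseudo-differential) representation of $G_{\lambda}$, you notice the first-order cancellation that makes $\delta=O(|a|^{2}/n^{2})$ rather than merely the $O(|a|/n)$ the paper records, and you bound $\zeta$ through the single closed form $b\delta/\bigl[(b(1+\delta)+1)(b+1)\bigr]$ with $b=\mu a^{4}\ge 0$ instead of the paper's two-term split $s_{1}+s_{2}$; all three variants are valid and, if anything, slightly sharper.
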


\begin{proof}[Proof of Lemma \ref{lem:lem2}]
 Using the operator calculus in \citetalias{phillips2021business} (p.~510) we have 
\begin{eqnarray}
G_{\lambda}{\rm e}^{at/n} & = & \dfrac{1}{1+\mu\mathbb{L}^{-2}[n(1-\mathbb{L})]^{4}}{\rm e}^{at/n}=\int_{0}^{\infty}{\rm e}^{-\{\mu\mathbb{L}^{-2}[n(1-\mathbb{L})]^{4}+1\}s}{\rm e}^{at/n}{\rm d}s\nonumber \\
 & = & \int_{0}^{\infty}{\rm e^{-s}}\sum_{j=0}^{\infty}\dfrac{(-s)^{j}}{j!}[\mu\mathbb{L}^{-2}[n(1-\mathbb{L})]^{4}]^{j}{\rm e}^{at/n}{\rm d}s.\label{eq:G_lam}
\end{eqnarray}
Notice that 
\begin{eqnarray*}
 &  & [\mu\mathbb{L}^{-2}[n(1-\mathbb{L})]^{4}]^{j}{\rm e}^{at/n}\\
 & = & \mu^{j}n^{4j}(1-\mathbb{L})^{4j}{\mathrm{e}}^{a(t+2j)/n}
 =\mu^{j}n^{4j}\sum_{k=0}^{4j}(-1)^{k}{4j \choose k}{\mathrm{e}}^{a(t+2j-k)/n}\\
 & = & \mu^{j}n^{4j}{\rm e}^{a(t+2j)/n}\sum_{k=0}^{4j}(-1)^{k}{4j \choose k}{\rm e}^{-ak/n}=\mu^{j}n^{4j}{\rm e}^{a(t+2j)/n}(1-{\rm e}^{-a/n})^{4j}\\
 & = & \left[\mu{\rm e}^{2a/n}n^{4}(1-{\rm e}^{-a/n})^{4}\right]^{j}{\rm e}^{at/n},
\end{eqnarray*}
using the binomial expansion $(x+y)^{4j}=\sum_{k=0}^{4j}{4j \choose k}x^{k}y{}^{4j-k}$. Evaluation of (\ref{eq:G_lam}) gives 
\begin{align*}
G_{\lambda}{\rm e}^{at/n} & =\int_{0}^{\infty}{\rm e^{-s}}\sum_{j=0}^{\infty}\dfrac{(-s)^{j}}{j!}\left[\mu{\rm e}^{2a/n}n^{4}(1-{\rm e}^{-a/n})^{4}\right]^{j}{\rm e}^{at/n}{\rm d}s \\
 & =\int_{0}^{\infty}{\rm e}^{-\{\mu{\rm e}^{2a/n}n^{4}(1-{\rm e}^{-a/n})^{4}+1\}s}{\rm e}^{at/n}{\rm d}s \\
 & =\text{\ensuremath{\dfrac{{\rm e}^{at/n}}{\mu{\rm e}^{2a/n}n^{4}(1-{\rm e}^{-a/n})^{4}+1}}} 
   =\text{\ensuremath{\dfrac{{\rm e}^{at/n}}{\mu a^{4}\left(1+\delta\right)+1}}},
\end{align*}
by the definition of $\delta$. The residual operator yields
\[
\left(1-G_{\lambda}\right){\rm e}^{at/n}=\left(\dfrac{\mu a^{4}\left(1+\delta\right)}{\mu a^{4}\left(1+\delta\right)+1}\right){\rm e}^{at/n}=\left(\dfrac{\mu a^{4}}{\mu a^{4}+1}+\zeta\right){\rm e}^{at/n}
\]
by the definition of $\zeta$, verifying (\ref{eq:Lem2-1}).

To establish the order of $\zeta$, we first check the order of
$\delta$ by working on its two factors ${\rm e}^{a/n}$ and $\frac{n}{a}\left(1-{\rm e}^{-a/n}\right)$
one by one. Taylor expansion of ${\rm e}^{a/n}$ for $a \in \mathcal{A}(n)$
gives 
\[
{\rm e}^{a/n}=1+\sum_{j=1}^{\infty}\dfrac{1}{j!}\left(\dfrac{a}{n}\right)^{j}=1+O\left(\frac{\left|a\right|}{n}\right).
\]
Similar expansion of ${\rm e}^{-a/n}$ gives 
\begin{align*}
\frac{n}{a}\left(1-{\rm e}^{-a/n}\right) & =\frac{n}{a}\left(1-\sum_{j=0}^{\infty}\dfrac{(-1)^{j}}{j!}\left(\dfrac{a}{n}\right)^{j}\right)=-\sum_{j=1}^{\infty}\dfrac{(-1)^{j}}{j!}\left(\dfrac{a}{n}\right)^{j-1}\\
 & =1-\sum_{j=2}^{\infty}\dfrac{(-1)^{j}}{j!}\left(\dfrac{a}{n}\right)^{j-1}=1+O\left(\frac{\left|a\right|}{n}\right).
\end{align*}
Substituting these terms into $\delta$ we have 
\begin{align}
\delta & =\left[{\rm e}^{a/n}\right]^{2}\left[\frac{n}{a}\left(1-{\rm e}^{-a/n}\right)\right]^{4}-1=\left(1+O\left(\dfrac{|a|}{n}\right)\right)^{6}-1
 =O\left(\dfrac{|a|}{n}\right).\label{eq:diffRate}
\end{align}
Next, decomposing $\zeta$ we have
\begin{align*}
\zeta & =\dfrac{\mu a{}^{4}\delta}{\mu a^{4}\left(1+\delta\right)+1}+\left(\dfrac{\mu a{}^{4}}{\mu a^{4}\left(1+\delta\right)+1}-\dfrac{\mu a{}^{4}}{1+\mu a{}^{4}}\right)\\
 & =\dfrac{\delta}{1+\delta+1/(\mu a{}^{4})}+\left(\dfrac{1}{1+\delta+1/(\mu a{}^{4})}-\dfrac{1}{1+1/(\mu a{}^{4})}\right)=:s_{1}+s_{2}.
\end{align*}
By the triangle inequality and \eqref{eq:diffRate}, the modulus of the denominator of $s_{1}$ satisfies 
\[
|1+\delta+1/(\mu a{}^{4})|\geq\left|1+1/(\mu a^{4})\right|-\left|\delta\right|\geq1-\left|\delta\right|=1-O\left(\left|a\right|/n\right)\geq1/2,
\]
for $n$ sufficiently large, 
bounded away from the origin. 
So the modulus of $s_{1}$ has order no larger than its numerator, which implies that $|s_{1}|=O(\delta)=O(\left|a\right|/n)$.
Similarly, we have 
\[
\left|s_{2}\right|=\left|\dfrac{\delta}{\left(1+\delta+1/(\mu a{}^{4})\right)\left(1+1/(\mu a{}^{4})\right)}\right|\leq\left|\dfrac{\delta}{1+\delta+1/(\mu a{}^{4})}\right|=\left|s_{1}\right|=O\left(\left|a\right|/n\right),
\]
thereby confirming that $\zeta=O\left(|a|/n\right)$
since $\left|\zeta\right|\leq\left|s_{1}\right|+\left|s_{2}\right|=O\left(\left|a\right|/n\right)$. 
\end{proof}
\medskip

We extends the operator algebra in
Lemma \ref{lem:lem2} to general $m$ to  prove Lemma \ref{lem:m-operator}. 

\begin{proof}[Proof of Lemma \ref{lem:m-operator}]
When $a=0,$ we have for $m \ge 1$
\[
\left(1-G_{\lambda}\right)^{m}\mathrm{e}^{at/n}=\left(1-G_{\lambda}\right)^{m}1=\left(\dfrac{\mu n^{4}\mathbb{L}^{-2}}{1+\mu\mathbb{L}^{-2}[n(1-\mathbb{L})]^{4}}\right)^{m}(1-\mathbb{L}){}^{4m}1=0,
\]
as well as 
\begin{align*}
\left[\left(1-G_{\lambda}\right)^{m}-\left(\frac{\mu a^{4}}{\mu a^{4}+1}\right)^{m}\right]\mathrm{e}^{at/n} & =\left[\left(1-G_{\lambda}\right)^{m}-0\right]1=0.
\end{align*}
We now focus on the case when $a\neq0.$ 

\textbf{Part (a}). When the operator $(1-G_{\lambda})$ is repeatedly
applied $m$ times with $m$ a fixed integer, (\ref{eq:Lem2-1})
yields
\begin{align}
(1-G_{\lambda})^{m}{\rm e}^{at/n} & =\left[\dfrac{\mu a{}^{4}}{1+\mu a{}^{4}}+\zeta\right]^{m}{\rm e}^{at/n}\nonumber \\
 & =\left[\left(\dfrac{\mu a{}^{4}}{1+\mu a{}^{4}}\right)^{m}+\sum_{j=1}^{m}{m \choose j}\left(\dfrac{\mu a{}^{4}}{1+\mu a{}^{4}}\right)^{m-j}\zeta^{j}\right]{\rm e}^{at/n}, \label{eq:lem1-m}
\end{align}
where the second equality holds by binomial expansion. Rearranging
this equation and taking the modulus gives
\begin{eqnarray*}
 &  & \left|\left[(1-G_{\lambda})^{m}-\left(\dfrac{\mu a{}^{4}}{1+\mu a{}^{4}}\right)^{m}\right]{\rm e}^{at/n}\right|\\
 & = & \left|\sum_{j=1}^{m}{m \choose j}\left(\dfrac{\mu a{}^{4}}{1+\mu a{}^{4}}\right)^{m-j}\zeta^{j}{\rm e}^{at/n}\right|\leq\left|\sum_{j=1}^{m}{m \choose j}\left(\dfrac{\mu a{}^{4}}{1+\mu a{}^{4}}\right)^{m-j}\zeta^{j}\right|\left|{\rm e}^{at/n}\right|\\
 & \leq & {\rm e}^{\left|a\right|}\sum_{j=1}^{m}{m \choose j}\left(\dfrac{\mu a{}^{4}}{1+\mu a{}^{4}}\right)^{m-j}\left|\zeta\right|^{j}\leq{\rm e}^{\left|a\right|}\sum_{j=1}^{m}{m \choose j}\left|\zeta\right|^{j}\\
 & \leq & m\left|\zeta\right|{\rm e}^{\left|a\right|}\left(1+o\left(1\right)\right)=O\left(n^{-1}|a|{\rm e}^{\left|a\right|}\right),
\end{eqnarray*}
as $t \le n$, $0<\dfrac{\mu a{}^{4}}{1+\mu a{}^{4}}\leq1$
and $\zeta=O\left(|a|/n\right)$. Since $|a|\leq\sqrt{\log n}$ for any $a\in\mathcal{A}\left(n\right)$,
it follows that  
\begin{align*}
n^{-1}|a|{\rm e}^{\left|a\right|} & \leq n^{-1}\sqrt{\log n}\exp\left(\sqrt{\log n}\right)\\
 & =\exp\left(\sqrt{\log n}+\log\sqrt{\log n}-\log n\right)\to  0
\end{align*}
as $n\to\infty$, giving Part (a) for fixed $m$. 

\textbf{Part (b}). Taking the modulus of (\ref{eq:lem1-m}), we have
\begin{eqnarray*}
\left|(1-G_{\lambda})^{m}{\rm e}^{at/n}\right| & = & \left|\left[\dfrac{\mu a{}^{4}}{1+\mu a{}^{4}}+\zeta\right]^{m}{\rm e}^{at/n}\right|\leq\left|\dfrac{\mu a{}^{4}}{1+\mu a{}^{4}}+\zeta\right|^{m}\left|{\rm e}^{at/n}\right|\\
 & \leq & \left|\dfrac{\mu a{}^{4}}{1+\mu a{}^{4}}+\left|\zeta\right|\right|^{m}{\rm e}^{\left|a\right|}=\left|1-\dfrac{1-|\zeta|(1+\mu a{}^{4})}{1+\mu a{}^{4}}\right|^{m}{\rm e}^{\left|a\right|}.
\end{eqnarray*}
For any $a\in\mathcal{A}\left(n\wedge m\right)\subseteq\mathcal{A}\left(n\right)$
the order of $\zeta$ in Lemma \ref{lem:lem2} remains valid and thus
\[
1-|\zeta|(1+\mu a{}^{4})=1-O(|a|/n+|a|^{5}/n)\geq1/2
\]
for $m,n$ sufficiently large. The above inequality, 
together with $\left|a\right|\leq\log\left(m\wedge n\right)\leq\log m$,
implies 
\begin{align*}
\left|(1-G_{\lambda})^{m}{\rm e}^{at/n}\right| & \leq\left(1-\dfrac{1/2}{1+\mu\left(\log m\right)^{2}}\right)^{m}{\rm e}^{\sqrt{\log m}}
\end{align*}
uniformly for $a\in\mathcal{A}(n\wedge m)$ and $t\leq n$ with $m,n$ sufficiently large, so that 
\[
\left(1-\dfrac{1/2}{1+\mu\left(\log m\right)^{2}}\right)^{m}{\rm e}^{\sqrt{\log m}}\to\lim_{m\to\infty}\exp\left(-\dfrac{m/2}{1+\mu\left(\log m\right)^{2}}+\sqrt{\log m}\right)=0,
\]
giving the stated result for Part (b).
\end{proof}
\medskip

The following Corollary \ref{cor R_sin_cos} is an immediate implication
of Lemma \ref{lem:m-operator} as the HP residual operator is repeatedly
applied to trigonometric functions with increasingly higher frequencies.

\begin{corollary} \label{cor R_sin_cos} 

Suppose $\lambda=\mu n^{4}$.
\begin{enumerate}
\item For any fixed $m\in\mathbb{N},$ if $K_{n}=\left\lfloor \pi^{-1}\sqrt{\log n}\right\rfloor $,
then
\begin{eqnarray*}
\sup_{1\leq t\leq n,\,k\leq K_{n}}\left|\left[\left(1-G_{\lambda}\right)^{m}-\left(\dfrac{\mu}{\mu+\lambda_{k}^{2}}\right)^{m}\right]\varphi_{k}\left(\dfrac{t}{n}\right)\right| & \to & 0 
\\
\sup_{1\leq t\leq n,\,k\leq K_{n}}\left|\left[\left(1-G_{\lambda}\right)^{m}-\left(\dfrac{\mu}{\mu+\lambda_{k}^{2}}\right)^{m}\right]\psi_{k}\left(\dfrac{t}{n}\right)\right| & \to & 0 
\end{eqnarray*}
 as $n\to\infty$. 
\item If $K_{n,m}=\left\lfloor \pi^{-1}\sqrt{\log(n\wedge m)}\right\rfloor $,
then 
\begin{eqnarray*}
\sup_{1\leq t\leq n,\,k\leq K_{n,m}}\left|\left(1-G_{\lambda}\right)^{m}\varphi_{k}(t/n) \right| & \to & 0\\\ 
\sup_{1\leq t\leq n,\,k\leq K_{n,m}}\left|\left(1-G_{\lambda}\right)^{m}\psi_{k}(t/n)\right| & \to & 0
\end{eqnarray*}
as $n,m\to\infty$.
\end{enumerate}
\end{corollary}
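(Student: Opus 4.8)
The plan is to reduce both parts to Lemma \ref{lem:m-operator} by passing from the real trigonometric eigenfunctions to complex exponentials. First I would write $\varphi_{k}(t/n)=\sqrt{2}\,\mathrm{Im}\bigl[\mathrm{e}^{a_{k}t/n}\bigr]$ and $\psi_{k}(t/n)=\sqrt{2}\,\mathrm{Re}\bigl[\mathrm{e}^{a_{k}t/n}\bigr]$ with $a_{k}:=\mathbf{i}/\sqrt{\lambda_{k}}=\mathbf{i}(k-\tfrac{1}{2})\pi$, observe that $a_{k}$ is purely imaginary so that $a_{k}^{4}=((k-\tfrac{1}{2})\pi)^{4}\ge 0$ is a non-negative real, and record the elementary identity $\mu a_{k}^{4}/(\mu a_{k}^{4}+1)=\mu/(\mu+\lambda_{k}^{2})$, which follows from $\lambda_{k}=1/[(k-\tfrac{1}{2})\pi]^{2}$. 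Since the residual operator $1-G_{\lambda}$ is a series in $\mathbb{L}$ and $\mathbb{L}^{-1}$ with real coefficients, it commutes with complex conjugation and hence with $\mathrm{Re}[\cdot]$ and $\mathrm{Im}[\cdot]$; because the scalar $\mu/(\mu+\lambda_{k}^{2})$ is also real, this yields $(1-G_{\lambda})^{m}\varphi_{k}(t/n)=\sqrt{2}\,\mathrm{Im}\bigl[(1-G_{\lambda})^{m}\mathrm{e}^{a_{k}t/n}\bigr]$ and $(\mu/(\mu+\lambda_{k}^{2}))^{m}\varphi_{k}(t/n)=\sqrt{2}\,\mathrm{Im}\bigl[(\mu a_{k}^{4}/(\mu a_{k}^{4}+1))^{m}\mathrm{e}^{a_{k}t/n}\bigr]$, with the analogues for $\psi_{k}$ obtained by replacing $\mathrm{Im}[\cdot]$ with $\mathrm{Re}[\cdot]$.

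Next I would check the index bookkeeping: for $k\le K_{n}=\lfloor\pi^{-1}\sqrt{\log n}\rfloor$ one has $(k-\tfrac{1}{2})\pi<k\pi\le\sqrt{\log n}$, so $a_{k}^{4}<(\log n)^{2}$ and hence $a_{k}\in\mathcal{A}(n)$; likewise $k\le K_{n,m}=\lfloor\pi^{-1}\sqrt{\log(n\wedge m)}\rfloor$ gives $a_{k}\in\mathcal{A}(n\wedge m)$. Using $|\mathrm{Im}[z]|\le|z|$ and $|\mathrm{Re}[z]|\le|z|$, the quantity inside the supremum in Part (a) is at most $\sqrt{2}\bigl|\bigl[(1-G_{\lambda})^{m}-(\mu a_{k}^{4}/(\mu a_{k}^{4}+1))^{m}\bigr]\mathrm{e}^{a_{k}t/n}\bigr|$, which is in turn dominated by $\sqrt{2}$ times the supremum over $1\le t\le n$, $a\in\mathcal{A}(n)$ that appears in Lemma \ref{lem:m-operator}(a); for fixed $m$ this tends to zero as $n\to\infty$, and taking the supremum over $t\le n$ and $k\le K_{n}$ on the left gives Part (a) (the $\psi_{k}$ case is identical with $\mathrm{Re}[\cdot]$). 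Part (b) is the same argument with $\mathcal{A}(n\wedge m)$ in place of $\mathcal{A}(n)$: $|(1-G_{\lambda})^{m}\varphi_{k}(t/n)|\le\sqrt{2}\,|(1-G_{\lambda})^{m}\mathrm{e}^{a_{k}t/n}|$, and the supremum of the right side over $t\le n$, $k\le K_{n,m}$ is bounded by $\sqrt{2}$ times the quantity in Lemma \ref{lem:m-operator}(b), which vanishes as $m,n\to\infty$.

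Since the corollary is billed as an immediate consequence of Lemma \ref{lem:m-operator}, I do not expect a real obstacle. The three points that need a little care are (i) verifying that $k\le K_{n}$ (resp.\ $K_{n,m}$) really places $a_{k}$ in $\mathcal{A}(n)$ (resp.\ $\mathcal{A}(n\wedge m)$); (ii) the elementary identity $\mu a_{k}^{4}/(\mu a_{k}^{4}+1)=\mu/(\mu+\lambda_{k}^{2})$; and (iii) the commutation of the real-coefficient operator $(1-G_{\lambda})^{m}$ with $\mathrm{Re}[\cdot]$ and $\mathrm{Im}[\cdot]$, which is what lets the uniform exponential bound in Lemma \ref{lem:m-operator} transfer to $\varphi_{k}$ and $\psi_{k}$ at the cost of only the harmless constant $\sqrt{2}$.
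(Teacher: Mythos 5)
Your argument is correct and follows essentially the same route as the paper's proof: both reduce the claim to Lemma \ref{lem:m-operator} by writing $\varphi_k$ and $\psi_k$ as the imaginary and real parts of $\sqrt{2}\,\mathrm{e}^{a_k t/n}$ with $a_k=\mathbf{i}/\sqrt{\lambda_k}$, verifying $a_k\in\mathcal{A}(n)$ (resp. $\mathcal{A}(n\wedge m)$) from the definition of $K_n$ (resp. $K_{n,m}$), and then separating real and imaginary parts. Your extra care about the operator commuting with $\mathrm{Re}[\cdot]$ and $\mathrm{Im}[\cdot]$ and the identity $\mu a_k^4/(\mu a_k^4+1)=\mu/(\mu+\lambda_k^2)$ only makes explicit what the paper leaves implicit.
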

\begin{proof}[Proof of Corollary \ref{cor R_sin_cos}]
 \textbf{Part (a}). The definitions of $\psi_{k}\left(\cdot\right)$
and $\varphi_{k}\left(\cdot\right)$ give 
\begin{equation*}
\left(1-G_{\lambda}\right)^{m}\left[\psi_{k}\left(\dfrac{t}{n}\right)+\mathbf{i}\varphi_{k}\left(\dfrac{t}{n}\right)\right]=\sqrt{2}\left(1-G_{\lambda}\right)^{m}\mathrm{e}^{\frac{\mathbf{i}(t/n)}{\sqrt{\lambda_{k}}}}. 
\end{equation*}
Let $a=\mathbf{i}/\sqrt{\lambda_{k}}$. We verify 
\[
a^{4}=\lambda_{k}^{-2}=\left[\left(k-1/2\right)\pi\right]^{4}\leq K_{n}^{4}\pi^{4}\leq\left(\log n\right)^{2}
\]
satisfies the condition $a\in\mathcal{A}(n)$, and then Lemma \ref{lem:m-operator}
(a) ensures that for any fixed $m$
\begin{equation*}
\sup_{1\leq t\leq n,\,k\leq K_{n}}\left|\left[\left(1-G_{\lambda}\right)^{m}-\left(\dfrac{\mu}{\mu+\lambda_{k}^{2}}\right)^{m}\right]\mathrm{e}^{\frac{\mathbf{i}(t/n)}{\sqrt{\lambda_{k}}}}\right|\to0 
\end{equation*}
 as $n\to\infty.$ 
We complete the proof 
by separating the imaginary and the real parts of $\exp\left(\frac{\mathbf{i}(t/n)}{\sqrt{\lambda_{k}}}\right)$,
respectively. 

\textbf{Part (b}). Similarly, for $a=\mathbf{i}/\sqrt{\lambda_{k}}$
we verify that $a^{4}\leq K_{n,m}^{4}\pi^{4}\leq\left(\log\left(n\wedge m\right)\right)^{2}$.
The fact $a\in\mathcal{A}(n\wedge m)$ allows us to invoke Lemma \ref{lem:m-operator}
(b):
\begin{equation}
\sup_{1\leq t\leq n\,k\leq K_{n,m}}\left|\left(1-G_{\lambda}\right)^{m}\mathrm{e}^{\frac{\mathbf{i}(t/n)}{\sqrt{\lambda_{k}}}}\right|\to0  \label{eq:mconv_exp-1}
\end{equation}
as $m,n\to\infty$, and then the results 
follow.
\end{proof}

\begin{remark}
\label{rem:three_HP} When setting $m=1$, Lemma \ref{lem:m-operator}
and Corollary \ref{cor R_sin_cos} immediately imply 
\begin{equation*}
\left(G_{\lambda}-\dfrac{1}{\mu c^{4}+1}\right)\mathrm{e}^{ct/n}\to 0 
\end{equation*}
for any $c\in\mathbb{\mathbb{R}}$
uniformly over all $t\leq n$,
and 
\begin{eqnarray}
\left(G_{\lambda}-\dfrac{\lambda_{k}^{2}}{\mu+\lambda_{k}^{2}}\right)\varphi_{k}\left(\dfrac{t}{n}\right) & \to & 0, \\
\nonumber
\left(G_{\lambda}-\dfrac{\lambda_{k}^{2}}{\mu+\lambda_{k}^{2}}\right)\psi_{k}\left(\dfrac{t}{n}\right) & \to & 0.\label{eq:conv_psi-2}
\end{eqnarray}
uniformly for all $k\leq K_{n}$ under consideration. These real exponential
functions, sine waves and cosine waves are the building blocks of
the series representations of the higher order integrated processes
and LUR processes.
\end{remark}

\subsection{Main Results}
\begin{proof}[Proof of Proposition \ref{pop:hpI2}]
 The KL representation of $B_{2}(r)$ specified in (\ref{eq:L2})
converges almost surely and uniformly in $[0,1]$. Let the $K_{n}$-term
finite KL representation be $B_{2}^{K_{n}}(r):=\sum\limits _{k=1}^{K_{n}}\lambda_{k}(\sqrt{2}-\psi_{k}(r))\xi_{k}$.
When $K_{n}\to\infty$ as $n\to\infty$, $\sup\limits _{0\leq t\leq n}\left|B_{2}(t/n)-B_{2}^{K_{n}}(t/n)\right|=o_{a.s.}(1)$
and by uniform convergence 
\begin{equation*}
\sup\limits _{0\leq t\leq n}\left|X_{n}\left(t/n\right)-B_{2}^{K_{n}}\left(t/n\right)\right|=o_{a.s.}(1).
\end{equation*}
It follows that $X_{n}(t/n)$ is almost surely uniformly well approximated
by $B_{2}^{K_{n}}(t/n)$ for $t\leq n$ as $n\to\infty$. Hence the HP
filtered trend has the following approximation 
\begin{align}
\frac{\widehat{f}_{t}^{\mathrm{HP}}}{n^{3/2}} & =G_{\lambda}\frac{x_{t}}{n^{3/2}}=G_{\lambda}\left[B_{2}^{K_{n}}\left(\frac{t}{n}\right)+o_{a.s.}(1)\right]\nonumber \\
 & =\sum_{k=1}^{K_{n}}\lambda_{k}\left[G_{\lambda}\left(\sqrt{2}-\psi_{k}\left(\frac{t}{n}\right)\right)\right]\xi_{k}+o_{a.s.}(1)\nonumber \\
 & =\sum_{k=1}^{K_{n}}\sqrt{2}\lambda_{k}\xi_{k}-\sum_{k=1}^{K_{n}}\lambda_{k}\xi_{k}G_{\lambda}\psi_{k}\Big(\dfrac{t}{n}\Big)+o_{a.s.}(1)\label{eq:fthatI2}
\end{align}
as $n\to\infty$. The $o_{a.s.}(1)$ in (\ref{eq:fthatI2}) holds
because the two sided moving average filter produced by the operator
$G_{\lambda}$ is an absolutely summable weighted moving average with
stable geometric decay, which preserves the error order by majorization. 

The asymptotic form of the HP filter can be written, according to
(\ref{eq:conv_psi-2}), as 
\begin{align*}
\dfrac{\hat{f}_{t,K_{n}}^{\text{HP}}}{n^{3/2}} & =\sum\limits _{k=1}^{K_{n}}\sqrt{2}\lambda_{k}\xi_{k}-\sum\limits _{k=1}^{K_{n}}\lambda_{k}\xi_{k}\left[\dfrac{\lambda_{k}^{2}}{\mu+\lambda_{k}^{2}}\psi_{k}\left(\dfrac{t}{n}\right)+o(1)\right]+o_{a.s.}(1)\\
 & =\sum\limits _{k=1}^{K_{n}}\sqrt{2}\lambda_{k}\xi_{k}-\sum\limits _{k=1}^{K_{n}}\dfrac{\lambda_{k}^{3}}{\mu+\lambda_{k}^{2}}\psi_{k}\left(\dfrac{t}{n}\right)\xi_{k}+o_{a.s.}(1).
\end{align*}
Note that $\lambda_{k}=1/[(k-\frac{1}{2})\pi]^{2}$ and $\frac{\lambda_{k}^{3}}{\mu+\lambda_{k}^{2}}=O(k^{-6})$.
Hence, the series $\sum_{k=1}^{\infty}\lambda_{k}\xi_{k}$ and $\sum_{k=1}^{\infty}\dfrac{\lambda_{k}^{3}}{\mu+\lambda_{k}^{2}}\psi_{k}\left(\frac{t}{n}\right)\xi_{k}$
converge almost surely and uniformly as $K_{n}\to\infty$. When $K_{n}=\left\lfloor \pi^{-1}\sqrt{\log n}\right\rfloor $
as $n\to\infty$, by Corollary \ref{cor R_sin_cos} (a) we obtain the following 
asymptotic form of the HP filter trend as 
\[
\dfrac{\hat{f}_{t}^{\text{HP}}}{n^{3/2}}=\sum\limits _{k=1}^{\infty}\left[\sqrt{2}\lambda_{k}-\dfrac{\lambda_{k}^{3}}{\mu+\lambda_{k}^{2}}\psi_{k}\left(\dfrac{t}{n}\right)\right]\xi_{k}+o_{a.s.}(1).
\]
The proof is completed.
\end{proof}

\begin{proof}[Proof of Theorem \ref{thmbhpIq}]
 When $q=1$, the convergence is already established in Theorem 1
of \citetalias{phillips2021boosting}. In this proof, we focus on $q\geq2.$ By the uniform convergence
law (\ref{uniformIq}) and the KL representation of the $I(q)$ process
in (\ref{eq:Lq}), the bHP estimated cycle has the following approximation
\begin{eqnarray*}
\dfrac{\widehat{c}_{t}^{(m)}}{n^{q-0.5}} & = & \left(1-G_{\lambda}\right)^{m}\dfrac{x_{t}}{n^{q-0.5}}=-\left(1-G_{\lambda}\right)^{m}\left[B_{q}^{K_{n,m}}\left(\dfrac{t}{n}\right)+o_{a.s.}(1)\right] \nonumber \\
 & = & -\sqrt{2}\sum\limits _{k=1}^{K_{n,m}}\xi_{k}(1-G_{\lambda})^{m}\left[\sum_{\ell=1}^{\lfloor q/2\rfloor}(-1)^{\ell-1}\lambda_{k}^{j}\dfrac{(t/n)^{q-2\ell}}{(q-2\ell)!}+\lambda_{k}^{q/2}\text{Im}\left[\left(-\mathbf{i}\right)^{q-1}\mathrm{e}^{\text{\ensuremath{\frac{\mathbf{i}(t/n)}{\sqrt{\lambda_{k}}}}}}\right]\right]\nonumber \\
 &  &  + o_{a.s.}(1) 
\end{eqnarray*}
as $n\to\infty$ with $K_{n,m}=\left\lfloor \pi^{-1}\sqrt{\log(n\wedge m)}\right\rfloor $
by Corollary \ref{cor R_sin_cos} (b). 

When $4m\geq q$ and $1\leq\ell\leq\lfloor q/2\rfloor,$ the polynomial component is 
\begin{eqnarray*}
 &  & \left(1-G_{\lambda}\right)^{m}(t/n)^{q-2\ell}\\
 & = & \dfrac{1}{\Gamma(m)}\int_{0}^{\infty}s^{m-1}\mathrm{e}^{-s\left(1+\lambda\mathbb{L}^{-2}(1-\mathbb{L})^{4}\right)}\left[\lambda\mathbb{L}(1-\mathbb{L})^{4}\right]^{m}\text{\ensuremath{\left(\dfrac{t}{n}\right)}}^{q-2\ell}ds\\
 & = & \dfrac{1}{\Gamma(m)}\int_{0}^{\infty}s^{m-1}\mathrm{e}^{-s}\sum_{j=0}^{\infty}\dfrac{(-1)^{j}s^{j}}{j!}\left[\lambda\mathbb{L}^{-2}(1-\mathbb{L})^{4}\right]^{m+j}\text{\ensuremath{\left(\dfrac{t}{n}\right)}}^{q-2\ell}ds\\
 & = & \dfrac{1}{\Gamma(m)}\int_{0}^{\infty}s^{m-1}\mathrm{e}^{-s}\sum_{j=0}^{\infty}\dfrac{(-1)^{j}s^{j}}{j!}\left[\lambda(1-\mathbb{L})^{4}\right]^{m+j}\text{\ensuremath{\left(\dfrac{t+2(m+j)}{n}\right)}}^{q-2\ell}ds\\
 & = & 0.
\end{eqnarray*}
For the cyclical functions, since  $\text{Im}\big((-\mathbf{i})^{q-1}\mathrm{e}^{\frac{\mathbf{i}(t/n)}{\sqrt{\lambda_{k}}}}\big)$
is either $\pm\cos\left(t/\left(n\sqrt{\lambda_{k}}\right)\right)$
or $\pm\sin\left(t/\left(n\sqrt{\lambda_{k}}\right)\right)$ and $\lambda_{k}^{q/2}=O(k^{-2})$,
the series $\sum\limits _{k=1}^{\infty}\lambda_{k}^{q/2}\xi_{k}$
converges almost surely and uniformly for all $t\leq n$. Thus, according
to (\ref{eq:mconv_exp-1}) we have 
\begin{align*}
 & \sum\limits _{k=1}^{K_{n,m}}\xi_{k}\left(1-G_{\lambda}\right)^{m}\left[\sum_{\ell=1}^{\lfloor q/2\rfloor}(-1)^{\ell-1}\lambda_{k}^{j}\dfrac{(t/n)^{q-2\ell}}{(q-2\ell)!}+\lambda_{k}^{q/2}\text{Im}\left[(-\mathbf{i})^{q-1}\mathrm{e}^{\frac{\mathbf{i}(t/n)}{\sqrt{\lambda_{k}}}}\right]\right]\\
= & \sum\limits _{k=1}^{K_{n,m}}\lambda_{k}^{q/2}\xi_{k}\mathrm{Im}\left((-\mathbf{i})^{q-1}\left[\left(1-G_{\lambda}\right)^{m}\mathrm{e}^{\frac{\mathbf{i}(t/n)}{\sqrt{\lambda_{k}}}}\right]\right)=\sum\limits _{k=1}^{K_{n,m}}\lambda_{k}^{q/2}\xi_{k}\cdot o\left(1\right)=o_{a.s.}\left(1\right)
\end{align*}
when $K_{n,m}=\left\lfloor \pi^{-1}\sqrt{\log(n\wedge m)}\right\rfloor .$
This confirms that $\widehat{c}_{t}^{(m)}/n^{q-0.5}=o_{a.s.}(1)$ uniformly
over $t\leq n$, and thus $n^{0.5-q}\cdot\hat{f}_{\lfloor nr\rfloor}^{(m)}\rightsquigarrow B_{q}(r)$
as stated. 
\end{proof}
\bigskip

Before establishing the results for the LUR case it is convenient to derive the following series representation 
\begin{eqnarray}
J_{c}(r) & = & B(r)+c\int_{0}^{r}\mathrm{e}^{(r-s)c}B(s)ds\nonumber \\
 & = & B(r)+\sqrt{2}c\mathrm{e}^{cr}\sum\limits _{k=1}^{\infty}\xi_{k}\sqrt{\lambda_{k}}\int_{0}^{r}\mathrm{e}^{-sc}\sin\left(\frac{s}{\sqrt{\lambda_{k}}}\right)ds\nonumber \\
 & = & B(r)+\sqrt{2}c\mathrm{e}^{cr}\sum\limits _{k=1}^{\infty}\xi_{k}\frac{\lambda_{k}^{3/2}}{\lambda_{k}c^{2}+1}\left(\frac{1}{\sqrt{\lambda_{k}}}-\mathrm{e}^{-cr}c\sin\left(\frac{r}{\sqrt{\lambda_{k}}}\right)-\frac{\mathrm{e}^{-cr}}{\sqrt{\lambda_{k}}}\cos\left(\frac{r}{\sqrt{\lambda_{k}}}\right)\right)\nonumber \\
 & = & \sqrt{2}\sum\limits _{k=1}^{\infty}\xi_{k}\frac{\sqrt{\lambda_{k}}}{c^{2}\lambda_{k}+1}\sin\left(\frac{r}{\sqrt{\lambda_{k}}}\right)+\sqrt{2}c\sum\limits _{k=1}^{\infty}\xi_{k}\frac{\lambda_{k}}{\lambda_{k}c^{2}+1}\left(\mathrm{e}^{cr}-\cos\left(\frac{r}{\sqrt{\lambda_{k}}}\right)\right)\nonumber \\
 & = & \sum\limits _{k=1}^{\infty}\dfrac{\sqrt{2}c\lambda_{k}\mathrm{e}^{cr}+\sqrt{\lambda_{k}}\varphi_{k}(r)-c\lambda_{k}\psi_{k}(r)}{\lambda_{k}c^{2}+1}\xi_{k}, \label{eq:JCK-detail}
\end{eqnarray}
as in \cite{phillips1998new}. 
This representation is needed in the following proofs.
\begin{proof}[Proof of Proposition \ref{pop:hpltu}]
The series presentation (\ref{eq:JcrKL}) converges almost surely
and uniformly over $r$. It is approximated by the $K_{n}$-term representation 
\[
J_{c}^{K_{n}}(r)=\sum\limits _{k=1}^{K_{n}}\dfrac{\sqrt{2}c\lambda_{k}\mathrm{e}^{cr}+\sqrt{\lambda_{k}}\varphi_{k}(r)-c\lambda_{k}\psi_{k}(r)}{\lambda_{k}c^{2}+1}\xi_{k}
\]
in the sense of $\sup\limits _{0\leq t\leq n}\left|J_{c}(r)-J_{c}^{K_{n}}(r)\right|=o_{a.s.}(1)$
when $K_{n}\to\infty$ as $n\to\infty$, so that in the expanded
space 
$
\sup\limits _{0\leq t\leq n}\left|n^{-1/2}x_{\lfloor nr\rfloor}-J_{c}^{K_{n}}(r)\right|=o_{a.s.}(1)
$
by the uniform convergence (\ref{uniformltu}). The HP estimated trend
is then approximated as 
\begin{equation}
\begin{aligned}\dfrac{\hat{f}_{t}^{\text{HP}}}{n^{1/2}} & =G_{\lambda}\frac{x_{t}}{n^{1/2}}=G_{\lambda}\left[J_{c}^{K_{n}}\left(\frac{t}{n}\right)+o_{a.s.}(1)\right]\\
 & =\sum\limits _{k=1}^{K_{n}}\left[G_{\lambda}\dfrac{\sqrt{2}c\lambda_{k}\mathrm{e}^{ct/n}+\sqrt{\lambda_{k}}\varphi_{k}(t/n)-c\lambda_{k}\psi_{k}(t/n)}{\lambda_{k}c^{2}+1}\right]\xi_{k}+o_{a.s.}(1)\text{.}
\end{aligned}
\label{eq:LURKL}
\end{equation}
In view of Remark \ref{rem:three_HP}, when $K_{n}=\left\lfloor \pi^{-1}\sqrt{\log n}\right\rfloor $
we have
\begin{equation}
\sum\limits _{k=1}^{K_{n}}G_{\lambda}\dfrac{\sqrt{2}c\lambda_{k}\xi_{k}\mathrm{e}^{ct/n}}{\lambda_{k}c^{2}+1}
=\sum\limits _{k=1}^{K_{n}}\dfrac{\sqrt{2}c\lambda_{k}\xi_{k}}{\lambda_{k}c^{2}+1}\left(\dfrac{\mathrm{e}^{ct/n}}{\mu c^{4}+1}+o(1)\right) 
= \frac{\sqrt{2}c\mathrm{e}^{ct/n}}{\mu c^{4}+1}\sum\limits _{k=1}^{K_{n}}\dfrac{\lambda_{k}}{\lambda_{k}c^{2}+1}\xi_{k}+o_{a.s.}(1),\label{eq:LUR1}
\end{equation}
since $\sum\limits _{k=1}^{K_{n}}\dfrac{\lambda_{k}}{\lambda_{k}c^{2}+1}\xi_{k}\sim N\left(0,\omega^{2}\sum\limits _{k=1}^{K_{n}}\dfrac{\lambda_{k}^{2}}{\left(\lambda_{k}c^{2}+1\right)^{2}}\right)$
with variance bounded by 
\begin{equation}
\omega^{2}\sum\limits _{k=1}^{K_{n}}\dfrac{\lambda_{k}^{2}}{\left(\lambda_{k}c^{2}+1\right)^{2}}\leq\omega^{2}\sum\limits _{k=1}^{\infty}\dfrac{\lambda_{k}^{2}}{\left(\lambda_{k}c^{2}+1\right)^{2}}\leq\omega^{2}\sum\limits _{k=1}^{\infty}\lambda_{k}^{2}=\frac{\omega^{2}}{6}.\label{eq:LUR_bound1}
\end{equation}
Similarly, 
\begin{eqnarray}
\sum\limits _{k=1}^{K_{n}}G_{\lambda}\dfrac{\sqrt{\lambda_{k}}\varphi_{k}(t/n)}{\lambda_{k}c^{2}+1} & = & \sum\limits _{k=1}^{K_{n}}\dfrac{\sqrt{\lambda_{k}}\xi_{k}}{\lambda_{k}c^{2}+1}\left(\dfrac{\lambda_{k}^{2}}{\mu+\lambda_{k}^{2}}\varphi_{k}\left(\dfrac{t}{n}\right)+o(1)\right)\nonumber \\
 & = & \sum\limits _{k=1}^{K_{n}}\frac{\lambda_{k}^{2}}{\mu+\lambda_{k}^{2}}\cdot\frac{\sqrt{\lambda_{k}}}{\lambda_{k}c^{2}+1}\varphi_{k}\left(\frac{t}{n}\right)\xi_{k}+o_{a.s.}(1)\label{eq:LUR2}
\end{eqnarray}
and 
\begin{align}
\sum\limits _{k=1}^{K_{n}}G_{\lambda}\dfrac{c\lambda_{k}\xi_{k}\psi_{k}(t/n)}{\lambda_{k}c^{2}+1} & =\sum\limits _{k=1}^{K_{n}}\dfrac{c\lambda_{k}\xi_{k}}{\lambda_{k}c^{2}+1}\left(\dfrac{\lambda_{k}^{2}}{\mu+\lambda_{k}^{2}}\psi_{k}\left(\dfrac{t}{n}\right)+o(1)\right)\nonumber \\
 & =c\sum\limits _{k=1}^{K_{n}}\frac{\lambda_{k}^{2}}{\mu+\lambda_{k}^{2}}\cdot\dfrac{\lambda_{k}}{\lambda_{k}c^{2}+1}\psi_{k}\left(\frac{t}{n}\right)\xi_{k}+o_{a.s.}(1)\label{eq:LUR3}
\end{align}
as $n\to\infty$ by virtue of uniform almost surely convergence. Substituting (\ref{eq:LUR1}),
(\ref{eq:LUR2}), and (\ref{eq:LUR3}) into (\ref{eq:LURKL}) yields
\[
\dfrac{\hat{f}_{t}^{\text{HP}}}{n^{1/2}}=\sum\limits _{k=1}^{K_{n}}\frac{1}{\lambda_{k}c^{2}+1}\left[\dfrac{\sqrt{2}c\lambda_{k}\mathrm{e}^{ct/n}}{\mu c^{4}+1}+\frac{\lambda_{k}^{2}}{\mu+\lambda_{k}^{2}}\left(\sqrt{\lambda_{k}}\varphi_{k}(t/n)-c\lambda_{k}\psi_{k}(t/n)\right)\right]\xi_{k}+o_{a.s.}(1)
\]
uniformly for all $t\leq n$. The limiting expression (\ref{eq:HPLURlimit})
follows as $K_{n}$ passes to infinity as $n\to\infty$. 
\end{proof}

\begin{proof}[Proof of Theorem \ref{thmbhpltu}]
 By virtue of the uniform convergence (\ref{uniformltu}) the estimated
residual in the expanded probability space is 
\begin{align}
\dfrac{\widehat{c}_{t}^{(m)}}{n^{1/2}} & =\left(1-G_{\lambda}\right)^{m}\left[J_{c}^{K_{n,m}}\left(\dfrac{t}{n}\right)+o_{a.s.}(1)\right]\nonumber \\
 & =\sum\limits _{k=1}^{K_{n,m}}\left(1-G_{\lambda}\right)^{m}\dfrac{\sqrt{2}c\lambda_{k}\mathrm{e}^{\frac{ct}{n}}+\sqrt{\lambda_{k}}\varphi_{k}(\frac{t}{n})-c\lambda_{k}\psi_{k}(\frac{t}{n})}{\lambda_{k}c^{2}+1}\xi_{k}+o_{a.s.}(1).\label{eq:mLURKL}
\end{align}
In view of Lemma \ref{lem:m-operator} and Corollary \ref{cor R_sin_cos},
when $K_{n,m}=\left\lfloor \pi^{-1}\sqrt{\log(n\wedge m)}\right\rfloor $
as $m,n\to\infty$, we have 
\begin{align}
\sum\limits _{k=1}^{K_{n,m}}\left(1-G_{\lambda}\right)^{m}\dfrac{\sqrt{2}c\lambda_{k}\xi_{k}\mathrm{e}^{ct/n}}{\lambda_{k}c^{2}+1} & =\sqrt{2}c\sum\limits _{k=1}^{K_{n,m}}\dfrac{\lambda_{k}}{\lambda_{k}c^{2}+1}\xi_{k}\cdot o(1)=o_{a.s.}(1),\label{eq:LURm1}\\
\sum\limits _{k=1}^{K_{n,m}}\left(1-G_{\lambda}\right)^{m}\dfrac{\sqrt{\lambda_{k}}\varphi_{k}(t/n)}{\lambda_{k}c^{2}+1} & =\sum\limits _{k=1}^{K_{n,m}}\dfrac{\sqrt{\lambda_{k}}}{\lambda_{k}c^{2}+1}\xi_{k}\cdot o(1)=o_{a.s.}(1)\text{,}\label{eq:LURm2}\\
\sum\limits _{k=1}^{K_{n,m}}\left(1-G_{\lambda}\right)^{m}\dfrac{c\lambda_{k}\xi_{k}}{\lambda_{k}c^{2}+1}\psi_{k}(\frac{t}{n}) & =c\sum\limits _{k=1}^{K_{n,m}}\dfrac{\lambda_{k}}{\lambda_{k}c^{2}+1}\xi_{k}\cdot o(1)=o_{a.s.}(1), \label{eq:LURm3}
\end{align}
uniformly over $t\leq n$, as in (\ref{eq:LURm2}) the random component
\[
\sum\limits _{k=1}^{K_{n,m}}\dfrac{\sqrt{\lambda_{k}}}{\lambda_{k}c^{2}+1}\xi_{k}\sim N\left(0,\omega^{2}\sum\limits _{k=1}^{K_{n,m}}\dfrac{\lambda_{k}}{\left(\lambda_{k}c^{2}+1\right)^{2}}\right)
\]
has a finite variance 
\[
\omega^{2}\sum\limits _{k=1}^{K_{n,m}}\dfrac{\lambda_{k}}{\left(\lambda_{k}c^{2}+1\right)^{2}}\leq\omega^{2}\sum\limits _{k=1}^{\infty}\dfrac{\lambda_{k}}{\left(\lambda_{k}c^{2}+1\right)^{2}}\leq\omega^{2}\sum\limits _{k=1}^{\infty}\lambda_{k}=\frac{\omega^{2}}{2},
\]
and the orders in (\ref{eq:LURm1}) and (\ref{eq:LURm3}) are controlled
by an argument as in (\ref{eq:LUR_bound1}). We thus conclude
that the leading term in (\ref{eq:mLURKL}) is also $o_{a.s.}(1)$,
that is, $n^{-1/2}\widehat{c}_{t}^{(m)}=o_{a.s.}(1)$
uniformly for all $t\leq n$. It follows that $\sup\limits _{0\leq t\leq n}\left|n^{-1/2}\hat{f}_{t}^{(m)}-J_{c}\text{\ensuremath{\left(t/n\right)}}\right|=o_{a.s.}(1)$
in the expanded probability space and $n^{-1/2}\hat{f}_{\lfloor nr\rfloor}^{(m)}$
weakly converges to $J_{c}(r)$ in the original space. 
\end{proof}

\section{Additional Numerical Results\label{sec:Additional-Numerical-Results}}\label{sec:numerical}

Remarks \ref{rem:Surprise1} and \ref{rem:Surprise2} following Proposition \ref{pop:hpltu} predict
that when $\lambda=\mu n^{4}$ the residual from the HP filter will
retain a near explosive component involving the factor $\mathrm{e}^{cr}$ when $c>0$, which suggests that MSEs should 
be larger at the localizing coefficient $c=3$ than at $c=-3$, ceteris paribus.
This outcome is observed in Table \ref{tab:LUR} when
$n=100$ for quarterly data and when $n=300$ for monthly data, but is unclear in the larger sample sizes because the tuning parameter $\lambda$ is kept to $\lambda=1600$ and $\lambda=129600$ in Table \ref{tab:LUR}. 
In further confirmation of Remarks \ref{rem:Surprise1} and \ref{rem:Surprise2}, Table \ref{tab:Appendix1} reports MSE results for the same DGPs of quarterly data as
in Table \ref{tab:LUR}, but using the tuning parameter $\lambda=1.6\times10^{-5}n^{4}$. For $n=100$, we have $\lambda=1600$ and the results
in this case in Table \ref{tab:Appendix1} are the same as those in the corresponding cells of
Table \ref{tab:LUR}. The same holds for the HRF filter because $\lambda$ is irrelevant
in the autoregression. 

Consider the LUR case of DGP4. The MSE of the HP filter under
$c=3$ is the largest, followed by $c=-3$ which in turn exceed those of $c=0$. These outcomes are fully consistent with theory as the exponential factor $\mathrm{e}^{cr}$ is present in both the near explosive ($c>0$) and near stationary ($c<0$) cases in (\ref{eq:LUR_c}). The HP filter fails to completely catch the exponential factor effects because it removes only polynomial trends up to the third order. In contrast, when $c=0$ the exponential function factor is no longer present in (\ref{eq:LUR_c}), so that the HP filter MSE slightly improves when $c=0$ relative to $c=-3$. The HP MSEs have similar rankings over $c$ for DGPs 5, whereas the MSEs in DGPs 6 are primarily affected by the presence of a structural break. 
The MSEs of the bHP filter show much smaller differences between near explosive, unit root, and near stationary cases. These results provide confirmation of the robustness of bHP's capabilities in trend-cycle determination under LUR generating mechanisms.

\begin{table}[t]
\centering 
\caption{MSE of the Estimated Trends with $\lambda=1.6\times10^{-5}n^4$: LUR}\label{tab:Appendix1}
\small
\begin{tabular}{cc|rrrr|rrrr|rrrr}
\hline\hline
\multirow{2}{*}{DGP} & \multirow{2}{*}{$n$} & \multicolumn{4}{c}{$c=3$} & \multicolumn{4}{c}{$c=0$} & \multicolumn{4}{c}{$c=-3$} \\
    &         & HP            & 2HP     & bHP   & HRF  & HP   & 2HP   & bHP   & HRF  & HP  & 2HP    & bHP    & HRF  \\
                     \hline
\multirow{3}{*}{4} & 100 & 4.11  & 2.12  & 1.51 & 9.86  & 1.77  & 1.51  & 1.37 & 6.42  & 1.80  & 1.54  & 1.40 & 5.73  \\
                   & 200 & 4.25  & 2.93  & 1.79 & 9.53  & 3.29  & 2.71  & 1.75 & 7.76  & 3.29  & 2.73  & 1.76 & 7.23  \\
                   & 300 & 5.64  & 4.13  & 2.08 & 9.47  & 4.79  & 3.93  & 2.01 & 8.25  & 4.83  & 3.97  & 2.03 & 7.91  \\
                   \hline 
\multirow{3}{*}{5} & 100 & 4.58  & 2.23  & 1.55 & 16.09 & 2.26  & 1.62  & 1.37 & 17.87 & 2.28  & 1.65  & 1.41 & 17.73 \\
                   & 200 & 4.71  & 3.04  & 1.83 & 11.33 & 3.75  & 2.82  & 1.78 & 12.11 & 3.76  & 2.83  & 1.80 & 11.88 \\
                   & 300 & 6.11  & 4.24  & 2.12 & 10.29 & 5.25  & 4.03  & 2.07 & 10.80 & 5.28  & 4.07  & 2.08 & 10.72 \\
                   \hline 
\multirow{3}{*}{6} & 100 & 16.94 & 12.06 & 8.37 & 87.73 & 14.62 & 11.44 & 8.04 & 94.59 & 14.60 & 11.44 & 8.08 & 94.25 \\
                   & 200 & 16.12 & 12.11 & 6.38 & 41.51 & 15.12 & 11.88 & 6.27 & 43.47 & 15.10 & 11.84 & 6.28 & 43.17 \\
                   & 300 & 17.08 & 12.97 & 5.82 & 29.20 & 16.31 & 12.84 & 5.73 & 30.43 & 16.26 & 12.80 & 5.75 & 30.23 \\ 
                     \hline\hline
\end{tabular}
\end{table}

Additional simulation experiments are conducted to further assess boosting capabilities to improve performance. In particular, some of the experiments in Section \ref{sec:Simulations} are re-run using the same models but with heavy tailed innovations as the noise components in both the trend and cycle. Tables \ref{tab:I2 t5} and \ref{tab:LUR t5} report the simulation results under the same DGPs for quarterly data in Section \ref{sec:Simulations}, except that $e_t\sim i.i.d.~\sigma_e\sqrt{3/5}\cdot t(5)$ and $v_t\sim i.i.d.~\sqrt{3/5}\cdot t(5)$ to keep the variance the same as those from the Gaussian errors. We find that the MSEs under the $t(5)$ errors are very close to those reported in Tables \ref{tab:I2} and \ref{tab:LUR}, showing the methods' robustness to heavy-tailed distributions.

\begin{table}[t]
\centering 
\caption{MSE of the Estimated Trend with $t(5)$ Errors: $I(2)$}\label{tab:I2 t5}
\small
\begin{tabular}{ccrrrr}
\multicolumn{6}{c}{Quarterly data}                                               \\
\hline\hline 
DGP                & n   & \multicolumn{1}{r}{HP} & \multicolumn{1}{r}{2HP} & \multicolumn{1}{r}{bHP} & \multicolumn{1}{r}{HRF} \\
\hline 
\multirow{3}{*}{1} & 100 & 26.66                  & 15.92                   & 12.81                       & 432.54                    \\
                   & 200 & 26.36                  & 15.66                   & 13.25                       & 718.46                    \\
                   & 300 & 26.21                  & 15.52                   & 13.64                       & 887.19                    \\
                   \hline 
\multirow{3}{*}{2} & 100 & 27.16                  & 16.04                   & 12.84                       & 452.22                    \\
                   & 200 & 26.38                  & 15.67                   & 13.25                       & 722.17                    \\
                   & 300 & 26.21                  & 15.52                   & 13.64                       & 887.59                    \\
                   \hline 
\multirow{3}{*}{3} & 100 & 39.49                  & 25.86                   & 20.57                       & 518.43                    \\
                   & 200 & 31.92                  & 20.13                   & 17.00                       & 761.15                    \\
                   & 300 & 29.86                  & 18.44                   & 16.15                       & 918.96             \\
                   \hline\hline                    
\end{tabular}
\end{table}

\begin{table}[htbp]
\centering 
\caption{MSE of the Estimated Trends with $t(5)$ Errors: LUR}\label{tab:LUR t5}
\small

\begin{tabular}{cc|rrrr|rrrr|rrrr}
\multicolumn{14}{c}{Quarterly data} \\   
\hline\hline
\multirow{2}{*}{DGP} & \multirow{2}{*}{$n$} & \multicolumn{4}{c}{$c=3$} & \multicolumn{4}{c}{$c=0$} & \multicolumn{4}{c}{$c=-3$} \\  
  &     & HP & 2HP    & bHP   & HRF  & HP & 2HP     & bHP   & HRF  & HP & 2HP     & bHP    & HRF  \\
                     \hline
                
\multirow{3}{*}{4} & 100 & 4.16  & 2.14  & 1.52 & 9.84  & 1.77  & 1.51  & 1.37 & 6.42  & 1.79  & 1.53  & 1.39 & 5.71  \\
                   & 200 & 1.80  & 1.50  & 1.43 & 9.52  & 1.79  & 1.51  & 1.45 & 7.79  & 1.81  & 1.53  & 1.47 & 7.27  \\
                   & 300 & 1.79  & 1.50  & 1.49 & 9.48  & 1.80  & 1.51  & 1.51 & 8.24  & 1.81  & 1.53  & 1.52 & 7.91  \\
                     \hline
\multirow{3}{*}{5} & 100 & 4.64  & 2.25  & 1.55 & 16.07 & 2.25  & 1.62  & 1.37 & 17.81 & 2.27  & 1.64  & 1.39 & 17.47 \\
                   & 200 & 1.82  & 1.50  & 1.43 & 11.27 & 1.81  & 1.52  & 1.44 & 12.13 & 1.82  & 1.53  & 1.47 & 11.89 \\
                   & 300 & 1.79  & 1.51  & 1.49 & 10.32 & 1.80  & 1.51  & 1.50 & 10.82 & 1.81  & 1.53  & 1.52 & 10.71 \\
                     \hline
\multirow{3}{*}{6} & 100 & 16.96 & 12.04 & 8.38 & 87.54 & 14.57 & 11.40 & 8.03 & 94.33 & 14.64 & 11.46 & 8.06 & 94.05 \\
                   & 200 & 7.48  & 6.04  & 4.72 & 41.35 & 7.47  & 6.05  & 4.73 & 43.42 & 7.51  & 6.09  & 4.75 & 43.40 \\
                   & 300 & 5.46  & 4.44  & 3.73 & 29.23 & 5.45  & 4.44  & 3.72 & 30.29 & 5.49  & 4.47  & 3.74 & 30.28 \\ 
                    \hline\hline
\end{tabular}
\end{table}

\end{document}